\providecommand{\U}[1]{\protect\rule{.1in}{.1in}}
\providecommand{\U}[1]{\protect\rule{.1in}{.1in}}
\providecommand{\U}[1]{\protect\rule{.1in}{.1in}}
\providecommand{\U}[1]{\protect\rule{.1in}{.1in}}
\providecommand{\U}[1]{\protect\rule{.1in}{.1in}}
\providecommand{\U}[1]{\protect\rule{.1in}{.1in}}
\providecommand{\U}[1]{\protect\rule{.1in}{.1in}}
\providecommand{\U}[1]{\protect\rule{.1in}{.1in}}
\providecommand{\U}[1]{\protect\rule{.1in}{.1in}}
\numberwithin{equation}{section}
\newtheorem{theorem}{Theorem}[section]
\newtheorem{corollary}[theorem]{Corollary}
\newtheorem{proposition}[theorem]{Proposition}
\newtheorem{remark}[theorem]{Remark}
\def\p2{\mathcal A_{\Phi,2\pi}(B)}
\def\0p2{\mathcal A_{\Phi,2\pi}(0)}
\def\sp2{\mathcal A_{\Phi,2\pi,\hbox{\rm SR}}(B)}
\def\beq{\begin{equation}}
\def\ene{\end{equation}}
\def\qed{\ifhmode\unskip\nobreak\fi\ifmmode\ifinner
\else\hskip5pt\fi\fi\hbox{\hskip5pt\vrule width4pt height6pt
depth1.5pt\hskip1pt}}
\def\+out{x^{\rm out}}
\begin{document}

\title{ $L^{p}-L^{p^{\prime}}$ estimates for matrix Schr\"{o}dinger equations\thanks{
2010 AMS Subject Classifications: 34L10; 34L25; 34L40; 47A40 ; 81U99.}
\thanks{ Research partially supported by projects PAPIIT-DGAPA UNAM IN103918, and IA101820, 
and SEP-CONACYT CB 2015, 254062.}}
\author{Ivan Naumkin\thanks{Fellow, Sistema Nacional de Investigadores. Electronic
Mail: ivannaumkinkaikin@gmail.com} and Ricardo Weder\thanks{Fellow, Sistema
Nacional de Investigadores. Electronic mail: weder@unam.mx }\\Departamento de F\'{\i}sica Matem\'{a}tica,\\Instituto de Investigaciones en Matem\'{a}ticas Aplicadas y en Sistemas.\\Universidad Nacional Aut\'{o}noma de M\'{e}xico,\\Apartado Postal 20-126, Ciudad de M\'{e}xico, 01000, M\'{e}xico.}
\date{}
\maketitle

\begin{abstract}
This paper is devoted to the study of dispersive estimates for matrix
Schr\"odinger equations on the half-line with general boundary condition, and
on the line. We prove $L^{p}-L^{p^{\prime}}$ estimates on the half-line for
slowly decaying selfadjoint matrix potentials that satisfy $\int_{0}^{\infty
}\, (1+x) |V(x)|\, dx < \infty$ both in the generic and in the exceptional
cases. We obtain our $L^{p}-L^{p^{\prime}}$ estimate on the line for a $n
\times n$ system, under the condition that $\int_{-^{\infty}}^{\infty}\,
(1+|x|)\, |V(x)|\, dx < \infty,$ from the $L^{p}-L^{p^{\prime}}$ estimate for
a $2n\times2n$ system on the half-line. With our $L^{p}-L^{p^{\prime}}$
estimates we prove Strichartz estimates.

\end{abstract}

\baselineskip=12pt

\bigskip\textbf{Keywords: }matrix Schr\"{o}dinger equation; general-boundary
conditions; dispersive estimates; scattering theory; Jost solutions methods.

\section{Introduction.}

In this paper we consider the matrix Schr\"{o}dinger equation on the half-line
with general selfadjoint boundary condition%
\begin{equation}
\left\{
\begin{array}
[c]{c}%
i\partial_{t}u\left(  t,x\right)  =\left(  -\partial_{x}^{2}+V\left(
x\right)  \right)  u\left(  t,x\right)  ,\ t\in\mathbb{R},\,x\in\mathbb{R}%
^{+},\\
u\left(  0,x\right)  =u_{0}\left(  x\right)  ,\text{ }x\in\mathbb{R}^{+},\\
\label{MS}
-B^{\dagger}u\left(  t,0\right)  +A^{\dagger}\left(  \partial_{x}u\right)
\left(  t,0\right)  =0,
\end{array} \right.
\end{equation}
where $\mathbb{R}^{+}:=(0,+\infty),$ $u(t,x)$ is a function from
$\mathbb{R}\times\mathbb{R}^{+}$ into $\mathbb{C}^{n},A,B$ are constant
$n\times n$ matrices, the potential $V$ is a $n\times n$ selfadjoint
matrix-valued function of $x$, i.e.%
\begin{equation}
V\left(  x\right)  =V^{\dagger}\left(  x\right)  ,\text{ }x\in\mathbb{R}^{+},
\label{PotentialHermitian}%
\end{equation}
where the dagger denotes the matrix adjoint. We suppose that $V$ is in the
Faddeev class $L_{1}^{1},$ i.e. that it is a Lebesgue measurable matrix-valued
function and,%

\begin{equation}
\int_{\mathbb R^+}\,(1+x)\,|V(x)|\,dx<\infty, \label{PotentialL11}%
\end{equation}
where by $|V|$ we denote the matrix norm of $V.$ The more general selfadjoint
boundary condition at $x=0$ can be expressed in several equivalent ways
\cite{[5]}, \cite{[6]}, \cite{[9]}, \cite{Harmer1}, \cite{Harmer},
\cite{Harmer2}, \cite{Kostrykin1}, and  \cite{Kostrykin2}. See also  \cite{rofe} for further results on general selfadjoint boundary conditions. We find it convenient to
state the boundary condition following \cite{[5]}, \cite{[6]}, and \cite{[9]}, as
in (\ref{MS}) where the matrices $A$ and $B$ satisfy (see Section
\ref{Selfadjoint realization}),
\begin{equation}
-B^{\dagger}A+A^{\dagger}B=0, \label{wcon1}%
\end{equation}
and%
\begin{equation}
A^{\dagger}A+B^{\dagger}B>0. \label{wcon2}%
\end{equation}
We denote by
\[
H:=H_{A,B,V},
\]
the selfadjoint operator in $L^{2}\left(  \mathbb{R}^{+}\right)  $ associated
to the initial-boundary problem (\ref{MS}).

Currently  there is a considerable interest in  matrix Schr\"{o}dinger
equations. In part, this is because of the importance of these equations for
quantum mechanical scattering of particles with internal structure,  and also since a quantum star graph is a particular case of a matrix Schr\"odinger equation with a diagonal potential matrix.  There is an extensive literature in quantum graphs. See, for example, \cite{Berkolaio}-\cite{Boman},
\cite{Gutkin}, \cite{Kostrykin1}-\cite{Kurasov2}, and the references therein. The matrix Schr\"{o}dinger equation with a diagonal potential matrix corresponds to  a star graph which describes the behavior of $n$ connected very thin quantum wires that form a
star graph, that is, a graph with only one vertex and a finite number of edges
of infinite length. The boundary condition in (\ref{MS}) restrict the value of
the wave function and of its derivative at the vertex. The problem is relevant
from the physical point of view. For instance, it appears in the design of
elementary gates in quantum computing and nanotubes for microscopic electronic
devices, where, for example, strings of atoms may form a star-shaped graph.
The consideration of the most general boundary condition at the vertex and not
only, for say, Dirichlet boundary condition is also physically motivated: for
quantum graphs the relevant boundary conditions are the ones that link the
values, at the different edges, of the wave function and of the first
derivative. An important example is the Kirchoff boundary condition. In fact,
a quantum graph is an idealization of wires with a small cross section that
meet at vertices. It is obtained as the limit when the cross section goes to
zero. The boundary conditions on the graph's vertices depends on how the limit
is taken. In principle, all the boundary conditions in (\ref{MS}) can appear
in this limit procedure. This motivates the study of the more general
selfadjoint boundary condition.

The purpose of this paper is to obtain $L^{p}-L^{p^{\prime}}$ estimates for
the initial- boundary problem (\ref{MS}).

\subsection*{Notation.}

We denote by $L^{p}(U;\mathbb{C}^{n})$, for $1\leq p\leq\infty,$ and
$U={\mathbb{R}}^{+}$ or $U={\mathbb{R}}$, ($\mathbb{R}^{+}$ $=(0,\infty)$),
the standard Lebesgue spaces of $\mathbb{C}^{n}$ valued functions, where
$\mathbb{C}$ denotes the complex numbers. For an integer $m\geq0$ and a real
$1\leq p\leq\infty,$ $W_{m,p}\left(  U;\mathbb{C}^{n}\right)  $ is the
standard Sobolev space. (See e.g.~\cite{AdamsF} for the definitions and
properties of these spaces.) If there is no place for confusion, we shall omit
$\mathbb{C}^{n}$ or both $U\mathbb{\ }$and$\ \mathbb{C}^{n}$ in writing the
above spaces. $W_{m,p}^{\left(  0\right)  }\left(  U\right)  $ is the closure
of $C_{0}^{\infty}\left(  U\right)  $ in the space $W_{m,p}\left(  U\right)
$. We denote the Fourier transform by,
\[
\mathcal{F}f:=\int_{\mathbb{R}}\,e^{ikx}\,f(x)\,dx,
\]
and the inverse Fourier transform by,
\[
\mathcal{F}^{-1}f:=\frac{1}{2\pi}\,\int_{\mathbb{R}}\,e^{-ikx}\,f(k)\,dk.
\]
We designate,
\[
\mathcal{F}\left(  L^{1}(\mathbb{R})\right)  :=\left\{  f\in L^{\infty
}(\mathbb{R}):f=\mathcal{F}g,\text{ }g\in L^{1}(\mathbb{R})\right\}  .
\]
By $\mathbb{C}^{+}$ we denote the open upper-half complex plane. For any pair
of Banach spaces $X,Y$ we denote by $\mathcal{B}(X,Y)$ the Banach space of all
bounded operators from $X$ into $Y.$ When $X=Y$ we use the notation
$\mathcal{B}(X).$ For any operator $G$ in a Banach space $X$ we denote by
$D[G]$ the domain of $G.$ For a bounded below selfadjoint operator, $G,$ the
quadratic form domain of $G$ is the domain of its associated quadratic form
\cite{kato}. By $0_{m}$ and $I_{m},$ $m=1,2,\cdots,$ we designate the $m\times
m$ zero and identity matrices, respectively. Finally, we shall denote by $C$ a
generic positive constant, which does not has to take the same value when it
appears in different places.

\subsection{Main results.}

In order to present our results, let us first define the function spaces we
will work with. We can diagonalize the boundary condition in (\ref{MS}) to get
$n$ equations $\cos\theta_{j}\psi_{j}\left(  0\right)  +\sin\theta_{j}\psi
_{j}^{\prime}\left(  0\right)  =0,0<\theta_{j}\leq\pi,$ $j=1,2,...,n,$ (see
(\ref{PSM13}) below). Let us define the space $\widehat{W}_{j}^{p}\left(
\mathbb{R}^{+}\right)  ,$ for $1\leq p\leq\infty,$ which is the Sobolev space
$W_{1,p}^{\left(  0\right)  }\left(  \mathbb{R}^{+}\right)  ,$ in the case of
Dirichlet boundary condition, $\theta_{j}=\pi,$ and $\widehat{W}_{j}%
^{p}\left(  \mathbb{R}^{+}\right)  =W_{1,p}\left(  \mathbb{R}^{+}\right)  ,$
in the case of Neumann, $\theta_{j}=\pi/2,$ or mixed, $\theta_{j}\neq
\pi,\theta_{j}\neq\pi/2,$ boundary conditions (see (\ref{spaceW1j}) below for
the precise definition). We consider $\widetilde{W}_{1,p}\left(
\mathbb{R}^{+}\right)  =\oplus_{j=1}^{n}\widehat{W}_{j}^{p}\left(
\mathbb{R}^{+}\right)  .$ Then, the quadratic form domain of the Hamiltonian
$H:=H_{A,B}$ that corresponds to the general boundary condition in (\ref{MS})
is given by $W_{1,2}^{A,B}(\mathbb{R}^{+})$ with
\[
W_{1,2}^{A,B}\left(  \mathbb{R}^{+}\right)  =M\widetilde{W}_{1,2}\left(
\mathbb{R}^{+}\right)  \subset W_{1,2}\left(  \mathbb{R}^{+}\right)  ,
\]
where $M$ is a unitary matrix (see (\ref{W1AB}) below). Let
$P_{\operatorname*{c}}$ denote the projector onto the continuous subspace of
$H.$ We observe that $P_{\operatorname*{c}}=I-P_{\operatorname*{p}},$ where
$P_{\operatorname*{p}}$ is the projector onto the subspace of $L^{2}$
generated by the eigenvectors corresponding to the bound states of $H.$ We
also note that under our assumptions on $V,$ the number of negative bound
states of $H$ is finite and that $H$ has no positive or zero bound states.
Hence, the subspace generated by the eigenvectors is finite-dimensional. We
now present our results.

\begin{theorem}
[The $L^{p}-L^{p^{\prime}}$ estimate]\label{Theorem1} Suppose that the
potential $V$ satisfies (\ref{PotentialHermitian}) and (\ref{PotentialL11}).
Then, for any $p\in\lbrack1,2]$ and $p^{\prime}$ such that $1/p+1/p^{\prime
}=1,$ the estimates%
\begin{equation}
\left\Vert e^{-itH}P_{\operatorname*{c}}\right\Vert _{\mathcal{B}\left(
L^{p}(\mathbb{R}^{+}),L^{p^{\prime}}(\mathbb{R}^{+})\right)  } \leq\frac
{C}{\left\vert t\right\vert ^{1/p-1/2}}, \label{estimate1}%
\end{equation}
and%
\begin{equation}
\left\Vert e^{-itH}P_{\operatorname*{c}}\right\Vert _{\mathcal{B}\left(
W_{1,p}^{A,B}(\mathbb{R}^{+}),W_{1,p^{\prime}}^{A,B}(\mathbb{R}^{+})\right)
}\leq\frac{C}{\left\vert t\right\vert ^{1/p-1/2}}, \label{estimate2}%
\end{equation}
hold for all $t\in\mathbb{R\setminus}\{0\}.$
\end{theorem}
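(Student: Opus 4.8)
The plan is to deduce both estimates from a single $L^{1}-L^{\infty}$ dispersive bound, namely $\|e^{-itH}P_{\operatorname*{c}}\|_{\mathcal{B}(L^{1}(\mathbb{R}^{+}),L^{\infty}(\mathbb{R}^{+}))}\le C|t|^{-1/2}$, together with the trivial unitarity bound $\|e^{-itH}P_{\operatorname*{c}}\|_{\mathcal{B}(L^{2}(\mathbb{R}^{+}))}\le1$. Riesz--Thorin interpolation between these two endpoints then yields (\ref{estimate1}) for every $p\in[1,2]$ with exactly the power $1/p-1/2$. To get (\ref{estimate2}) I would prove the analogous $L^{1}-L^{\infty}$ bound for the kernel of $e^{-itH}P_{\operatorname*{c}}$ differentiated once in $x$ and in $y$, using that on $M\widetilde{W}_{1,p}(\mathbb{R}^{+})$ the $W_{1,p}^{A,B}(\mathbb{R}^{+})$-norm is equivalent to $\|\cdot\|_{L^{p}}+\|\partial_{x}\cdot\|_{L^{p}}$; equivalently, one can use the $L^{p}$- and $W_{1,p}^{A,B}$-boundedness of the wave operators $W_{\pm}=W_{\pm}(H,H_{0})$ together with the intertwining identity $e^{-itH}P_{\operatorname*{c}}=W_{\pm}\,e^{-itH_{0}}\,W_{\pm}^{*}$, which reduces the matter to the free evolution $e^{-itH_{0}}$ (which commutes with $\partial_{x}$ on the reflected problems below). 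In all cases the core is the dispersive estimate itself, which I would establish by the Jost-solution method.

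First I would diagonalize the boundary condition: by (\ref{wcon1})--(\ref{wcon2}) there is a unitary matrix $M$ reducing the unperturbed operator $H_{0}=H_{A,B,0}$ to a direct sum of $n$ scalar half-line operators $-\partial_{x}^{2}$ with Dirichlet ($\theta_{j}=\pi$), Neumann ($\theta_{j}=\pi/2$), or mixed ($\theta_{j}\neq\pi,\pi/2$) boundary conditions. For the Dirichlet and Neumann summands, odd resp.\ even reflection identifies $e^{-itH_{0}}$ with the free Schr\"odinger evolution on $\mathbb{R}$, whose kernel $(4\pi it)^{-1/2}e^{i|x-y|^{2}/(4t)}$ gives the $|t|^{-1/2}$ bound at once; for the mixed summands the explicit free Jost solution carries, besides $e^{ikx}$, an additional exponentially decaying term, and the resulting propagator kernel equals the free one plus a term directly seen to be $O(|t|^{-1/2})$. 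This settles the free case under the general selfadjoint boundary condition, and, differentiating, its Sobolev-space version.

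For the perturbed operator I would use the stationary representation of the continuous part,
\[
\bigl(e^{-itH}P_{\operatorname*{c}}f\bigr)(x)=\frac{1}{2\pi}\int_{\mathbb{R}}e^{-itk^{2}}\,\Psi(x,k)\Bigl(\int_{0}^{\infty}\Psi(y,k)^{\dagger}f(y)\,dy\Bigr)dk,
\]
where $\Psi(x,k)$ is the matrix of generalized eigenfunctions built from the Jost solution $f(x,k)$ of (\ref{MS}), which under (\ref{PotentialL11}) exists and satisfies the usual Faddeev-class bounds. I would split the $k$-integral with a cutoff $\chi$ supported near $k=0$. In the high-energy region one has $f(x,k)=e^{ikx}(I_{n}+O(1/|k|))$ with the error, as a function of $k$, lying in $\mathcal{F}(L^{1}(\mathbb{R}))$ uniformly in $x$; combining this with the free bound and with Van der Corput's lemma applied to oscillatory integrals $\int e^{-itk^{2}}\varphi(k)\,dk$ with $\varphi$ smooth and decaying yields the $|t|^{-1/2}$ bound for this part. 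In the low-energy region one exploits the behaviour as $k\to0$ of the Jost matrix $J(k)$ and of the scattering matrices. In the generic case $J(0)$ is invertible, the integrand is $C^{1}$ with compact $k$-support, and Van der Corput again gives $O(|t|^{-1/2})$. In the exceptional case $J(k)$ is only partially invertible at $k=0$ (a zero-energy resonance, which by hypothesis is not an eigenvalue, hence is not removed by $P_{\operatorname*{c}}$); I would decompose $J(k)^{-1}$, via its known expansion near $k=0$ adapted to the orthogonal projection onto $\ker J(0)$, into a regular part handled as above and a finite-rank singular part whose $k$-integral still produces $O(|t|^{-1/2})$ because the residual singularity is integrable against the oscillation $e^{-itk^{2}}$.

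The step I expect to be the main obstacle is exactly this low-energy analysis in the exceptional case for a matrix potential under a general selfadjoint boundary condition: the zero-energy resonance is then ``partial'' ($\ker J(0)$ may have any dimension between $1$ and $n$), the expansion of $J(k)^{-1}$ near $k=0$ is substantially more delicate than in the scalar or purely Dirichlet settings, and one must extract the full $|t|^{-1/2}$ decay uniformly while keeping every $k$-integral absolutely convergent. A secondary difficulty is the derivative bound required for (\ref{estimate2}), that is, uniform-in-$k$ control of $\partial_{x}\Psi(x,k)$ near $k=0$ --- equivalently, the $W_{1,p}^{A,B}$-boundedness of the wave operators --- which is again handled within the Jost-function framework using the splitting $W_{1,p}^{A,B}(\mathbb{R}^{+})=M\widetilde{W}_{1,p}(\mathbb{R}^{+})$.
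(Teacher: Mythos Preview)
Your overall architecture---prove the endpoint $L^{1}\to L^{\infty}$ bound, combine with unitarity on $L^{2}$, then Riesz--Thorin---is exactly what the paper does. The substantive divergence is in how the $L^{1}\to L^{\infty}$ bound is obtained, and here your plan differs from the paper and has a gap at precisely the point you flag as the main obstacle.

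The paper does \emph{not} split the $k$-integral into high and low energy and does not use Van der Corput. Instead it isolates the entire low-energy difficulty into a single lemma: with $S_{\infty}:=\lim_{|k|\to\infty}S(k)$ and $F_{s}:=\mathcal{F}^{-1}(S(k)-S_{\infty})$, it proves $F_{s}\in L^{1}(\mathbb{R})$ (Theorem~\ref{FourierL1}). This is done by a Wiener--L\'evy argument: one introduces the projection $P_{0}$ onto (essentially) $\ker J(0)^{\dagger}$, the regularized matrix $D(k)=(I-P_{0}+\tfrac{1}{ik}P_{0})J(k)^{\dagger}$, shows $\det D(k)\neq0$ for all $k\in\mathbb{R}$ using the sharp small-$k$ expansion of $J(k)$, and then applies Wiener's theorem to invert $D(k)$ within $\mathcal{F}(L^{1})$. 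Once $F_{s}\in L^{1}$ is known, the kernel $\mathcal{T}(x,y)$ of $e^{-itH}P_{\operatorname*{c}}$ is written, via the Jost representation $f(k,x)=e^{ikx}+\int K(x,z)e^{ikz}\,dz$ and the convolution identity $\int e^{-itk^{2}}e^{-ikz}\,dk=\sqrt{\pi/(it)}\,e^{iz^{2}/4t}$, as a finite sum of explicit integrals against $L^{1}$ functions ($K$ and $F_{s}$), each manifestly $O(|t|^{-1/2})$. The generic and exceptional cases are thereby handled \emph{simultaneously}; no separate case analysis enters the dispersive estimate itself.

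Your low-energy step, by contrast, is where the argument breaks. You propose to expand $J(k)^{-1}$ near $k=0$, split off a ``finite-rank singular part'', and claim its contribution is $O(|t|^{-1/2})$ ``because the residual singularity is integrable against the oscillation $e^{-itk^{2}}$''. This reasoning is not correct as stated: the phase $k\mapsto k^{2}$ has a degenerate stationary point at $k=0$, so oscillation gives no help against a genuine singularity of the amplitude there. What is actually needed is that the amplitude---built from $S(k)$, which is bounded (unitary) but a priori only continuous at $k=0$ in the exceptional case---has enough structure to be the Fourier transform of an $L^{1}$ function. That is precisely the content of Theorem~\ref{FourierL1}, and it is not recovered by a bare small-$k$ expansion of $J(k)^{-1}$ plus Van der Corput. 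Your alternative route via $L^{p}$-boundedness of the wave operators $W_{\pm}$ would also require, under only $V\in L^{1}_{1}$ and general boundary conditions, a proof of that boundedness---which is itself at least as hard as the dispersive estimate and is not supplied in your sketch.

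For the Sobolev estimate (\ref{estimate2}) the paper again proceeds directly: it differentiates each term $I_{j}$ of $\mathcal{T}(x,y)$ in $x$, integrates by parts (using $\partial_{x}e^{i(x\pm z)^{2}/4t}=\pm\partial_{z}e^{i(x\pm z)^{2}/4t}$) to land the derivative on $K$ or transfer it to $\partial_{y}$, and invokes the bounds on $K$, $\partial K$ and $F_{s}$. The $L^{2}$ Sobolev endpoint comes from the norm equivalence (\ref{equiv2}) between $\|\cdot\|_{W_{1,2}^{A,B}}$ and $\|\sqrt{H+L}\cdot\|_{L^{2}}$, after which interpolation finishes. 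No wave operators are used.
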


\begin{theorem}
[Strichartz estimates]\label{Theorem2}Suppose that the potential $V$ satisfies
(\ref{PotentialHermitian}) and (\ref{PotentialL11}). Let $\left(  q,r\right)
$ be an admissible pair, that is, $2/q=1/2-1/r$ and $2\leq r\leq\infty.$ Then,
for every $\varphi\in L^{2}\left(  \mathbb{R}^{+}\right)  ,$ the function
$t\rightarrow e^{-itH}P_{\operatorname*{c}}\varphi$ belongs to $L^{q}\left(
\mathbb{R}\text{,}L^{r}\left(  \mathbb{R}^{+}\right)  \right)  \cap C\left(
\mathbb{R}\text{,}L^{2}\left(  \mathbb{R}^{+}\right)  \right)  .$ Moreover,
there exists a constant $C>0$ such that
\begin{equation}
\left\Vert e^{-itH}P_{\operatorname*{c}}\varphi\right\Vert _{L^{q}\left(
\mathbb{R}\text{,}L^{r}\left(  \mathbb{R}^{+}\right)  \right)  }\leq
C\left\Vert \varphi\right\Vert _{L^{2}\left(  \mathbb{R}^{+}\right)  },
\label{Strichartz estimates}%
\end{equation}
for every $\varphi\in L^{2}\left(  \mathbb{R}^{+} \right)  .$ Moreover, let $I
\subset\mathbb{R}$ be an interval. For an admissible pair $\left(  \gamma
,\rho\right)  ,$ let $f\in L^{\gamma^{\prime}}\left(  I\text{,}L^{\rho
^{\prime}}\left(  \mathbb{R}^{+}\right)  \right)  ,$ where $1/\gamma
+1/\gamma^{\prime}=1$ and $1/\rho+1/\rho^{\prime}=1.$ Then, for $t_{0}\in
\bar{I},$ the function
\[
t\rightarrow\Phi_{f}\left(  t\right)  =\int_{t_{0}}^{t}e^{-i(t-s) H}%
P_{\operatorname*{c}}f\left(  s\right)  ds,\text{ }t\in I,
\]
belongs to $L^{q}\left(  I\text{,}L^{r}\left(  \mathbb{R}^{+}\right)  \right)
\cap C\left(  \bar{I}\text{,}L^{2}\left(  \mathbb{R}^{+}\right)  \right)  $
and
\[
\left\Vert \Phi_{f}\right\Vert _{L^{q}\left(  I,L^{r}\left(  \mathbb{R}%
^{+}\right)  \right)  }\leq C\left\Vert f\right\Vert _{L^{\gamma^{\prime}%
}\left(  I,L^{\rho^{\prime}}\left(  \mathbb{R}^{+}\right)  \right)  },\text{
for every }f\in L^{\gamma^{\prime}}\left(  I,L^{\rho^{\prime}}\left(
\mathbb{R}^{+}\right)  \right)  ,
\]
where the constant $C$ is independent of $I.$
\end{theorem}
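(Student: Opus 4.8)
The plan is to obtain Theorem \ref{Theorem2} from Theorem \ref{Theorem1} by the standard $TT^{\ast}$ argument of Keel and Tao. Two facts are needed. First, since $H$ is selfadjoint and $P_{\operatorname*{c}}$ is an orthogonal projection that commutes with $H$, the operators $U(t):=e^{-itH}P_{\operatorname*{c}}$ satisfy $\left\Vert U(t)\varphi\right\Vert _{L^{2}(\mathbb{R}^{+})}\leq\left\Vert \varphi\right\Vert _{L^{2}(\mathbb{R}^{+})}$ for every $t$, and $t\mapsto U(t)\varphi$ is continuous into $L^{2}(\mathbb{R}^{+})$ by Stone's theorem; moreover $U(t)U(s)^{\ast}=e^{-i(t-s)H}P_{\operatorname*{c}}$ because $P_{\operatorname*{c}}^{\ast}=P_{\operatorname*{c}}$ commutes with the evolution. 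Second, choosing $p=1$ and $p^{\prime}=\infty$ in \eqref{estimate1} gives the dispersive bound
\[
\left\Vert U(t)U(s)^{\ast}g\right\Vert _{L^{\infty}(\mathbb{R}^{+})}\leq\frac{C}{\left\vert t-s\right\vert ^{1/2}}\left\Vert g\right\Vert _{L^{1}(\mathbb{R}^{+})},\qquad t\neq s .
\]
Thus $U(t)$ verifies the hypotheses of the abstract Strichartz theorem with dispersion exponent $\sigma=1/2$.

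Next I would run the $TT^{\ast}$ machinery. Interpolating the dispersive bound with the $L^{2}$ bound via Riesz--Thorin gives $\left\Vert U(t)U(s)^{\ast}\right\Vert _{\mathcal{B}(L^{r^{\prime}},L^{r})}\leq C\left\vert t-s\right\vert ^{-2/q}$ for any admissible $(q,r)$; feeding this into the Hardy--Littlewood--Sobolev inequality in the time variable bounds $f\mapsto\int_{\mathbb{R}}U(t)U(s)^{\ast}f(s)\,ds$ from $L^{q^{\prime}}(\mathbb{R},L^{r^{\prime}}(\mathbb{R}^{+}))$ to $L^{q}(\mathbb{R},L^{r}(\mathbb{R}^{+}))$, and a duality argument then yields \eqref{Strichartz estimates}, with $t\mapsto e^{-itH}P_{\operatorname*{c}}\varphi\in C(\mathbb{R},L^{2}(\mathbb{R}^{+}))$ coming from the first paragraph. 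For the inhomogeneous estimate one first proves the same bound with the untruncated integral $\int_{\mathbb{R}}$ in place of $\int_{t_{0}}^{t}$, combining the two mixed pairs $(q,r)$ and $(\gamma,\rho)$ through the same Hardy--Littlewood--Sobolev step, and then passes to the retarded operator $f\mapsto\int_{t_{0}}^{t}e^{-isH}P_{\operatorname*{c}}f(s)\,ds$ by the Christ--Kiselev lemma. This is legitimate here because in one dimension every admissible pair has $q\geq4$ and $\gamma\geq4$, so $\gamma^{\prime}<q$ and the forbidden Keel--Tao endpoint $q=2$ never occurs; the independence of the constant from $I$ is automatic since one may prove the estimate on $\mathbb{R}$ and restrict. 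Finally $\Phi_{f}\in C(\bar I,L^{2}(\mathbb{R}^{+}))$ follows from $\left\Vert \Phi_{f}(t)\right\Vert _{L^{2}}\leq C\left\Vert f\right\Vert _{L^{1}(I,L^{2}(\mathbb{R}^{+}))}$, a consequence of the unitarity of $e^{-itH}$ and Minkowski's inequality, together with dominated convergence.

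There is no essential obstacle: the statement is a by-now-routine consequence of the $|t|^{-1/2}$ dispersive decay already established in Theorem \ref{Theorem1}. The only two points that require a line of care are checking that $U(t)U(s)^{\ast}=e^{-i(t-s)H}P_{\operatorname*{c}}$, so that the Keel--Tao hypotheses hold verbatim, and invoking the Christ--Kiselev lemma for the retarded inhomogeneous term, which is available precisely because the one-dimensional admissible exponents stay strictly away from the endpoint.
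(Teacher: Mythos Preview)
Your proposal is correct and matches the paper's approach: the paper's proof is a single sentence deferring to Theorem 2.3.3 of Cazenave, which is precisely the standard $TT^{\ast}$ argument you spell out, with the dispersive input being \eqref{estimate1} from Theorem~\ref{Theorem1}. Your write-up is simply a more explicit version of what the paper leaves to the reference.
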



\subsubsection{The matrix Schr\"{o}dinger equation on the 
full-line.
\label{Schrodinger on the line}}

Following \cite{WederBook} we show that a $2n\times2n$ matrix Schr\"{o}dinger
equation on the half-line is unitarily equivalent to a $n\times n$ matrix
Schr\"{o}dinger equation on the full-line with a point interaction at $x=0.$
We define the unitary operator $\mathbf{U}$ from $L^{2}\left(  \mathbb{R}%
^{+};\mathbb{C}^{2n}\right)  $ onto $L^{2}\left(  \mathbb{R};\mathbb{C}%
^{n}\right)  $ by%
\begin{equation}
\phi\left(  x\right)  =\mathbf{U }\psi\left(  x\right)  :=\left\{
\begin{array}
[c]{c}%
\psi_{1}\left(  x\right)  ,\text{ \ }x\geq0,\\
\psi_{2}\left(  -x\right)  ,\text{ \ }x<0,
\end{array}
\right.  \label{unitarytransform}%
\end{equation}
for a vector-valued function $\psi=\left(  \psi_{1},\psi_{2}\right)  ^{T},$
($T$ denotes the matrix transpose) where $\psi_{j}\in L^{2}\left(
\mathbb{R}^{+};\mathbb{C}^{n}\right)  ,$ $j=1,2.$ Let the potential in
(\ref{MS}) be the diagonal matrix
\[
V\left(  x\right)  :=\operatorname*{diag}\{V_{1}\left(  x\right)
,V_{2}\left(  x\right)  \},
\]
where $V_{j}, j=1,2$ are selfadjoint $n \times n$ matrix-valued functions that
satisfy $V_{j} \in L^{1}_{1}(\mathbb{R}^{+}).$ Under $\mathbf{U}$ the
Hamiltonian $H$ is transformed into the following Hamiltonian in the full-line,%

\begin{equation}
H_{\mathbb{R}}:=\mathbf{U}\,H\mathbf{U}^{\dagger},\quad D[H_{\mathbb{R}%
}]:=\{\phi\in L^{2}\left(  \mathbb{R};\mathbb{C}^{n}\right)  :\mathbf{U}%
^{\dagger}\phi\in D[H]\}. \label{hfull}%
\end{equation}
The operator $H_{\mathbb{R}}$ is a selfadjoint realization in $L^{2}\left(
\mathbb{R};\mathbb{C}^{n}\right)  $ of the formal differential operator
$-\partial_{x}^{2}+Q(x)$ where,
\[
Q\left(  x\right)  =\left\{
\begin{array}
[c]{c}%
V_{1}\left(  x\right)  ,\text{ \ }x\geq0,\\
V_{2}\left(  -x\right)  ,\text{ \ }x<0.
\end{array}
\right.
\]
Further, the quadratic form domain of $H_{\mathbb{R}}$ is given by
$W_{1,2}^{\mathbb{R},A,B}$ where,
\[
W_{1,2}^{\mathbb{R},A,B}:=\mathbf{U}W_{1,2}^{A,B}\subset W_{1,2}%
(-\infty,0)\oplus W_{1,2}(0,\infty).
\]
Let us write the $2n\times2n$ matrices $A,B$ as follows,
\begin{equation}
A=\left[
\begin{array}
[c]{l}%
A_{1}\\
A_{2}%
\end{array}
\right]  ,\quad B=  \left[
\begin{array}
[c]{l}%
B_{1}\\
B_{2}%
\end{array}
\right]  , \label{matrices}%
\end{equation}
with $A_{j},B_{j},j=1,2,$ being $n\times2n$ matrices. We have that the
functions in the domain of $H_{\mathbb{R}}$ satisfy the following transmission
condition at $x=0,$%

\begin{equation}
-B_{1}^{\dagger}\phi(0+)-B_{2}^{\dagger}\phi(0-)+A_{1}^{\dagger}(\partial
_{x}\phi)(0+)-A_{2}^{\dagger}(\partial_{x}\phi)(0-)=0. \label{bdcond}%
\end{equation}
Then, $u(t,x)$ is a solution of the problem (\ref{MS}) if and only if
$v(t,x):=\mathbf{U}u(t,x)$ is a solution of the following $n\times n$ system
in the full-line,
\begin{equation}
\left\{
\begin{array}
[c]{c}%
i\partial_{t}v\left(  t,x\right)  =\left(  -\partial_{x}^{2}+Q\left(
x\right)  \right)  v\left(  t,x\right)  ,\ t\in\mathbb{R},\,x\in\mathbb{R},\\
v\left(  0,x\right)  =v_{0}\left(  x\right)  :=\mathbf{U}u_{0}\left(
x\right)  ,x\in\mathbb{R},\\
-B_{1}^{\dagger}v(t,0+)-B_{2}^{\dagger}v(t,0-)+A_{1}^{\dagger}(\partial
_{x}v)(t,0+)-A_{2}^{\dagger}(\partial_{x}v)(t,0-)=0.
\end{array}
\right.  \label{MSR}%
\end{equation}
For example, let us take,
\[
A=\left[
\begin{array}
[c]{lc}%
0_{n} & I_{n}\\
0_{n} & I_{n}%
\end{array}
\right]  ,\quad B=\left[
\begin{array}
[c]{lc}%
-I_{n} & \Lambda\\
\ I_{n} & 0_{n}%
\end{array}
\right]  ,
\]
where $\Lambda$ is a selfadjoint $n\times n$ matrix. These matrices satisfy
(\ref{wcon1}, \ref{wcon2}). Moreover, the transmission condition in
(\ref{MSR}) is given by,
\begin{equation}
v(t,0+)=v(t,0-)=v(t,0),\quad(\partial_{x}v)(t,0+)-(\partial_{x}%
v)(t,0-)=\Lambda v(t,0). \label{bcond2}%
\end{equation}
This transmission condition corresponds to a Dirac delta point interaction at
$x=0$ with coupling matrix $\Lambda$. If $\Lambda=0,$ $v(t,x)$ and
$(\partial_{x}v)(t,x)$ are continuous at $x=0$ and the transmission condition
corresponds to the matrix Schr\"{o}dinger equation on the full-line without a
point interaction at $x=0.$

Using Theorem \ref{Theorem2} and the unitary operator $\mathbf{U},$ as above,
we deduce the following result concerning the Cauchy problem (\ref{MSR}).

\begin{corollary}
\label{wholeline} (The full-line case)\label{CorollaryWhole} Let
$n\in\mathbb{N}$. Suppose that $Q\left(  x\right)  ,$ $x\in\mathbb{R}$, is a
$n\times n$ selfadjoint matrix-valued function such that $Q\in L_{1}%
^{1}\left(  \mathbb{R};\mathbb{C}^{n}\right)  .$ Then, for any $p\in
\lbrack1,2]$ and $p^{\prime}$ such that $1/p+1/p^{\prime}=1,$ the estimates%
\[
\left\Vert e^{-itH_{\mathbb{R}}}P_{\operatorname*{c},\mathbb{R}}\right\Vert
_{\mathcal{B}\left(  L^{p}\left(  \mathbb{R};\mathbb{C}^{n}\right)
,L^{p^{\prime}}\left(  \mathbb{R};\mathbb{C}^{n}\right)  \right)  }\leq
\frac{C}{\left\vert t\right\vert ^{1/p-1/2}},
\]
and%
\[
\left\Vert e^{-itH_{\mathbb{R}}}P_{\operatorname*{c},\mathbb{R}}\right\Vert
_{\mathcal{B}\left(  W_{1,p}^{\mathbb{R},A,B},W_{1,p^{\prime}}^{\mathbb{R}%
,A,B}\right)  }\leq\frac{C}{\left\vert t\right\vert ^{1/p-1/2}},
\]
hold for all $t\in\mathbb{R\setminus}\{0\},$ where $P_{\operatorname*{c}%
,\mathbb{R}}$ is the projector onto the continuous subspace of $H_{\mathbb{R}%
}.$ Moreover, let $\left(  q,r\right)  $ be an admissible pair, that is,
$2/q=1/2-1/r$ and $2\leq r\leq\infty.$ Then, the conclusions of Theorem
\ref{Theorem2} are true with $\mathbb{R}^{+}$ replaced by $\mathbb{R}$ and
with $H_{\mathbb{R}},P_{\operatorname*{c},\mathbb{R}}$ instead, respectively,
of $H$ and $P_{\operatorname*{c}}.$
\end{corollary}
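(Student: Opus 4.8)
The plan is to deduce Corollary \ref{wholeline} directly from Theorems \ref{Theorem1} and \ref{Theorem2} by transporting all estimates through the unitary operator $\mathbf{U}$ of (\ref{unitarytransform}), after setting up a $2n\times 2n$ half-line problem whose transform is the given $n\times n$ full-line problem. First I would, given $n\in\mathbb{N}$ and a selfadjoint $Q\in L^1_1(\mathbb{R};\mathbb{C}^n)$, define $V_1(x):=Q(x)$ and $V_2(x):=Q(-x)$ for $x\in\mathbb{R}^+$; these are selfadjoint $n\times n$ matrix-valued functions in $L^1_1(\mathbb{R}^+)$ because $\int_0^\infty(1+x)|V_1(x)|\,dx+\int_0^\infty(1+x)|V_2(x)|\,dx=\int_{-\infty}^\infty(1+|x|)|Q(x)|\,dx<\infty$. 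Then $V(x):=\operatorname*{diag}\{V_1(x),V_2(x)\}$ is a selfadjoint $2n\times 2n$ potential in the Faddeev class $L^1_1$, so Theorems \ref{Theorem1} and \ref{Theorem2} apply to the associated half-line Hamiltonian $H=H_{A,B,V}$ on $L^2(\mathbb{R}^+;\mathbb{C}^{2n})$, for any $2n\times 2n$ matrices $A,B$ satisfying (\ref{wcon1})--(\ref{wcon2}). One must also record the general principle, already established in the excerpt, that given the transmission condition one wishes to realize on the line there exist such $A,B$; in particular the explicit choice displayed before the corollary handles the point-interaction (and no-interaction) cases, and the general selfadjoint transmission condition (\ref{bdcond}) is realized by suitable $A_j,B_j$. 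So it suffices to fix $A,B$ matching the desired line problem and set $H_{\mathbb{R}}=\mathbf{U}H\mathbf{U}^\dagger$ as in (\ref{hfull}), which is exactly the selfadjoint realization of $-\partial_x^2+Q$ on the line.

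Next I would transfer the projectors and the propagator. Since $\mathbf{U}$ is unitary and $H_{\mathbb{R}}=\mathbf{U}H\mathbf{U}^\dagger$, we have $e^{-itH_{\mathbb{R}}}=\mathbf{U}e^{-itH}\mathbf{U}^\dagger$ by the functional calculus, and the point/continuous spectral subspaces correspond: $P_{\operatorname*{p},\mathbb{R}}=\mathbf{U}P_{\operatorname*{p}}\mathbf{U}^\dagger$, hence $P_{\operatorname*{c},\mathbb{R}}=\mathbf{U}P_{\operatorname*{c}}\mathbf{U}^\dagger$. Therefore $e^{-itH_{\mathbb{R}}}P_{\operatorname*{c},\mathbb{R}}=\mathbf{U}\,e^{-itH}P_{\operatorname*{c}}\,\mathbf{U}^\dagger$. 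The one remaining ingredient is the mapping properties of $\mathbf{U}$ between the relevant function spaces: from (\ref{unitarytransform}) one sees immediately that $\mathbf{U}$ is an isometry from $L^p(\mathbb{R}^+;\mathbb{C}^{2n})$ onto $L^p(\mathbb{R};\mathbb{C}^n)$ for every $1\le p\le\infty$ (the pair $(\psi_1,\psi_2)$ is just repackaged as a function on the two half-lines), and likewise, up to equivalence of norms, $\mathbf{U}$ maps $W_{1,p}^{A,B}(\mathbb{R}^+)$ onto $W_{1,p}^{\mathbb{R},A,B}$ and $W_{1,p}(\mathbb{R}^+;\mathbb{C}^{2n})$ onto $W_{1,p}(-\infty,0)\oplus W_{1,p}(0,\infty)$; the case $p=2$ of these facts is already quoted in the excerpt and the general $p$ case is the same elementary computation. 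Composing, for $\varphi\in L^p(\mathbb{R};\mathbb{C}^n)$ we get $\|e^{-itH_{\mathbb{R}}}P_{\operatorname*{c},\mathbb{R}}\varphi\|_{L^{p'}(\mathbb{R})}=\|\mathbf{U}\,e^{-itH}P_{\operatorname*{c}}\,\mathbf{U}^\dagger\varphi\|_{L^{p'}(\mathbb{R})}=\|e^{-itH}P_{\operatorname*{c}}(\mathbf{U}^\dagger\varphi)\|_{L^{p'}(\mathbb{R}^+)}\le C|t|^{-(1/p-1/2)}\|\mathbf{U}^\dagger\varphi\|_{L^p(\mathbb{R}^+)}=C|t|^{-(1/p-1/2)}\|\varphi\|_{L^p(\mathbb{R})}$, which is the first estimate; the Sobolev estimate follows identically from (\ref{estimate2}), with the constant absorbing the norm-equivalence factors.

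Finally, for the Strichartz part I would simply conjugate each assertion of Theorem \ref{Theorem2} by $\mathbf{U}$. Because $\mathbf{U}$ is an isometry on $L^2$ and on each $L^r$ pointwise in $t$, it is an isometry from $L^q(\mathbb{R},L^r(\mathbb{R}^+;\mathbb{C}^{2n}))$ onto $L^q(\mathbb{R},L^r(\mathbb{R};\mathbb{C}^n))$ and commutes with the time integration defining $\Phi_f$, while $t\mapsto\mathbf{U}\,e^{-itH}P_{\operatorname*{c}}\mathbf{U}^\dagger\psi$ inherits continuity into $L^2$ from $t\mapsto e^{-itH}P_{\operatorname*{c}}\mathbf{U}^\dagger\psi$. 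Writing $\varphi=\mathbf{U}^\dagger\psi$ for the homogeneous estimate and $g=\mathbf{U}^\dagger f$ (applied in the $\mathbb{C}^{2n}$-valued setting, fiberwise in $t$) for the inhomogeneous one, and noting that the constant $C$ from Theorem \ref{Theorem2} is independent of the interval $I$ while $\mathbf{U}$ does not see $I$, gives all the claimed Strichartz bounds on the line with a constant independent of $I$. I do not expect any genuine obstacle here: the whole corollary is a unitary-equivalence transport argument, and the only point requiring a line of care is checking that $\mathbf{U}$ respects the first-order Sobolev spaces with the boundary/transmission conditions built in — but that is precisely the content of the identifications $W_{1,p}^{\mathbb{R},A,B}=\mathbf{U}W_{1,p}^{A,B}$, which follow from the same change-of-variables that defines $\mathbf{U}$, exactly as in the $p=2$ statements already in the text.
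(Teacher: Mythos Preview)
Your proposal is correct and follows exactly the approach the paper indicates: the paper does not give a separate proof of the corollary but simply states that it is deduced from Theorems~\ref{Theorem1} and~\ref{Theorem2} via the unitary operator $\mathbf{U}$ of (\ref{unitarytransform}) and the identification $H_{\mathbb{R}}=\mathbf{U}H\mathbf{U}^{\dagger}$. Your write-up supplies precisely the details this transport argument requires (the $L^{p}$-isometry of $\mathbf{U}$ for all $p$, the intertwining of the propagator and of $P_{\operatorname*{c}}$, and the definition $W_{1,p}^{\mathbb{R},A,B}=\mathbf{U}W_{1,p}^{A,B}$), so nothing is missing.
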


\subsection*{Comments on the results and on the literature.}
In the  case of star graphs with  potential  identically zero, and  with general boundary conditions,  $L^p-L^{p'}$ estimates, and  Strichartz estimates were obtained by \cite{grecu}. Moreover, for a star graph with the Kirchoff boundary condition and a potential that satisfies $\int_0^\infty\, dx   (1+x)^\gamma |V(x)| < \infty, \gamma > 5/2,$ $L^p-L^{p'}$ estimates, and  Strichartz estimates were  proven in \cite{man}.
Note that Theorems \ref{Theorem1} and \ref{Theorem2} and Corollary
\ref{wholeline} hold under the same conditions in the \textit{generic} and
\textit{exceptional} cases. Recall that we are in the generic case if the Jost
matrix is invertible at zero energy and that we are in the exceptional case if
the Jost matrix is not invertible at zero energy. In the exceptional case
there is a resonance (or half-bound state) with zero energy, and in the
generic case there is no resonance  at zero energy. In
other words, the validity of the dispersive estimates is independent of the
existence of a resonance  with zero energy.

In order to obtain the $L^{p}-L^{p^{\prime}}$ estimates, we follow the
approach of \cite{WederBoundary}. For this purpose, we use the scattering
theory for the matrix Schr\"{o}dinger equation on the half-line developed in
\cite{AgrMarch}, \cite{[5]}, \cite{[9]}, \cite{akwearxiv},\cite{WederBook} and
\cite{WederStar}. From the spectral representation for the matrix
Schr\"odinger operator $H$ we get a representation (see (\ref{eq4}) below) for
the continuous part (which corresponds to the scattering process) of the
evolution group $e^{-itH}$ in terms of the Jost solutions for the stationary
matrix Schr\"{o}dinger equation. Then, we can estimate the large-time behavior
of $e^{-itH}P_{\operatorname*{c}}$ by using the low- and high-energy
behaviours of the Jost solutions and the scattering matrix. For this purpose,
we need to estimate the difference between the scattering matrix and its
high-energy limit and to show that the Fourier transform of the difference is
integrable on the whole real line. This is Theorem \ref{FourierL1} below. This
result, which is interesting by its own, is crucial for obtaining the
$L^{p}-L^{p^{\prime}}$ estimates for such general perturbations as $V\in
L_{1}^{1}\left(  \mathbb{R}^{+}\right)  .$ We prove Theorem \ref{FourierL1} by
adapting the arguments of \cite{AgrMarch} for the Dirichlet boundary
condition, which involve the well-known Wiener theorem, to the case of general
self-adjoint boundary condition in (\ref{MS}). The key technical tools that
allows us to prove that the Fourier transform of the scattering matrix minus
its high-energy limit is integrable, under this generality, are the sharp
results on the low-energy behavior of the Jost matrix, including a formula for
the Jost matrix at zero energy, that where obtained in \cite{[5]} and the
precise estimate of the high-energy behavior of the scattering matrix of
\cite{[9]}. We observe that an alternative method for obtaining the $L^{p}
-L^{p\prime}$ estimates is developed in \cite{WederWholeLine}. This approach
requires a more detailed and subtle study of the low-energy properties of the
scattering data. Hence, it needs stronger conditions.

There is a very extensive literature on dispersive estimates. For surveys see
\cite{Fanelli} and \cite{Schlag}. We will only comment on results in one
dimension. The $L^{p}-L^{p\prime}$ estimates on the line were first proven in
the scalar case by Weder \cite{WederWholeLine} under the condition
\begin{equation}
\int_{\mathbb{R}}\,(1+|x|)^{\gamma}\,|V(x)|\,dx<\infty, \label{condition}%
\end{equation}
with $\gamma>3/2$ in the generic case and $\gamma>5/2$ in the exceptional
case. This was generalized by M. Goldberg and W. Schlag \cite{goldschl} to,
respectively $\gamma=1$ and $\gamma=2,$ and by Egorova, Kopylova, Marchenko, and Teschl
\cite{kmt} to $\gamma=1$ in the generic and the exceptional cases. D'Ancona
and Selberg \cite{as} considered a potential that satisfies (\ref{condition})
with $\gamma=2$ plus a step potential. Note that Corollary \ref{wholeline}
with the point interaction at $x=0$ is new in the scalar case. We are not
aware of any result on $L^{p}-L^{p\prime}$ estimates on the line for matrix
Schr\"{o}dinger equations.

The $L^{p}-L^{p\prime}$ estimates on the half-line, in the scalar case and
with Dirichlet boundary condition was proven by Weder \cite{WederBoundary}
under the condition $\int_{0}^{\infty}\,x\,|V(x)|\,dx<\infty$ in the generic
and the exceptional cases. It was actually in this paper that it was
discovered that the $L^{p}-L^{p\prime}$ estimates hold under the same
condition in the generic and the exceptional cases. The case of the spherical
Schr\"{o}dinger equation was considered by Holzleitner, Kostenko and Teschl
\cite{hkt2} and by Kostenko, Teschl and Toloza \cite{ktt}. The case of the one-dimensional Klein-Gordon
equation with a potential was studied by Weder \cite{wkg},  Egorova,  Kopylova,
Marchenko, and Teschl \cite{kmt} and by Prill \cite{op}. Kopylova and Teschl
\cite{kt} considered one dimensional discrete Dirac equations.

\bigskip

The paper is organized as follows.\ In Section \ref{Section2} we consider
results concerning the scattering theory for the matrix Schr\"{o}dinger
equation on the half-line, which play a crucial role in the proof of our
dispersive estimates. In particular, in Subsection
\ref{Selfadjoint realization} we construct the self-adjoint extension $H$
associated to the matrix Schr\"{o}dinger equation (\ref{MS}). In Subsection
\ref{Scattering Data} we introduce the relevant solutions for the stationary
matrix Schr\"{o}dinger equation. Using these solutions, in Subsection
\ref{Generalized Fourier transform}, we construct the spectral representations
for the operator $H$ via the generalized Fourier transforms. In Section
\ref{ScatteringFourier} we prove that the Fourier transform of the scattering
matrix minus its high-energy limit is integrable on the line. We use the
results of Section \ref{Section2} in Section \ref{Lp-Lq} to prove the
$L^{p}-L^{p^{\prime}}$ and Strichartz estimates for the matrix Schr\"{o}dinger equation.

\section{Scattering for Matrix Schr\"{o}dinger Equations.\label{Section2}}

\subsection{The Schr\"{o}dinger equation on the
half-line\label{Selfadjoint realization}.}

Let $n\in\mathbb{N}$. Consider the stationary matrix Schr\"{o}dinger equation
on the half-line%
\begin{equation}
-\psi^{\prime\prime}+V\left(  x\right)  \psi=k^{2}\psi,\text{ }x\in
\mathbb{R}^{+}, \label{MSStationary}%
\end{equation}
where the prime denotes the derivative with respect to the spatial coordinate
$x$, $k^{2}$ is the complex-valued spectral parameter, $V(x)$ satisfies
(\ref{PotentialHermitian}) and is such that
\begin{equation}
V\in L^{1}\left(  \mathbb{R}^{+}\right)  . \label{PotentialL1}%
\end{equation}
The wavefunction $\psi(k,x)$ appearing may be either a $n\times n$
matrix-valued function or it may be a column vector with $n$ components. As
mentioned at the beginning of the introduction, the more general selfadjoint
boundary condition at $x=0$ can be expressed in terms of two constant $n\times
n$ matrices $A$ and $B$ as%

\begin{equation}
-B^{\dagger}\psi\left(  0\right)  +A^{\dagger}\psi^{\prime}\left(  0\right)
=0, \label{SA1}%
\end{equation}
where $A$ and $B$ satisfy
\begin{equation}
-B^{\dagger}A+A^{\dagger}B=0, \label{SA2}%
\end{equation}%
\begin{equation}
A^{\dagger}A+B^{\dagger}B>0. \label{SA3}%
\end{equation}

We observe that \cite{[9]} provides the explicit steps to go from any pair of
matrices $A$ and $B$ appearing in the selfadjoint boundary condition
(\ref{SA1})-(\ref{SA3}) to a pair $\tilde{A}$ and $\tilde{B},$ given by%
\begin{equation}
\tilde{A}=-\operatorname*{diag}[\sin\theta_{1},...,\sin\theta_{n}],\text{
}\tilde{B}=\operatorname*{diag}[\cos\theta_{1},...,\cos\theta_{n}],
\label{A,Btilde}%
\end{equation}
with appropriate real parameters $\theta_{j}\in(0,\pi],$ which still satisfy
(\ref{SA1})-(\ref{SA3}). For the matrices $\tilde{A}$, $\tilde{B},$ the
boundary conditions (\ref{SA1}) are given by%
\begin{equation}
\cos\theta_{j}\psi_{j}\left(  0\right)  +\sin\theta_{j}\psi_{j}^{\prime
}\left(  0\right)  =0,\text{ \ \ }j=1,2,...,n, \label{PSM13}%
\end{equation}
 The special case $\theta_{j}=\pi$
corresponds to the Dirichlet boundary condition and the case $\theta_{j}%
=\pi/2$ corresponds to the Neumann boundary condition. In general, there are
$n_{\operatorname*{N}}\leq n$ values with $\theta_{j}=\pi/2$ and
$n_{\operatorname*{D}}\leq n$ values with $\theta_{j}=\pi$, and hence there
are $n_{\operatorname*{M}}$ remaining values, with $n_{\operatorname*{M}%
}=n-n_{\operatorname*{N}}-n_{\operatorname*{D}}$ such that those $\theta_{j}%
$-values lie in the interval $(0,\pi/2)$ or $(\pi/2,\pi),$ i.e., they
correspond to mixed boundary conditions. 
 In fact, it is proven in \cite{[9]} that for any
pair of matrices $(A,B)$ that satisfy (\ref{SA1}, \ref{SA2}) there is a pair
of matrices $(\tilde{A},\tilde{B})$ as in (\ref{A,Btilde}), a unitary matrix
$M$ and two invertible matrices $T_{1},T_{2}$ such%

\begin{equation}
A=M\,\tilde{A}T_{1}M^{\dagger}T_{2},\quad B=M\,\tilde{B}T_{1}M^{\dagger}T_{2}.
\label{trans}%
\end{equation}
We construct a selfadjoint realization of the matrix Schr\"{o}dinger operator
$-\partial_{x}^{2}+V(x)$ by quadratic forms methods. For the following
discussion see \cite{WederBook} and \cite{WederStar}. Let $\theta_{j}$ be given by equations
(\ref{PSM13}). For $1\leq p\leq\infty,$ we denote
\begin{equation}
\widehat{W}_{j}^{p}:=W_{1,p}^{\left(  0\right)  },\text{ if }\theta_{j}%
=\pi,\text{ and }\widehat{W}_{j}^{p}:=W_{1,p},\text{ if }\theta_{j}\neq\pi.
\label{spaceW1j}%
\end{equation}
We put%
\[
\widetilde{W}_{1,p}:=\oplus_{j=1}^{n}\widehat{W}_{j}^{p}.
\]
We write%
\[
\Theta:=\operatorname*{diag}[\widehat{\cot}\theta_{1},...,\widehat{\cot}%
\theta_{n}],
\]
where $\widehat{\cot}\theta_{j}=0,$ if $\theta_{j}=\pi/2,$ or $\theta_{j}%
=\pi,$ and $\widehat{\cot}\theta_{j}=\cot\theta_{j},$ if $\theta_{j}\neq
\pi/2,\pi.$ Suppose that the potential $V$ satisfies (\ref{PotentialHermitian}%
) and (\ref{PotentialL1}).
The following quadratic form is closed, symmetric and bounded below,
\begin{equation}
h\left(  \phi,\psi\right)  :=\left(  \phi^{\prime},\psi^{\prime}\right)
_{L^{2}}-\left\langle M\Theta M^{\dag}\phi\left(  0\right)  ,\psi\left(
0\right)  \right\rangle +\left(  V\phi,\psi\right)  _{L^{2}},\text{ }Q\left(
h\right)  :=W_{1,2}^{A,B}, \label{quadratic}%
\end{equation}
where by $Q(h)$ we denote the domain of $h$ and,
\begin{equation}
W_{1,p}^{A,B}:=M\widetilde{W}_{1,p}\subset W_{1,p}. \label{W1AB}%
\end{equation}
Further, by $\langle\cdot,\cdot\rangle$ we designate the scalar product in
$\mathbb{C}^{n}.$ We denote by $H_{A,B,V}$ the selfadjoint bounded below
operator associated to $h$ \cite{kato}. The operator $H_{A,B,V}$ is the
selfadjoint realization of $-\partial_{x}^{2}+V\left(  x\right)  $ with the
selfadjoint boundary condition (\ref{SA1}). When there is no possibility of
misunderstanding we will use the notation $H,$ i.e., $H\equiv H_{A,B,V}.$ It
is proven in \cite{WederBook} and \cite{WederStar} that,
\begin{equation}
H_{A,B,V}=MH_{\tilde{A},\tilde{B},M^{\dagger}VM}M^{\dagger}.
\label{diagonalization}%
\end{equation}
We denote by $H_{D,N,0}$ the selfadjoint bounded below operator associated to
the quadratic form (\ref{quadratic}) with $V\equiv0$ and the $\theta_{j}$
corresponding to the $n_{M}$ mixed boundary conditions replaced by $\theta
_{j}=\pi/2,$ i.e. with the mixed boundary conditions replaced by Neumann
boundary conditions. Note that the quadratic form domain of $H_{D,N,0}$ is
$W_{1,2}^{A,B}.$ Take $L>1$ such that $H+L>I$ and $H_{D,N,0}+L>I.$ Hence,
since the domains of $\sqrt{H+L}$ and of $\sqrt{H_{D,N,0}+L}$ are equal to
$W_{1,2}^{A,B}$ we have that,
\begin{equation}
\left(  \sqrt{H+L}\right)  \,\left(  \sqrt{H_{D,N,0}+L}\right)  ^{-1}%
\in\mathcal{B}(L^{2}),\quad\left(  \sqrt{H_{D,N,0}+L}\right)  \,\left(
\sqrt{H+L}\right)  ^{-1}\in\mathcal{B}(L^{2}). \label{equiv}%
\end{equation}
Denote by $\mathcal{H}$ the domain of $\sqrt{H+L}$ endowed with the norm,
\begin{equation}
\Vert\phi\Vert_{\mathcal{H}}:=\Vert\sqrt{H+L}\phi\Vert_{L^{2}},\quad\phi
\in\mathcal{H}. \label{norm}%
\end{equation}
In other words, $\mathcal{H}$ consists of $W_{1,2}^{A,B},$ but with the norm
(\ref{norm}). Observe that it follows from (\ref{quadratic}) that,
\[
\Vert\phi\Vert_{\mathcal{H}}^{2}=\left(  \phi^{\prime},\phi^{\prime}\right)
_{L^{2}}-\left\langle M\Theta M^{\dag}\phi\left(  0\right)  ,\phi\left(
0\right)  \right\rangle +\left(  V\phi,\phi\right)  _{L^{2}}+L\left(
\phi,\phi\right)  _{L^{2}},\quad\phi\in\mathcal{H}.
\]
Similarly,
\[
\Vert\sqrt{H_{D,N,0}+L}\phi\Vert_{L^{2}}^{2}=\left(  \phi^{\prime}%
,\phi^{\prime}\right)  _{L^{2}}+L\left(  \phi,\phi\right)  _{L^{2}}=\Vert
\phi\Vert_{W_{1,2}^{A,B}}^{2}+(L-1)\Vert\phi\Vert_{L^{2}}^{2},\quad\phi\in
W_{1,2}^{A,B}.
\]
Moreover, by (\ref{equiv}) there are positive constants $C_{1},C_{2}$ such
that,
\begin{equation}
C_{1}\Vert\phi\Vert_{W_{1,2}^{A,B}}\leq\Vert\phi\Vert_{\mathcal{H}}\leq
C_{2}\Vert\phi\Vert_{W_{1,2}^{A,B}}. \label{equiv2}%
\end{equation}

\subsection{The Jost and scattering matrices.\label{Scattering Data}}
Below, in Propositions~\ref{PJostsolution}, \ref{Jostnozero} and \ref{PK} we state results in special solutions to the matrix Schr\"odinger equation that we use. The interested reader can consult the monographs, \cite{[14],[37], [38]}, and the references quoted there, for similar results in the scalar case. 

We now introduce the special solutions for (\ref{MSStationary}) that play a
crucial role in our analysis.
By \cite{AgrMarch},
\cite{[5]}, \cite{WederBook} we have that:

\begin{proposition}
\label{PJostsolution}Suppose that the potential $V$ satisfies
(\ref{PotentialL1}). For each fixed
$k\in\overline{\mathbb{C}^{+}}\backslash\{0\}$ there exists a unique $n\times
n$ matrix-valued Jost solution $f\left(  k,x\right)  $ to equation
(\ref{MSStationary}) satisfying the asymptotic condition%
\begin{equation}
f\left(  k,x\right)  =e^{ikx}\left(  I+o\left( 1 \right)\right), \quad f'(k,x)= e^{ikx}[ik\, I+ o(1)], \qquad  x \to + \infty. \label{Jostsolution}%
\end{equation}
For each $k\in \overline{\mathbf C^+}\setminus\{0\},$
the quantity $f(k,x)$ and its $x$-derivative
$f'(k,x)$
are continuous in $x\in[0,+\infty).$
Moreover, for any fixed $x\in [0,\infty),$ $f\left(  k,x\right) $   and   $f'\left(  k,x\right) $  are
analytic in $k\in\mathbb{C}^{+}$ and continuous in $k\in\overline
{\mathbb{C}^{+}} \setminus\{0\}$.
\end{proposition}

Note that  \cite{AgrMarch} proves a result that is slightly different from the one given in  Proposition ~\ref{PJostsolution}, because they use Jost solutions analytic in
 $\mathbb{C}^{-}$. Furthermore, \cite{AgrMarch} considers potentials such that $x^{1+\delta} V(x)$ is integrable for $ \delta \geq 0,$ but they obtain a sharper error bound that depends on $\delta.$   Proposition ~\ref{PJostsolution}, as we state it above,  for Jost solutions analytic in  $\mathbb{C}^{+},$  and for potentials that satisfy (\ref{PotentialL1}) is given    in \cite{[5]}, \cite{WederBook}.

Given the boundary matrices $A$ and $B$ satisfying (\ref{SA2})-(\ref{SA3}),
from the Jost solution we construct the Jost matrix $J\left(  k\right)  ,$
which is a $n\times n$ matrix-valued function of $k,$
\begin{equation}
J\left(  k\right)  =f\left(  -k^{\ast},0\right)  ^{\dagger}B-f^{\prime}\left(
-k^{\ast},0\right)  ^{\dagger}A,\text{ \ }k\in\overline{\mathbb{C}^{+}},
\label{Jostmatrix}%
\end{equation}
where the asterisk denotes complex conjugation. The following proposition is
proven in \cite{AgrMarch}, \cite{[5]}, \cite{WederBook}, and \cite{Harmer}.

\begin{proposition}
\label{Jostnozero}Suppose that the potential $V$ satisfies
(\ref{PotentialHermitian}) and \eqref{PotentialL1}. Then, the Jost matrix
$J\left(  k\right)  $ is analytic for $k\in\mathbb{C}^{+}$, continuous for
$k\in\overline{\mathbb{C}^{+}}\setminus\{0\}$ and invertible for $k\in\mathbb{R}%
\setminus\{0\}.$ If furthermore, the potential satisfies \eqref{PotentialL11}, then, the Jost matrix is continuos for $k\in\overline{\mathbb{C}^{+}}.$ 
\end{proposition}

Note that  \cite{AgrMarch} states a result in the case of Dirichlet boundary condition that is slightly different from the one given in  Proposition ~ \ref{Jostnozero}, because they use Jost solutions analytic in $\mathbb{C}^{-},$ moreover, they always assume that $ x V(x)$ is integrable.  Proposition ~\ref{Jostnozero} as we state it, for Jost solutions analytic in $\mathbb{C}^{+},$  and   for general boundary condition is given in   \cite{[5]}, \cite{WederBook}, and  \cite{Harmer}. In \cite{Harmer} it is always assumed that 
\eqref{PotentialL11} holds.

For $x\geq0,$ let $K\left(  x,y\right)  $ be defined as
\[
K\left(  x,y\right)  =\left(  2\pi\right)  ^{-1}\int_{-\infty}^{\infty
}[f\left(  k,x\right)  -e^{ikx}I]e^{-iky}dk.
\]
Let us define the functions
\[
\sigma\left(  x\right)  =\int_{x}^{\infty}\left\vert V\left(  y\right)
\right\vert dy,\text{ \ }\sigma_{1}\left(  x\right)  =\int_{x}^{\infty
}y\left\vert V\left(  y\right)  \right\vert dy,\text{ }x\geq0.
\]
We observe that for potentials satisfying (\ref{PotentialL11}), both
$\sigma\left(  0\right)  $ and $\sigma_{1}\left(  0\right)  $ are finite, and
moreover,
\[
\int_{0}^{\infty}\,\sigma(x)\,dx=\sigma_{1}(0)<\infty.
\]
The following proposition is given in \cite{AgrMarch}.

\begin{proposition}
\label{PK}Suppose that the potential $V$ satisfies (\ref{PotentialHermitian})
and (\ref{PotentialL11}). Then, the matrix $K\left(  x,y\right)  $ is
continuous in $(x,y)$ in the region $0\leq x\leq y ,$ and is
related to the potential via
\[
K\left(  x,x^+\right)  =\frac{1}{2}\int_{x}^{\infty}V\left(  z\right)  dz,\text{
\ }x\in\lbrack0,+\infty).
\]
The Jost solution $f\left(  k,x\right)  $ has the representation
\begin{equation}
f\left(  k,x\right)  =e^{ikx}I+\int_{x}^{\infty}e^{iky}K\left(  x,y\right)
dy. \label{jostK}%
\end{equation}
The matrix $K\left(  x,y\right)  $ satisfies,
\begin{align}
K\left(  x,y\right)   &  =0,\text{ }y<x, x,y \in [0,\infty),  \nonumber\\
\left\vert K\left(  x,y\right)  \right\vert  &  \leq\frac{1}{2}e^{\sigma
_{1}\left(  x\right)  }\sigma\left(  \frac{x+y}{2}\right)  ,\text{\ }%
x,y \in [0,\infty),\label{ESTK}%
\end{align}%
\begin{align}
\partial_{x}^{i}\partial_{y}^{j}K\left(  x,y\right)   &  =0,\text{
}y<x,  x,y \in [0,\infty),     \nonumber\\
\left\vert \partial_{x}^{i}\partial_{y}^{j}K\left(  x,y\right)  \right\vert
&  \leq\frac{1}{4}\left\vert V\left(  \frac{x+y}{2}\right)  \right\vert
+\frac{1}{2}e^{\sigma_{1}\left(  x\right)  }\sigma\left(  \frac{x+y}%
{2}\right)  \sigma\left(  x\right)  ,\text{ }0<x<y,\text{ }i+j=1.
\label{ESTK1}%
\end{align}

\end{proposition}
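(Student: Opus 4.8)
\medskip

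\noindent\textbf{Sketch of the argument.}
The plan is to realize $K(x,y)$ as the solution of a Volterra‑type (Marchenko) integral equation and then to read off every assertion from that equation and from its construction by successive approximations.

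\emph{Step 1: the integral equation for $K$.} By Proposition \ref{PJostsolution} the Jost solution is the unique solution of the Volterra equation
\begin{equation}
f(k,x)=e^{ikx}I+\int_{x}^{\infty}\frac{\sin k(s-x)}{k}\,V(s)\,f(k,s)\,ds .\label{volterraf}
\end{equation}
Inserting the ansatz $f(k,x)=e^{ikx}I+\int_{x}^{\infty}e^{ikt}K(x,t)\,dt$ into the right‑hand side of (\ref{volterraf}), using the elementary identity $\frac{\sin k(s-x)}{k}\,e^{ik\tau}=\frac12\int_{\tau-(s-x)}^{\tau+(s-x)}e^{iku}\,du$ (valid for $s\ge x$), interchanging the order of integration, and passing to the variables $p=\frac{s+t}{2}$, $q=\frac{t-s}{2}$, one is led (first formally, then rigorously via the injectivity of the Fourier transform on $L^{1}$ once the bounds of Step 2 are available) to
\begin{equation}
K(x,y)=\frac12\int_{(x+y)/2}^{\infty}V(z)\,dz+\int_{(x+y)/2}^{\infty}\!\int_{0}^{(y-x)/2}V(p-q)\,K(p-q,p+q)\,dq\,dp ,\qquad 0\le x\le y,\label{marchenkoK}
\end{equation}
together with $K(x,y)=0$ for $y<x$. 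I would take (\ref{marchenkoK}) as the definition of $K$.

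\emph{Step 2: existence, continuity and the bound (\ref{ESTK}).} Solve (\ref{marchenkoK}) by iteration, $K=\sum_{m\ge0}K_{m}$, with $K_{0}(x,y)=\frac12\int_{(x+y)/2}^{\infty}V$ and $K_{m+1}$ obtained by applying the integral operator in (\ref{marchenkoK}) to $K_{m}$. Using $|K_{0}(x,y)|\le\frac12\sigma(\frac{x+y}{2})$, the monotonicity of $\sigma$, the fact that the arguments $p-q$ occurring in the iteration stay $\ge x$ (so $\sigma_{1}$ is always evaluated at points $\ge x$), and the identity $\int_{x}^{\infty}\sigma(t)\,dt=\int_{x}^{\infty}(t-x)|V(t)|\,dt\le\sigma_{1}(x)$, one proves by induction a bound of the form $|K_{m}(x,y)|\le\frac12\sigma(\frac{x+y}{2})\frac{\sigma_{1}(x)^{m}}{m!}$. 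Hence the series converges absolutely and uniformly on $\{0\le x\le y\}$, $K$ is continuous there and solves (\ref{marchenkoK}), and $|K(x,y)|\le\frac12 e^{\sigma_{1}(x)}\sigma(\frac{x+y}{2})$, which is (\ref{ESTK}).

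\emph{Step 3: diagonal value, Hermiticity, and the representation (\ref{jostK}).} Putting $y=x$ in (\ref{marchenkoK}) collapses the $q$‑range, so the double integral vanishes and $K(x,x)=\frac12\int_{x}^{\infty}V(z)\,dz$, which is Hermitian because $V=V^{\dagger}$; the Hermiticity of $K(x,y)$ for $x\le y$ follows from the symmetry properties of the Jost solution, as in \cite{AgrMarch,akwearxiv,WederBook}. Next, by (\ref{ESTK}) and $\int_{x}^{\infty}\sigma(\frac{x+y}{2})\,dy=2\int_{x}^{\infty}\sigma\le 2\sigma_{1}(0)<\infty$, the function $\widehat f(k,x):=e^{ikx}I+\int_{x}^{\infty}e^{iky}K(x,y)\,dy$ is well defined for $k\in\overline{\mathbb{C}^{+}}$; substituting it into the right‑hand side of (\ref{volterraf}), using (\ref{marchenkoK}) and Fubini (legitimate by the bounds just proved), one checks that $\widehat f$ solves (\ref{volterraf}), so $\widehat f=f$ by uniqueness, which is (\ref{jostK}); Fourier inversion then shows that this $K$ coincides with $(2\pi)^{-1}\int_{\mathbb{R}}[f(k,x)-e^{ikx}I]e^{-iky}\,dk$.

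\emph{Step 4: the derivative estimates (\ref{ESTK1}), and the main obstacle.} Differentiate (\ref{marchenkoK}) in $x$ and in $y$: the derivative of the first term is $-\frac14 V(\frac{x+y}{2})$, which produces the term $\frac14|V(\frac{x+y}{2})|$, and Leibniz' rule applied to the double integral yields two boundary contributions, one from the moving lower limit $\frac{x+y}{2}$ of the $p$‑integral and one from the moving upper limit $\frac{y-x}{2}$ of the $q$‑integral. Using (\ref{ESTK}), $p-q\ge x$, and the monotonicity of $\sigma$, each of these is bounded by $\frac14 e^{\sigma_{1}(x)}\sigma(\frac{x+y}{2})\sigma(x)$, so their sum is $\le\frac12 e^{\sigma_{1}(x)}\sigma(\frac{x+y}{2})\sigma(x)$, which is (\ref{ESTK1}); and $\partial_{x}^{i}\partial_{y}^{j}K=0$ for $y<x$ is immediate. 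The main technical points are Step 1 — justifying the interchange of integrations together with the Fourier‑uniqueness argument that turns (\ref{volterraf}) into (\ref{marchenkoK}) — and the inductive bookkeeping in Step 2 (getting the $1/m!$ decay); once (\ref{marchenkoK}) and (\ref{ESTK}) are in hand, together with the (a.e.) differentiability of $K$ needed for Leibniz' rule in Step 4, the remaining assertions are routine.
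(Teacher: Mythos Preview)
The paper does not give its own proof of this proposition; it simply cites \cite{AgrMarch}, \cite{akwearxiv} and \cite{WederBook}. Your sketch is precisely the classical Agranovich--Marchenko construction those references carry out: derive the Goursat-type integral equation (\ref{marchenkoK}) for $K$ from the Volterra equation for $f$, solve it by successive approximations with the inductive bound $|K_m|\le\frac12\sigma(\tfrac{x+y}{2})\sigma_1(x)^m/m!$, read off the diagonal value and the representation (\ref{jostK}) by uniqueness, and then differentiate (\ref{marchenkoK}) under the integral sign to get (\ref{ESTK1}). So your approach matches the cited literature and is correct; there is nothing to compare against in the paper itself.
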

Note that  \cite{AgrMarch} states the result  in a  slightly different form from the one in  Proposition ~ \ref{PK}, because they use Jost solutions analytic in $\mathbb{C}^{-}.$

We observe that the Jost matrix $J\left(  k\right)  $ can be expressed in
terms of $K$ as%

\begin{equation}
J\left(  k\right)  ^{\dagger}=B^{\dagger}+ikA^{\dagger}+A^{\dagger}K\left(
0,0\right)  +\int_{0}^{\infty}e^{-ikz}\left(  B^{\dagger}K\left(  0,z\right)
-A^{\dagger}K_{x}\left(  0,z\right)  \right)  dz,\text{ }k\in\mathbb{R}%
\text{.} \label{Josttranspose}%
\end{equation}
From the Jost matrix $J\left(  k\right)  $ we construct the scattering matrix
$S\left(  k\right)  ,$ which is a $n\times n$ matrix-valued function of $k$
given by
\begin{equation}
S\left(  k\right)  =-J\left(  -k\right)  J\left(  k\right)  ^{-1},\text{ }%
k\in\mathbb{R}\text{.} \label{Scatteringmatrix}%
\end{equation}
In the exceptional case where $J(0)$ is not invertible the scattering matrix is defined  by \eqref{Scatteringmatrix} only for $ k \neq 0.$ However, it is proven in \cite{[5]} that for potentials satisfying (\ref{PotentialHermitian}) and (\ref{PotentialL11})
the limit $S(0):=\lim_{k\to 0} S(k)$ exists in the exceptional case and, moreover,  a formula for $S(0)$ is given.  Actually, the low-energy analysis of \cite{[5]} plays a crucial role in the proof of Theorem~\ref{FourierL1}. Further, it is proven  in \cite{[5]} that the relation 

\begin{equation}
S\left(  -k\right)  =S\left(  k\right)  ^{\dagger}=S\left(  k\right)
^{-1},\text{ }k\in\mathbb{R}\text{,} \label{UnitaritySM}%
\end{equation}
holds. In particular, the scattering matrix $S\left(  k\right)  $
is unitary for $k\in\mathbb{R}$ and
\begin{equation}
S\left(  -k\right)  =-\left(  J\left(  k\right)  ^{\dagger}\right)
^{-1}\left(  J\left(  -k\right)  \right)  ^{\dagger}.
\label{Scatteringmatrix1}%
\end{equation}
In terms of the Jost solution $f(k,x)$ and the scattering matrix $S(k)$ we
construct the physical solution \cite{[9]}%
\begin{equation}
\Psi\left(  k,x\right)  =f\left(  -k,x\right)  +f\left(  k,x\right)  S\left(k\right)  ,k\in\mathbb{R}. \label{Physicalsolution}%
\end{equation}
For the definition of the physical solution in the scalar case the reader can consult \cite{[14],[37], [38]}. Observe that \cite{AgrMarch} gives a definition of the physical solution in the case with Dirichlet boundary condition that is different from \eqref{Physicalsolution}. Recall that they use Jost solutions analytic in $\mathbb C^{-}.$
Observe that by (\ref{jostK}), (\ref{ESTK}) and the unitarity of $S(k)$ we
have
\begin{equation}
\left\vert \Psi\left(  k,x\right)  \right\vert \leq C. \label{Psi1}%
\end{equation}
The physical solution $\Psi$ is the basis to construct the generalized Fourier
maps for the absolutely continuous subspace of $H.$ We observe that in the
case when $V=0,$ $f\left(  k,x\right)  =e^{ikx}I$ and then, it follows from
(\ref{Jostmatrix}) and (\ref{Scatteringmatrix}) that%
\begin{align}
J_{0}\left(  k\right)   &  =B-ikA,\text{ \ }J_{0}^{-1}\left(  k\right)
=\left(  B-ikA\right)  ^{-1},\label{J0}\\
S_{0}\left(  k\right)   &  =-\left(  B+ikA\right)  \left(  B-ikA\right)
^{-1}, \label{w00}%
\end{align}
where the zero index refers to the zero potential. In the diagonal form
$\tilde{A}$ and $\tilde{B}$ given by (\ref{A,Btilde}), the Jost and the
scattering matrices take the form%
\begin{equation}
\tilde{J}_{0}\left(  k\right)  =\tilde{B}-ik\tilde{A}=\operatorname*{diag}%
\left[  \cos\theta_{1}+ik\sin\theta_{1},...,\cos\theta_{n_{M}}+ik\sin
\theta_{n_{M}},-I_{n_{D}},ikI_{n_{N}}\right]  , \label{JABtilde}%
\end{equation}%
\begin{equation}
\tilde{J}_{0}^{-1}\left(  k\right)  =\operatorname*{diag}\left[  \left(
\cos\theta_{1}+ik\sin\theta_{1}\right)  ^{-1},...,\left(  \cos\theta_{n_{M}%
}+ik\sin\theta_{n_{M}}\right)  ^{-1},-I_{n_{D}},\left(  ik\right)
^{-1}I_{n_{N}}\right]  , \label{Jtildeinverse}%
\end{equation}%
\begin{equation}
\tilde{S}_{0}\left(  k\right)  =-\tilde{J}_{0}\left(  -k\right)  \tilde{J}%
_{0}\left(  k\right)  ^{-1}=\operatorname*{diag}\left[  \frac{-\cos\theta
_{1}+ik\sin\theta_{1}}{\cos\theta_{1}+ik\sin\theta_{1}},...,\frac{-\cos
\theta_{n_{M}}+ik\sin\theta_{n_{M}}}{\cos\theta_{n_{M}}+ik\sin\theta_{n_{M}}%
},-I_{n_{D}},I_{n_{N}}\right]  . \label{Stilde}%
\end{equation}
Furthermore, $J_{0}\left(  k\right)  ^{-1}$ is related to the corresponding
$\tilde{J}_{0}^{-1}\left(  k\right)  $ by the relation (\cite{[9]})%
\begin{equation}
J_{0}\left(  k\right)  ^{-1}=T_{2}^{-1}MT_{1}^{-1}\tilde{J}_{0}^{-1}\left(
k\right)  M^{\dag}, \label{J0-1}%
\end{equation}
where $M$ is a unitary matrix and $T_{1},T_{2}$ are invertible. Similarly,
\begin{equation}
S_{0}(k)=M\,\tilde{S}_{0}(k)\,M^{\dagger}. \label{trs}%
\end{equation}

\subsection{Generalized Fourier
transforms.\label{Generalized Fourier transform}}

We now turn to the definition of the generalized Fourier transforms \cite{WederBook} and
\cite{WederStar}. Using the physical solution $\Psi$ we define%
\[
\left(  \mathbf{F}^{\pm}\psi\right)  \left(  k\right)  =\sqrt{\frac{1}{2\pi}%
}\int_{0}^{\infty}\left(  \Psi\left(  \mp k,x\right)  \right)  ^{\dagger}%
\psi\left(  x\right)  dx,
\]
for $\psi\in L^{2} \cap L^{1}.$ For any Borel set $O$ let $E(O)$ be the
spectral projector of $H$ for $O$. Then, (\cite{WederBook,WederStar})%

\begin{equation}
\label{isom}\left\Vert \mathbf{F}^{\pm}\psi\right\Vert _{L^{2}}= \left\Vert
E(\mathbb{R}^{+})\psi\right\Vert _{L^{2}}.
\end{equation}
Thus,  $\mathbf{F}^{\pm}$ extend to bounded operators on $L^{2}$ that we
also denote by $\mathbf{F}^{\pm}.$

The following spectral result for $H$ are proven in \cite{WederBook},
\cite{WederStar}.

\begin{proposition}
\label{PropespectroH}Suppose that the potential $V$ satisfies
(\ref{PotentialHermitian}) and (\ref{PotentialL1}). Then, the Hamiltonian $H$
has no positive bound states, and the negative spectrum of $H$ consists of
isolated bound states of multiplicity smaller or equal than $n$, that can
accumulate only at zero. Furthermore, $H$ has no singular continuous spectrum
and its absolutely continuous spectrum is given by $[0,\infty)$. The
generalized Fourier maps $\mathbf{F}^{\pm}$ are partially isometric with
initial subspace $\mathcal{H}_{\operatorname*{ac}}\left(  H\right)  $ and
final subspace $L^{2}$. Moreover, the adjoint operators are given by%
\[
\left(  \left(  \mathbf{F}^{\pm}\right)  ^{\dagger}\varphi\right)  \left(
x\right)  =\sqrt{\frac{1}{2\pi}}\int_{0}^{\infty}\Psi\left(  \mp k,x\right)
\varphi\left(  k\right)  dk,
\]
for $\varphi\in L^{2} \cap L^{1}.$ Furthermore,
\begin{equation}
\mathbf{F}^{\pm}H\left(  \mathbf{F}^{\pm}\right)  ^{\dagger}=\mathcal{M}
\label{spectralrepr}%
\end{equation}
where $\mathcal{M}$ is the operator of multiplication by $k^{2}.$ If, in
addition, $V\in L_{1}^{1},$ there is no bound state at $k=0$ and the number of
bounded states of $H$ is finite.
\end{proposition}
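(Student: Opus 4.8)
The plan is to deduce every assertion from a resolvent representation for $H$ assembled out of the objects of Subsections~\ref{Scattering Data} and \ref{Generalized Fourier transform}. First I would bring in, for $\mathrm{Im}\,k>0$, a solution $\varphi(k,x)$ of (\ref{MSStationary}) that satisfies the boundary condition (\ref{SA1}) at $x=0$ (it can be built from $f(\pm k,x)$ and the boundary matrices), and record that the kernel of $(H-k^{2})^{-1}$ has, up to the evident symmetrisation in $x\leftrightarrow y$, the form $\varphi(k,x)\,J(k)^{-1}\,f(k,y)^{\dagger}$ for $x\le y$. This is legitimate because, by Proposition~\ref{PJostsolution}, $f(k,\cdot)$ is the unique solution decaying like $e^{ikx}$ at $+\infty$, while the Jost matrix (\ref{Jostmatrix}) records exactly how $\varphi$ fails to be proportional to $f$ at the boundary; in particular the poles of this kernel are exactly the eigenvalues of $H$. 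Since $H$ is self-adjoint these eigenvalues are real, and (as shown next) none lies in $[0,\infty)$, so every bound state has the form $-\kappa^{2}$ with $\kappa>0$.

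I would then settle the point spectrum. A positive eigenvalue $k^{2}$, $k>0$, would require an $L^{2}(\mathbb{R}^{+})$ solution of (\ref{MSStationary}); but any solution decaying at $+\infty$ is a combination of the columns of $f(k,x)=e^{ikx}(I+o(1))$ and thus lies in $L^{2}$ only if it vanishes, so $H$ has no positive bound states, and the same asymptotic reasoning --- now with solutions at most linear in $x$ under (\ref{PotentialL11}) --- excludes a zero-energy $L^{2}$ eigenfunction. A negative eigenvalue $-\kappa^{2}$, $\kappa>0$, corresponds to an $L^{2}$ eigenfunction $f(i\kappa,x)v$, $v\in\mathbb{C}^{n}$, that additionally obeys (\ref{SA1}), and by (\ref{Jostmatrix}) this holds exactly when $J(i\kappa)^{\dagger}v=0$. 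Hence the negative eigenvalues are the $-\kappa^{2}$ with $\det J(i\kappa)=0$, their multiplicities are $\le n$, and, $\det J(i\kappa)$ being a non-trivial real-analytic function of $\kappa$, they are isolated; the high-energy behaviour of the Jost matrix visible in (\ref{Josttranspose}) (cf.~(\ref{JABtilde})) makes $J(k)$ invertible for $|k|$ large, so only finitely many such zeros lie in any compact set and the negative eigenvalues can accumulate only at $0$. Under the extra hypothesis $V\in L_{1}^{1}$, the sharp low-energy expansion of the Jost matrix and the explicit formula for $J(0)$ of \cite{[5]} show that $J(0)$ is either invertible (generic case) or has a kernel producing only a bounded, non-$L^{2}$, half-bound state (exceptional case); hence $0$ is not an eigenvalue, and together with the high-energy bound this makes the number of negative eigenvalues finite.

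For the continuous part I would use the generalized Fourier maps $\mathbf{F}^{\pm}$ of Subsection~\ref{Generalized Fourier transform}. The identity (\ref{isom}) already exhibits $\mathbf{F}^{\pm}$ as a partial isometry with initial subspace $E(\mathbb{R}^{+})L^{2}$; the substantive point is that its range is all of $L^{2}$. I would obtain this by inserting the resolvent kernel above into Stone's formula, taking the boundary value as $\mathrm{Im}\,k\downarrow0$, and identifying the jump across $(0,\infty)$ through $S(k)=-J(-k)J(k)^{-1}$ (cf.~(\ref{Scatteringmatrix})), the physical solution (\ref{Physicalsolution}), and the unitarity relation (\ref{UnitaritySM}): the continuity and invertibility of $J(k)$ on $\mathbb{R}\setminus\{0\}$ (Proposition~\ref{Jostnozero}), the high-energy control of \cite{[9]}, and the low-energy control of \cite{[5]} make the limit rigorous and show that the resulting spectral measure on $(0,\infty)$ is purely absolutely continuous and is carried by $\Psi$. (Equivalently, one may transport the problem to the free half-line operator using the transformation operator $K(x,y)$ of Proposition~\ref{PK} and invoke Parseval for $\mathcal{F}$.) This gives at once the absence of singular continuous spectrum, $\sigma_{\operatorname*{ac}}(H)=[0,\infty)$, and $E(\mathbb{R}^{+})=P_{\operatorname*{ac}}$, so that $\mathcal{H}_{\operatorname*{ac}}(H)=E(\mathbb{R}^{+})L^{2}$ is the initial subspace of $\mathbf{F}^{\pm}$ and $L^{2}$ its final subspace.

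The intertwining relation (\ref{spectralrepr}) and the formula for $(\mathbf{F}^{\pm})^{\dagger}$ are then routine: $\Psi(k,x)$ solves $-\Psi''+V\Psi=k^{2}\Psi$ and, by (\ref{Scatteringmatrix1}) together with (\ref{Jostmatrix}), satisfies the boundary condition (\ref{SA1}), so in $\int_{0}^{\infty}\Psi(\mp k,x)^{\dagger}(H\psi)(x)\,dx$ two integrations by parts have vanishing boundary terms and transfer $H$ onto multiplication by $k^{2}$; taking the Hilbert-space adjoint in the defining formula for $\mathbf{F}^{\pm}$ on $L^{2}\cap L^{1}$ yields the stated expression for $(\mathbf{F}^{\pm})^{\dagger}$. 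I expect the completeness step of the preceding paragraph --- proving $\mathrm{Ran}\,\mathbf{F}^{\pm}=L^{2}$, equivalently the absence of singular continuous spectrum --- to be the main obstacle, because that is precisely where the delicate low- and high-energy asymptotics of $J(k)^{-1}$, hence of $S(k)$, are genuinely needed; everything else (the point spectrum, the intertwining, the adjoint formula) is comparatively mechanical once the resolvent representation and Propositions~\ref{PJostsolution}--\ref{PK} are at hand.
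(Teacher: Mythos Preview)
The paper does not give a proof of this proposition: it is introduced with the sentence ``The following spectral result for $H$ are proven in \cite{[9]}, \cite{WederStar}'' and is simply quoted from those references, without argument.

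Your outline is essentially the classical route that those references (and, for the Dirichlet case, \cite{AgrMarch}) follow: build the resolvent kernel from a regular solution satisfying (\ref{SA1}) and the decaying Jost solution, locate the negative point spectrum at the zeros of $\det J(i\kappa)$ in $\kappa>0$, and extract the absolutely continuous spectral measure on $(0,\infty)$ via Stone's formula from the boundary values of the resolvent, identifying it with the $\Psi$-transform. The plan is correct, and you are right that the completeness/absence-of-singular-continuous-spectrum step is where the real work lies and where the low- and high-energy control of $J(k)^{-1}$ from \cite{[5]}, \cite{[9]} is actually used.

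One phrasing issue: in excluding positive eigenvalues you write ``any solution decaying at $+\infty$ is a combination of the columns of $f(k,x)$''. For real $k>0$ there is \emph{no} decaying solution --- both $f(k,x)\sim e^{ikx}$ and $f(-k,x)\sim e^{-ikx}$ are purely oscillatory --- and that is precisely the point: every solution is a combination of these, hence not in $L^{2}$ unless trivial. Your conclusion is right but the sentence states the mechanism backwards. Similarly, ``invertible for $|k|$ large, so only finitely many such zeros lie in any compact set'' conflates two ingredients: analyticity of $\det J(i\kappa)$ makes the zeros isolated (hence finite in compacts), while large-$\kappa$ invertibility makes the determinant non-trivial and rules out accumulation at infinity. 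Neither point affects correctness.
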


We observe that in particular (\ref{spectralrepr}) implies that%
\begin{equation}
\mathbf{F}^{\pm}e^{-itH}\left(  \mathbf{F}^{\pm}\right)  ^{\dagger
}=e^{-it\mathcal{M}}. \label{spectralgroup}%
\end{equation}
Note that by (\ref{isom}) $\left(  \mathbf{F}^{\pm}\right)  ^{\dagger
}\mathbf{F}^{\pm}$ is the orthogonal projector onto $\mathcal{H}%
_{\operatorname*{ac}}\left(  H\right)  .$ Since the singular continuous
spectrum is absent we get
\begin{equation}
\left(  \mathbf{F}^{\pm}\right)  ^{\dagger}\mathbf{F}^{\pm}%
=P_{\operatorname*{c}}, \label{projector}%
\end{equation}
with $P_{c}$ the projector onto the continuous subspace of $H.$ Therefore,
from (\ref{spectralgroup}) and (\ref{projector}) it follows
\begin{equation}
e^{-itH}P_{\operatorname*{c}}=\left(  \mathbf{F}^{\pm}\right)  ^{\dagger
}e^{-it\mathcal{M}}\mathbf{F}^{\pm}. \label{Spectralrepresentation}%
\end{equation}
Equation (\ref{Spectralrepresentation}) is the starting point for the proof of
our main results.

From (\ref{Spectralrepresentation}) (with the negative sign), for $\psi
\in\mathcal{S}$ ($\mathcal{S}$ denoting the Schwartz class) we have%
\[
e^{-itH}P_{\operatorname*{c}}\psi=\left(  2\pi\right)  ^{-1}\int_{0}^{\infty
}\Psi\left(  k,x\right)  e^{-itk^{2}}\left(  \int_{0}^{\infty}\left(
\Psi\left(  k,y\right)  \right)  ^{\dagger}\psi\left(  y\right)  dy\right)
dk.
\]
Using the definition (\ref{Physicalsolution}) of $\Psi\ ,$ and as by
(\ref{UnitaritySM}) $S\left(  k\right)  S^{\dagger}\left(  k\right)  =I,$
$S^{\dagger}\left(  -k\right)  =S\left(  k\right)  ,$ for $k\in\mathbb{R},$ we
get%
\begin{equation}
e^{-itH}P_{\operatorname*{c}}\psi=\left(  2\pi\right)  ^{-1}\int_{0}^{\infty
}\mathcal{T}\left(  x,y\right)  \psi\left(  y\right)  dy, \label{eq4}%
\end{equation}
where%
\begin{equation}
\mathcal{T}\left(  x,y\right)  =\int_{-\infty}^{\infty}e^{-itk^{2}}\left(
f\left(  -k,x\right)  \left(  f\left(  -k,y\right)  \right)  ^{\dagger
}+f\left(  k,x\right)  S\left(  k\right)  \left(  f\left(  -k,y\right)
\right)  ^{\dagger}\right)  dk. \label{eq4.a}%
\end{equation}

\subsection{The Fourier transform of $S(k)-S_{\infty} $}

\label{ScatteringFourier}

 We define below a set 
$\kappa_{j},$ $j=1,...,l$  of  $l$  distinct positive numbers related to the
bound-state energies $-\kappa_{j}^{2}, j=1,...,l, $ and a set $M_{j},$ $j=1,..,l$  of
constant $n\times n$ matrices related to the normalization of matrix-valued
bound-state eigenfunctions. These positive numbers and matrices where first introduced by  \cite{AgrMarch} in the case of Dirichlet boundary condition, and later by    \cite{Harmer} for  general boundary condition  (see also \cite{akwearxiv}, \cite{WederBook}).
As we mentioned in Proposition~\ref{Jostnozero}   the Jost matrix
$J\left(  k\right)  $ is analytic for $k\in\mathbb{C}^{+}$, continuous for
$k\in\overline{\mathbb{C}^{+}}$ and invertible for $k\in\mathbb{R}
\diagdown\{0\}.$ Further, it is proved in  \cite{AgrMarch} and  \cite{Harmer} (see also \cite{akwearxiv}, \cite{WederBook}), 
that the  determinant $\det[J(k)]$ is nonzero
in $\mathbb C^+$ except perhaps at a finite number of distinct 
$k$-values on the positive imaginary axis, that we denote by  $\kappa_j, j=1,...,l$.  In the case that $\det[J(k)]$ has no zeros in $\mathbb C^+$ we take $l=0.$ We  use $m_j$ to denote the multiplicity of
the zero of $\det[J(k)]$ at $k=i\kappa_j, j=1,...,l.$
Each  $m_j$ satisfies $1\leq m_j\le n, j=1....,l.$ The bound-state energies of the Schr\"odinger operator $H$ are given by $- \kappa_j^2,$ and they have multiplicity  $m_j, j=1,...,l.$
Moreover, we denote by  $\text{Ker}[J(i\kappa_j)^\dagger]$  the kernel
of the $n\times n$ constant matrix $J(i\kappa_j)^\dagger,$ and   we
designate by  $P_j$  the orthogonal projection
matrix   onto $\text{Ker}[J(i\kappa_j)^\dagger],$ for $j=1,\dots,l.$

Let us  define the constant $n\times n$ matrices
$A_j,$ $B_j,$ and $M_j$  as follows,
$$
A_j:=\int_0^\infty dx\,f(i\kappa_j,x)^\dagger\, f(i\kappa_j,x),\qquad j=1,\dots,l,
$$
$$
B_j:=(I-P_j)+P_jA_j\,P_j,\qquad j=1,\dots,l,
$$
$$
M_j:=B_j^{-1/2}P_j,\qquad j=1,\dots,l.
$$
The matrices $B_j, j=1,...,l$ are invertible. The  normalized matrix-valued bound-state eigenfunctions are given by,

$$
\Psi_j(x):=f(i\kappa_j,x)\,M_j,\qquad j=1,\dots,l.
$$

Let us denote by $F_{s}$  the Fourier transform
of $\left(  2\pi\right)  ^{-1/2}\left(  S\left(  k\right)  -S_{\infty}\right)
,$ that is
\begin{equation}
F_{s}\left(  y\right)  =\frac{1}{2\pi}\int_{-\infty}^{\infty}\left[  S\left(
k\right)  -S_{\infty}\right]  e^{iky}dk,\text{ \ }y\in\mathbb{R}\text{.}
\label{Fourierscattering}%
\end{equation}
where (\cite{[9]}),
\begin{equation}
S_{\infty}:=\lim_{|k|\rightarrow\infty}S\left(  k\right)  =\lim
_{|k|\rightarrow\infty}S_{0}\left(  k\right)  =MZ_{0}M^{\dag}, \label{sinft}%
\end{equation}
$Z_{0}:=\operatorname*{diag}[I_{n_{M}},-I_{n_{D}},I_{n_{N}}],$ (the numbers
$n_{M},$ $n_{D},$ $n_{N}$ are defined below (\ref{PSM13})) and $M$ is the
unitary matrix in (\ref{trans}). Here we denote by $I_{m}$ the $m\times m$
identity matrix. We define
\[
F\left(  y\right) : =F_{s}\left(  y\right)  +\sum_{j=1}^{l}M_{j}^{2}%
e^{-\kappa_{j}y},\text{ \ }y\in\mathbb{R}^{+}.
\]
In the case when $l=0$ we take $F=F_s.$
It is proved in \cite{AgrMarch}, \cite{akwearxiv} and \cite{WederBook} that,
\begin{equation}
F\in L^{1}\left(  0,\infty\right)  \cap L^{\infty}\left(  0,\infty\right)  .
\label{ConditionF}%
\end{equation}
Moreover, by \cite{AgrMarch}, \cite{akwearxiv} and \cite{WederBook}, the
function $K\left(  x,y\right)  $ satisfies the Marchenko equation
\begin{equation}
K\left(  x,y\right)  +F\left(  x+y\right)  +\int_{x}^{\infty}K\left(
x,t\right)  F\left(  t+y\right)  dt=0,\text{ }0\leq x<y. \label{Marchenko}%
\end{equation}

We now prove the following result concerning $F_{s}$.

\begin{theorem}
\label{FourierL1}\textbf{ }Suppose that the potential $V$ satisfies
(\ref{PotentialHermitian}) and (\ref{PotentialL11}). Then,%
\begin{equation}
F_{s}\in L^{1}\left(  \mathbb{R }\right)  . \label{FourierSM}%
\end{equation}

\end{theorem}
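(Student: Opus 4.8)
The plan is to split $S(k) - S_\infty$ into a low-energy piece and a high-energy piece, handled by different tools, and to show each has an $L^1$ Fourier transform. First I would write $S(k) = -J(-k)J(k)^{-1}$ and $S_\infty = -J_0^{\,\infty}$ in the form dictated by \eqref{sinft}, and note that by \eqref{diagonalization} and \eqref{trans} it suffices (conjugating by the unitary $M$ and using that matrix multiplication preserves $L^1$ of the Fourier transform) to treat the diagonalized problem with boundary matrices $\tilde A,\tilde B$ as in \eqref{A,Btilde}. So from now on assume $A=\tilde A$, $B=\tilde B$, which decouples the Dirichlet, Neumann, and mixed blocks; in the Dirichlet block $S_\infty = -I_{n_D}$ and in the Neumann/mixed blocks $S_\infty = I$.

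The high-energy part is where the precise asymptotics of \cite{[9]} enter. Using the representation \eqref{Josttranspose} of $J(k)^\dagger$ in terms of $K(0,\cdot)$ and $K_x(0,\cdot)$ together with the estimates \eqref{ESTK}, \eqref{ESTK1}, one sees that $J(k)^\dagger = B^\dagger + ik A^\dagger + A^\dagger K(0,0) + \widehat{g}(k)$ where $g \in L^1(0,\infty)$; hence $J(k)$ differs from $J_0(k) = B - ikA$ by the Fourier transform of an $L^1$ function. Multiplying out $S(k) = -J(-k)J(k)^{-1}$, the key technical point is that $J(k)^{-1}$ — after factoring out the growing $ik$ in the Neumann directions — is a bounded analytic matrix on $\overline{\mathbb{C}^+}$ that is of the form (high-energy limit) plus (Fourier transform of $L^1$), which is exactly the Wiener-algebra statement one obtains by applying the matrix-valued Wiener theorem to the entries of $J(k)$; here the low-energy invertibility from Proposition \ref{Jostnozero}, the zero-energy formula for the Jost matrix from \cite{[5]}, and the absence of a zero-energy bound state (Proposition \ref{PropespectroH}, since $V\in L_1^1$) guarantee that $J(k)^{-1}$ stays bounded down to $k=0$. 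Since $\mathcal{F}(L^1)$ together with the constants forms a Banach algebra under pointwise multiplication (Wiener), the product $-J(-k)J(k)^{-1}$ lies in (constants) $+$ $\mathcal{F}(L^1)$, and subtracting the constant $S_\infty$ leaves an element of $\mathcal{F}(L^1(\mathbb{R}))$; that is precisely \eqref{FourierSM}. I would carry this out block by block: the mixed and Dirichlet blocks reduce to scalar rational symbols $\frac{-\cos\theta_j + ik\sin\theta_j}{\cos\theta_j+ik\sin\theta_j}$ (whose value at $k=\infty$ is $1$ and which is itself already of the right form after subtracting $1$, being $O(1/k)$), and the Neumann block requires handling the $(ik)^{-1}$ factor carefully near $k=0$, where the boundedness of $S(k)$ (it is unitary by \eqref{UnitaritySM}) rescues the argument.

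The Neumann directions are the main obstacle. There the free Jost matrix is $\tilde J_0(k) = ik I_{n_N}$, which vanishes at $k=0$, so $J(k)^{-1}$ has an apparent pole; yet $S(k)$ is globally bounded (unitary) and continuous, so the pole must cancel in the product $-J(-k)J(k)^{-1}$. Making this cancellation quantitative — extracting enough regularity and decay of $S(k) - S_\infty$ near $k=0$ to conclude that its Fourier transform is integrable, rather than merely that $S(k)-S_\infty$ is bounded and continuous — is the delicate step, and it is exactly here that the sharp low-energy expansion of the Jost matrix from \cite{[5]} (including the explicit value of $J(0)$) is indispensable. Concretely I expect to write $J(k) = J(0) + ik\,\dot J(0) + (\text{remainder})$ with the remainder controlled by $\sigma_1$, invert this expansion, and verify that the would-be singular terms assemble into something whose inverse Fourier transform decays like an integrable function (e.g.\ exponentially, from the locations $\pm i\kappa_j$ of the bound states, plus an $L^1$ tail); the compensating sum $\sum_j M_j^2 e^{-\kappa_j y}$ in the definition of $F$ is a hint that these bound-state contributions are what must be tracked. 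Once the $L^1$-ness of the Fourier transform is established in each diagonal block of the $\tilde A,\tilde B$ problem, conjugating back by $M$ and recalling that left/right multiplication by a fixed matrix preserves $\mathcal{F}(L^1)$ finishes the proof. The overall logical skeleton — reduce to diagonal form, use \eqref{Josttranspose} plus \eqref{ESTK}--\eqref{ESTK1} for the high-energy/Wiener part, use the \cite{[5]} zero-energy analysis for the low-energy part — parallels the Dirichlet-case argument of \cite{AgrMarch}, with the genuinely new input being the uniform treatment of the mixed and Neumann blocks under the general selfadjoint boundary condition.
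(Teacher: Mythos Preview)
Your high/low energy split and the use of the Wiener algebra for the high-energy piece are on the right track and match the paper's strategy. However, there are two genuine gaps. First, the claim that conjugating to the diagonal boundary pair $(\tilde A,\tilde B)$ ``decouples the Dirichlet, Neumann, and mixed blocks'' is false: under \eqref{diagonalization} the potential becomes $M^\dagger V M$, which has no reason to be block diagonal, so the Jost solution $f(k,x)$, the Jost matrix $J(k)$, and the scattering matrix $S(k)$ remain fully coupled $n\times n$ matrices. The rational symbols you write down are the entries of $\tilde S_0(k)$, not of $S(k)$; the paper exploits the diagonal form only to control the \emph{free} factor $J_0(k)^{-1}$ inside the high-energy piece (see \eqref{PSM15} and the matrix $J_\chi$), never to split the full problem into scalar subproblems.

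Second, and more seriously, your low-energy argument is where the real work lies and your outline does not supply the missing idea. A Taylor expansion $J(k)=J(0)+ik\dot J(0)+\cdots$ does not by itself yield that the cut-off inverse lies in $\mathcal F(L^1)$; one needs a matrix $D(k)$ that is (i) invertible for \emph{all} $k\in\mathbb R$ including $k=0$, and (ii) such that $\chi_a(k)D(k)\in\mathcal F(L^1)$, so that the local Wiener theorem (Proposition~\ref{Hardy}) applies. The paper achieves this with $D(k)=(I-P_0+\tfrac{1}{ik}P_0)J(k)^\dagger$, where $P_0$ is the projector in \eqref{P0} built from the zero-energy block structure \eqref{jotak} of \cite{[5]}; invertibility of $D(0)$ (Proposition~\ref{Prop1}) is exactly where the invertibility of $\mathcal A_1,\mathcal D_0$ enters, and the crucial fact $k^{-1}P_0J(k)^\dagger\in\mathcal F(L^1)+\text{const}$ (Proposition~\ref{PropMarchenko}) is obtained not from a Taylor expansion but by integrating the Marchenko equation \eqref{Marchenko} and showing the resulting kernel $\mathcal K$ is in $L^1(0,\infty)$. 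Your proposal conflates the Neumann singularity of $J_0(0)$ with the possible singularity of $J(0)$ in the exceptional case, and does not identify either the regularizing operator $P_0$ or the integrated-Marchenko argument that produces the needed $\mathcal F(L^1)$ control at $k=0$.
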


\begin{remark}
\textrm{   \rm We observe that this result is known in the case of the Dirichlet boundary
condition (see Theorem 5.6.2 on page 137 of \cite{AgrMarch}). In what
follows, we aim to extend (\ref{FourierSM}) to the case of the most general
boundary condition (\ref{SA1}).}
\end{remark}

In order to prove Theorem \ref{FourierL1}, we prepare some results. We begin
by proving the following adaptation of Lemma 5.6.2 on page 132 of
\cite{AgrMarch} to our settings:

\begin{proposition}
\label{PropMarchenko} Suppose that the $n \times n$ matrix $P_{0}$ satisfies
$P_{0} J^{\dagger}(0)=0.$ Then, the matrix $k^{-1}P_{0}J\left(  k\right)
^{\dag}$ fulfills
\begin{equation}
\label{wl1}k^{-1}P_{0}J\left(  k\right)  ^{\dag}=\mathcal{F}P_{0}G
+iP_{0}A^{\dagger},
\end{equation}
where $G(t) \in L^{1}\left(  \mathbb{R }\right)  $ and it is equal to zero for
$t >0.$
\end{proposition}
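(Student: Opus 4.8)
The plan is to start from the integral representation \eqref{Josttranspose} for the Jost matrix, namely
\[
J\left(  k\right)  ^{\dagger}=B^{\dagger}+ikA^{\dagger}+A^{\dagger}K\left(  0,0\right)  +\int_{0}^{\infty}e^{-ikz}\left(  B^{\dagger}K\left(  0,z\right)  -A^{\dagger}K_{x}\left(  0,z\right)  \right)  dz,
\]
and multiply on the left by the matrix $P_{0}$. The point of the hypothesis $P_{0}J^{\dagger}(0)=0$ is exactly that it kills the constant (in $k$) part of $P_{0}J(k)^{\dagger}$, so that after dividing by $k$ the remaining expression has a chance to be the Fourier transform of an $L^{1}$ function plus the harmless term $iP_{0}A^{\dagger}$. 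Concretely, setting $k=0$ in \eqref{Josttranspose} gives $J(0)^{\dagger}=B^{\dagger}+A^{\dagger}K(0,0)+\int_{0}^{\infty}\left(B^{\dagger}K(0,z)-A^{\dagger}K_{x}(0,z)\right)dz$, so $P_{0}J(0)^{\dagger}=0$ lets us subtract this from $P_{0}J(k)^{\dagger}$ and obtain
\[
P_{0}J(k)^{\dagger}=ikP_{0}A^{\dagger}+\int_{0}^{\infty}\left(e^{-ikz}-1\right)P_{0}\left(B^{\dagger}K(0,z)-A^{\dagger}K_{x}(0,z)\right)dz.
\]
Dividing by $k$ yields $iP_{0}A^{\dagger}$ plus $\int_{0}^{\infty}\frac{e^{-ikz}-1}{k}P_{0}\left(B^{\dagger}K(0,z)-A^{\dagger}K_{x}(0,z)\right)dz$, and the task is to recognize this last integral as $\mathcal{F}(P_{0}G)$ for some $G\in L^{1}(\mathbb{R})$ supported in $(-\infty,0]$.

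The next step is to make the kernel $\frac{e^{-ikz}-1}{k}$ explicit as a Fourier transform. Writing $\frac{e^{-ikz}-1}{k} = -i\int_{0}^{z}e^{-iks}\,ds = -i\int_{-z}^{0}e^{ik\sigma}\,d\sigma$ for $z>0$ (equivalently, $\frac{e^{-ikz}-1}{k}=\mathcal F\big({-}i\,\mathbf 1_{[-z,0]}\big)$ evaluated with the paper's convention $\mathcal F h(k)=\int e^{ikx}h(x)\,dx$ after a sign bookkeeping), and substituting into the integral, I would interchange the order of integration (justified by the estimates \eqref{ESTK} and \eqref{ESTK1}, which give $K(0,z)$ and $K_{x}(0,z)$ absolute integrability against $dz$ once we also use the weight bound $\sigma_{1}(0)<\infty$) to arrive at $\mathcal{F}(P_{0}G)$ with
\[
G(t) = i\,\mathbf 1_{(-\infty,0)}(t)\int_{-t}^{\infty}\left(B^{\dagger}K(0,z)-A^{\dagger}K_{x}(0,z)\right)dz, \qquad t<0,
\]
and $G(t)=0$ for $t>0$. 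Thus $G$ is supported in $(-\infty,0]$ by construction.

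The main obstacle — and the step requiring real care — is verifying $G\in L^{1}(\mathbb{R})$, i.e. $\int_{-\infty}^{0}|G(t)|\,dt<\infty$. By Fubini this equals $\int_{0}^{\infty}\int_{0}^{z}\big|B^{\dagger}K(0,z)-A^{\dagger}K_{x}(0,z)\big|\,dt\,dz = \int_{0}^{\infty} z\,\big|B^{\dagger}K(0,z)-A^{\dagger}K_{x}(0,z)\big|\,dz$, so the factor $z$ forces us to use the \emph{weighted} decay of $K$ and $K_{x}$ at $x=0$. From \eqref{ESTK} we get $|K(0,z)|\le \tfrac12 e^{\sigma_{1}(0)}\sigma(z/2)$, hence $\int_{0}^{\infty} z\,|K(0,z)|\,dz \lesssim \int_{0}^{\infty} z\,\sigma(z/2)\,dz \asymp \int_{0}^{\infty}\int_{w}^{\infty} w\,|V(y)|\,dy\,dw \asymp \int_{0}^{\infty} y^{2}|V(y)|\,dy$, and similarly from \eqref{ESTK1} $\int_{0}^{\infty} z\,|K_{x}(0,z)|\,dz$ reduces to combinations of $\int z|V(z/2)|\,dz$ and $\int z\,\sigma(z/2)\sigma(0)\,dz$, i.e. again second moments of $|V|$. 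Here I should be more careful: the hypothesis \eqref{PotentialL11} is only $\int_{0}^{\infty}(1+x)|V(x)|\,dx<\infty$, a \emph{first} moment, so a naive bound gives a second moment and is not obviously finite. The resolution — and presumably the reason this is stated as a standalone proposition adapting Lemma 5.6.2 of \cite{AgrMarch} — is that $P_{0}$ projects onto a subspace on which the relevant combination $P_{0}(B^{\dagger}K(0,z)-A^{\dagger}K_{x}(0,z))$ enjoys extra decay, or else one reorganizes the integral so that after one more application of the constraint $P_{0}J^{\dagger}(0)=0$ (rewriting $\int_{-t}^{\infty}$ as $\int_{0}^{\infty}-\int_{0}^{-t}$ and noting the $\int_{0}^{\infty}$ piece is exactly $-P_{0}(B^{\dagger}+A^{\dagger}K(0,0))$, a constant) the remaining $t$-integral only sees $\int_{0}^{-t}$, converting $\int z\,|\cdots|\,dz$-type bounds back into first-moment quantities $\int_{0}^{\infty}\sigma(z/2)\,dz=2\sigma_{1}(0)<\infty$ and $\int_{0}^{\infty}|V(z/2)|\,z\,dz$. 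So the key manipulation is to write $G(t)=i\,\mathbf 1_{(-\infty,0)}(t)\Big({-}P_{0}B^{\dagger}-P_{0}A^{\dagger}K(0,0)-\int_{0}^{-t}\big(B^{\dagger}K(0,z)-A^{\dagger}K_{x}(0,z)\big)dz\Big)$ and then absorb the constant term — but a constant times $\mathbf 1_{(-\infty,0)}$ is not in $L^{1}$, so one must instead keep the form $G(t)=i\,\mathbf 1_{(-\infty,0)}(t)\int_{-t}^{\infty}(\cdots)\,dz$ and exploit that $\int_{-t}^{\infty}(\cdots)\,dz\to 0$ as $t\to-\infty$ (being a tail of an absolutely convergent integral) together with a rate estimate: $\big|\int_{-t}^{\infty}(\cdots)\,dz\big|\lesssim \sigma(-t/2)+\cdots$, whose integral over $t\in(-\infty,0)$ is $\int_{0}^{\infty}\sigma(w/2)\,dw<\infty$. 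Once $G\in L^{1}(\mathbb R)$ with support in $(-\infty,0]$ is established, \eqref{wl1} follows directly by reading the Fourier inversion/definition backwards, and the proof is complete.
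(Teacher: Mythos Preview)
Your formula for $G$ is essentially correct and coincides (up to a sign and the reflection $t\mapsto -t$) with the function $\mathcal K(z)=P_0\big(B^\dagger K_1(z)-A^\dagger K_2(z)\big)$ that the paper introduces, where $K_1(z)=\int_z^\infty K(0,y)\,dy$ and $K_2(z)=\int_z^\infty K_x(0,y)\,dy$. Your identification of the issue --- that the only nontrivial point is proving $G\in L^1(\mathbb R)$, equivalently $\mathcal K\in L^1(0,\infty)$ --- is also right, and so is the observation that the naive Fubini bound $\int_0^\infty z\,|B^\dagger K(0,z)-A^\dagger K_x(0,z)|\,dz$ demands a second moment of $|V|$ not available under \eqref{PotentialL11}.

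The gap is in your proposed resolution. The ``rate estimate'' $\big|\int_{-t}^\infty(\cdots)\,dz\big|\lesssim\sigma(-t/2)+\cdots$ does not hold: from \eqref{ESTK} one only gets $\int_s^\infty|K(0,z)|\,dz\le C\int_s^\infty\sigma(z/2)\,dz\le C\,\sigma_1(s/2)$, and $\int_0^\infty\sigma_1(s/2)\,ds=2\int_0^\infty y^2|V(y)|\,dy$, again a second moment. No rewriting using the single relation $P_0J^\dagger(0)=0$ will cure this, because that relation has already been spent to produce $G$; there is nothing further to subtract, and the hypothesis carries no extra decay information about the integrand $B^\dagger K(0,z)-A^\dagger K_x(0,z)$.

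The paper's argument is genuinely different: rather than bounding $\mathcal K$ pointwise, it exploits the Marchenko equation \eqref{Marchenko}. Integrating \eqref{Marchenko} over $y\in(z,\infty)$, and separately differentiating in $x$ before setting $x=0$, yields after combining with $B^\dagger$, $A^\dagger$ and applying $P_0$ a linear integral equation
\[
\mathcal K(z)=\int_0^\infty\mathcal K(t)\,F(z+t)\,dt+\big(\text{forcing term in }L^1(0,\infty)\cap L^\infty(0,\infty)\big),
\]
where $F$ is the Marchenko kernel, known to lie in $L^1(0,\infty)\cap L^\infty(0,\infty)$ under \eqref{PotentialL11}. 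One already has $\mathcal K\in L^\infty$ from the pointwise bounds; the integral equation then \emph{forces} $\mathcal K\in L^1$ by successive approximations (split $F=\tilde F+(F-\tilde F)$ with $\tilde F$ Schwartz and $\|F-\tilde F\|_{L^1}<1$, so the operator with kernel $F-\tilde F$ contracts on $L^1$, while the $\tilde F$ piece maps $L^\infty$ into $L^1$ and can be absorbed into the forcing term). This use of the Marchenko equation --- not any pointwise decay mechanism --- is the missing idea.
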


\begin{proof}
Integrating the Marchenko equation (\ref{Marchenko}) on $\left(
z,\infty\right)  ,$ with $z\geq x\geq0,$ we have%
\begin{equation}
\int_{z}^{\infty}K\left(  x,y\right)  dy+\int_{z+x}^{\infty}F\left(  y\right)
dy+\int_{x}^{\infty}K\left(  x,t\right)  \int_{z+t}^{\infty}F\left(  y\right)
dydt=0.\label{PSM0}%
\end{equation}
Evaluating in $x=0$ we get%
\begin{equation}
K_{1}\left(  z\right)  +\int_{z}^{\infty}F\left(  y\right)  dy+\int
_{0}^{\infty}K\left(  0,t\right)  \left(  \int_{z+t}^{\infty}F\left(
y\right)  dy\right)  dt=0,\label{PSM1}%
\end{equation}
where we denote%
\[
K_{1}\left(  z\right)  =\int_{z}^{\infty}K\left(  0,y\right)  dy.
\]
Moreover, differentiating (\ref{PSM0}) with respect to $x$ (this is possible
due to (\ref{ESTK}), (\ref{ESTK1}) and (\ref{ConditionF})) and taking $x=0$ we
have%
\begin{equation}
K_{2}\left(  z\right)  -F\left(  z\right)  -K\left(  0,0\right)  \int
_{z}^{\infty}F\left(  y\right)  dy+\int_{0}^{\infty}K_{x}\left(  0,t\right)
\int_{z+t}^{\infty}F\left(  y\right)  dydt=0,\label{PSM2}%
\end{equation}
with%
\[
K_{2}\left(  z\right)  =\int_{z}^{\infty}K_{x}\left(  0,y\right)  dy.
\]
Note that $K_{1}$ and $K_{2}$ are well-defined due to (\ref{ESTK}) and
(\ref{ESTK1}). Observe that $K_{1}^{\prime}\left(  t\right)  =-K\left(
0,t\right)  $ and $K_{2}^{\prime}\left(  t\right)  =-K_{x}\left(  0,t\right)
.$ Then, integrating by parts in the last integral in the left-hand side of
(\ref{PSM1}) and (\ref{PSM2}) we get,%
\begin{equation}
K_{1}\left(  z\right)  +\left(  I+K_{1}\left(  0\right)  \right)  \int
_{z}^{\infty}F\left(  y\right)  dy-\int_{0}^{\infty}K_{1}\left(  t\right)
F\left(  z+t\right)  dt=0,\label{new1}%
\end{equation}%
\begin{equation}
K_{2}\left(  z\right)  -F\left(  z\right)  -\left(  K\left(  0,0\right)
-K_{2}\left(  0\right)  \right)  \int_{z}^{\infty}F\left(  y\right)
dy-\int_{0}^{\infty}K_{2}\left(  t\right)  F\left(  z+t\right)
dt=0.\label{new2}%
\end{equation}
Multiplying from the left (\ref{new1}) by $B^{\dag}$ and (\ref{new2}) by
$A^{\dag}$ and considering the difference between the resulting equations we
get%
\begin{align}
&  B^{\dag}K_{1}\left(  z\right)  -A^{\dag}K_{2}\left(  z\right)  +\left(
B^{\dag}\left(  I+K_{1}\left(  0\right)  \right)  +A^{\dag}\left(  K\left(
0,0\right)  -K_{2}\left(  0\right)  \right)  \right)  \int_{z}^{\infty
}F\left(  y\right)  dy\nonumber\\
&  =-A^{\dag}F\left(  z\right)  +B^{\dag}\int_{0}^{\infty}K_{1}\left(
t\right)  F\left(  z+t\right)  dt-A^{\dag}\int_{0}^{\infty}K_{2}\left(
t\right)  F\left(  z+t\right)  dt.\label{PSM3}%
\end{align}
From the representation (\ref{Josttranspose}) for $J\left(  k\right)  ^{\dag}$
we see that%
\begin{equation}
J\left(  0\right)  ^{\dag}=B^{\dagger}\left(  I+K_{1}\left(  0\right)
\right)  +A^{\dagger}\left(  K\left(  0,0\right)  -K_{2}\left(  0\right)
\right)  .\label{PSM4}%
\end{equation}
By (\ref{Marchenko}) we get%
\begin{equation}
F\left(  z\right)  =-K\left(  0,z\right)  -\int_{0}^{\infty}K\left(
0,t\right)  F\left(  z+t\right)  dt.\label{PSM5}%
\end{equation}
Hence, from (\ref{PSM3}), via (\ref{PSM4}) and (\ref{PSM5}), we deduce
\begin{align*}
&  B^{\dag}K_{1}\left(  z\right)  -A^{\dag}K_{2}\left(  z\right)  +J\left(
0\right)  ^{\dag}\int_{z}^{\infty}F\left(  y\right)  dy\\
&  =A^{\dag}K\left(  0,z\right)  +\int_{0}^{\infty}\left(  A^{\dag}K\left(
0,t\right)  +B^{\dag}K_{1}\left(  t\right)  -A^{\dag}K_{2}\left(  t\right)
\right)  F\left(  z+t\right)  dt.
\end{align*}
Letting act $P_{0}$ from the left on the last equation and using that by
assumption $P_{0}J^{\dagger}(0)=0$ we get%
\begin{equation}
\mathcal{K}\left(  z\right)  =\int_{0}^{\infty}\mathcal{K}\left(  t\right)
F\left(  z+t\right)  dt+P_{0}\left(  A^{\dag}K\left(  0,z\right)  +\int
_{0}^{\infty}A^{\dag}K\left(  0,t\right)  F\left(  z+t\right)  dt\right)
,\label{PSM6}%
\end{equation}
where we denote%
\[
\mathcal{K}\left(  z\right)  =P_{0}\left(  B^{\dag}K_{1}\left(  z\right)
-A^{\dag}K_{2}\left(  z\right)  \right)  .
\]
From the estimates (\ref{ESTK}), (\ref{ESTK1}) for $K$ it follows that
$K\left(  0,z\right)  \in L^{1}\left(  0,\infty\right)  \cap L^{\infty}\left(
0,\infty\right)  $ and $K_{1},K_{2}\in L^{\infty}\left(  0,\infty\right)  .$
In particular,
\begin{equation}
\mathcal{K}\in L^{\infty}\left(  0,\infty\right)  .\label{eq14}%
\end{equation}
Let us prove that $\mathcal{K}\in L^{1}\left(  0,\infty\right)  .$ We proceed
similarly to the proof of Lemma 3.3.2 on page 72 of \cite{AgrMarch}. Since
$K\left(  0,z\right)  \in L^{1}\left(  0,\infty\right)  \cap L^{\infty}\left(
0,\infty\right)  $ and by (\ref{ConditionF}) $F\in L^{1}\left(  0,\infty
\right)  ,$ we see that the second term in the right-hand side of (\ref{PSM6})
belongs to $L^{1}\left(  0,\infty\right)  \cap L^{\infty}\left(
0,\infty\right)  .$ Moreover, by the density of the Schwartz class
$\mathcal{S}$ in $L^{1}(0,\infty),$ we can find $\tilde{F}\in\mathcal{S}$ such
that
\begin{equation}
\left\Vert F-\tilde{F}\right\Vert _{L^{1}(0,\infty)}<1.\label{approximation}%
\end{equation}
Then, we write (\ref{PSM6}) as
\begin{equation}
\mathcal{K}\left(  z\right)  +\int_{0}^{\infty}\mathcal{K}\left(  t\right)
F_{1}\left(  z+t\right)  dt=g\left(  z\right)  ,\label{eq15}%
\end{equation}
where $F_{1}=\tilde{F}-F$ and $g\in L^{1}(0,\infty)\cap L^{\infty}(0,\infty).$
Here we used that by (\ref{eq14}) and $\tilde{F}\in\mathcal{S}$,
\[
\left\Vert \int_{0}^{\infty}\mathcal{K}\left(  t\right)  \tilde{F}\left(
z+t\right)  dt\right\Vert _{L^{1}}+\left\Vert \int_{0}^{\infty}\mathcal{K}%
\left(  t\right)  \tilde{F}\left(  z+t\right)  dt\right\Vert _{L^{\infty}}\leq
C\left\Vert \mathcal{K}\right\Vert _{L^{\infty}}\int_{z}^{\infty}\left(
1+t\right)  \left\vert \tilde{F}\left(  t\right)  \right\vert dt\leq C.
\]
By (\ref{approximation}) $\left\Vert F_{1}\right\Vert _{L^{1}}<1.\ $Then, by
the method of successive approximations\ we see that there is a unique
solution $\mathcal{K}_{1}\in L^{1}\cap L^{\infty}$ to equation (\ref{eq15}).
Since $\mathcal{K}$ satisfies (\ref{eq14}) and (\ref{eq15}), we prove that
$\mathcal{K}\equiv\mathcal{K}_{1}$. Therefore, $\mathcal{K}\in L^{1}%
(0,\infty).$ Let
\begin{equation}
\mathcal{\hat{K}}\left(  k\right)  =\int_{0}^{\infty}e^{-ikz}\mathcal{K}%
\left(  z\right)  dz.\label{wl2}%
\end{equation}
Integrating by parts in the last integral we get%
\[
\mathcal{\hat{K}}\left(  k\right)  =\frac{1}{ik}P_{0}\left(  \left(  B^{\dag
}K_{1}\left(  0\right)  -A^{\dag}K_{2}\left(  0\right)  \right)  -\int
_{0}^{\infty}e^{-ikz}\left(  B^{\dag}K\left(  0,z\right)  -A^{\dag}%
K_{x}\left(  0,z\right)  \right)  dz\right)  .
\]
Then, by using (\ref{Josttranspose})\ and (\ref{PSM4}), since $P_{0}J\left(
0\right)  ^{\dag}=0$ we get%
\begin{equation}
\mathcal{\hat{K}}\left(  k\right)  =\frac{1}{ik}P_{0}\left(  J\left(
0\right)  ^{\dag}-J\left(  k\right)  ^{\dagger}\right)  +P_{0}A^{\dagger
}=-\frac{1}{ik}P_{0}J\left(  k\right)  ^{\dagger}+P_{0}A^{\dagger
}.\label{PSM7}%
\end{equation}
Denoting $G(t):=\mathcal{K}(-t),t<0$ and $G(t)=0,t>0,$ we obtain (\ref{wl1})
from (\ref{wl2}) and (\ref{PSM7}). This completes the proof.
\end{proof}

\bigskip In order to present our next result, we need the sharp small energy
behaviour of $J\left(  k\right)  ,$ obtained in \cite{[5]}. We have%
\begin{equation}
J\left(  k\right)  =\mathcal{G}P_{2}^{-1}%
\begin{bmatrix}
k\mathcal{A}_{1}+o\left(  k\right)  & k\mathcal{B}_{1}\mathcal{A}_{1}+o\left(
k\right) \\
k\mathcal{C}_{1}+o\left(  k\right)  & \mathcal{D}_{0}+o\left(  1\right)
\end{bmatrix}
P_{1}\mathcal{G}^{-1}, \label{jotak}%
\end{equation}
where the matrices $\mathcal{A}_{1},\mathcal{D}_{0},\mathcal{G}$,$P_{1}%
$,$P_{2}$ are invertible. Let us introduce the notation
\[
\alpha:=\mathcal{G}P_{2}^{-1}\text{ and }\beta:=P_{1}\mathcal{G}^{-1}.
\]
Then, it follows from (\ref{jotak}) that%
\[
J\left(  0\right)  =\alpha%
\begin{bmatrix}
0 & 0\\
0 & \mathcal{D}_{0}%
\end{bmatrix}
\beta
\]
and\qquad%
\begin{equation}
J\left(  0\right)  ^{\dagger}=\beta^{\dagger}%
\begin{bmatrix}
0 & 0\\
0 & \mathcal{D}_{0}^{\dagger}%
\end{bmatrix}
\alpha^{\dagger}. \label{jotacero}%
\end{equation}
We let%
\begin{equation}
P_{0}=\beta^{\dagger}%
\begin{bmatrix}
I & 0\\
0 & 0
\end{bmatrix}
\left(  \beta^{\dagger}\right)  ^{-1}. \label{P0}%
\end{equation}
Since $P_{0}J\left(  0\right)  ^{\dagger}=0,$ $P_{0}$ satisfies the
assumptions of Proposition \ref{PropMarchenko}. We observe that $P_{0}%
^{\dagger}$ is a projection onto the null space of $J\left(  0\right)  .$
Using this operator $P_{0}$ we define%
\begin{equation}
D\left(  k\right)  =\left(  I-P_{0}+\frac{1}{ik}P_{0}\right)  J\left(
k\right)  ^{\dagger}. \label{D}%
\end{equation}
Let us show that this matrix is non-singular. We prove the following:

\begin{proposition}
\label{Prop1}For all $k\in\mathbb{R}$ we have
\begin{equation}
\det D\left(  k\right)  \neq0. \label{PSM9}%
\end{equation}
\qquad
\end{proposition}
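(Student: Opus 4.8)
The plan is to show that $D(k)$ is non-singular by treating the point $k=0$ and the points $k\in\mathbb{R}\setminus\{0\}$ separately, in each case reducing the determinant of $D(k)$ to the determinant of something already known to be invertible.

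First I would handle $k=0$. From the definition \eqref{D}, $D(k) = (I-P_0) J(k)^{\dagger} + \tfrac{1}{ik} P_0 J(k)^{\dagger}$. The factor $\tfrac1{ik} P_0 J(k)^\dagger$ is precisely the object analyzed in Proposition \ref{PropMarchenko}: since $P_0 J(0)^\dagger = 0$, we have $\tfrac1{ik}P_0 J(k)^\dagger = \mathcal{F} P_0 G + i P_0 A^\dagger$ is continuous at $k=0$, with value $\lim_{k\to0}\tfrac1{ik}P_0 J(k)^\dagger = (\mathcal{F}P_0 G)(0) + i P_0 A^\dagger$. More usefully, I would use the low-energy expansion \eqref{jotak}: writing $J(k) = \alpha\,\mathrm{diag\text{-}block}(\cdots)\,\beta$ and using \eqref{P0}, one computes that $(I-P_0 + \tfrac1{ik}P_0) J(k)^\dagger$, in the block decomposition adapted to $\beta^\dagger$ and $\alpha^\dagger$, becomes $\beta^\dagger \big(\begin{smallmatrix} \mathcal{A}_1^\dagger + o(1) & \mathcal{C}_1^\dagger + o(1)\\ \mathcal{A}_1^\dagger\mathcal{B}_1^\dagger + o(1) & \mathcal{D}_0^\dagger + o(1)\end{smallmatrix}\big)\alpha^\dagger$ as $k\to0$: the factor $\tfrac1{ik}$ cancels the vanishing order of the top block row of $J(k)^\dagger$. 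Hence $\det D(0) = \det\beta^\dagger \cdot \det\mathcal{A}_1^\dagger \cdot \det\mathcal{D}_0^\dagger \cdot \det\alpha^\dagger \ne 0$, since all of $\alpha$, $\beta$, $\mathcal{A}_1$, $\mathcal{D}_0$ are invertible. (Here I use that $P_0^\dagger$ is the projector onto $\ker J(0)$, so $I - P_0$ kills exactly the part of $J(k)^\dagger$ that survives at $k=0$, while $\tfrac1{ik}P_0$ rescales the part that vanishes linearly.)

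Next I would handle $k\in\mathbb{R}\setminus\{0\}$. Here $I - P_0 + \tfrac1{ik}P_0$ is invertible as a matrix (it acts as $I$ on $\mathrm{ran}(I-P_0)$ and as $\tfrac1{ik}$ times the identity on $\mathrm{ran}\,P_0$; its inverse is $I - P_0 + ik\,P_0$, as one checks directly using $P_0^2 = P_0$). Therefore $\det(I-P_0+\tfrac1{ik}P_0) = (ik)^{-\operatorname{rank}P_0}\ne 0$, and $\det D(k) = (ik)^{-\operatorname{rank}P_0}\det J(k)^\dagger$. By Proposition \ref{Jostnozero}, $J(k)$ is invertible for $k\in\mathbb{R}\setminus\{0\}$, so $\det J(k)^\dagger \ne 0$ and thus $\det D(k)\ne 0$. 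Combining the two cases gives \eqref{PSM9} for all $k\in\mathbb{R}$.

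The main obstacle is the careful bookkeeping at $k=0$: one must verify that, after multiplying $J(k)^\dagger$ on the left by the block-scaling operator $I-P_0+\tfrac1{ik}P_0$ expressed through \eqref{P0}, the off-diagonal $o(k)$ and $o(1)$ remainders in \eqref{jotak} are indeed divided only by $k$ where the leading term is $O(k)$, so that no negative powers of $k$ survive and the limit matrix is block-triangular (or block-full) with invertible diagonal blocks $\mathcal{A}_1^\dagger$ and $\mathcal{D}_0^\dagger$. Concretely, I would conjugate everything by $\beta^\dagger$ (resp. $\alpha^\dagger$) so that $P_0$ becomes the constant projector $\operatorname{diag}(I,0)$, reducing the claim to an explicit $2\times 2$ block computation with the matrix in \eqref{jotak}; once in that normal form the cancellation of $\tfrac1{ik}$ against the order-$k$ first block row is immediate, and the determinant factorizes as the product of the determinants of the two invertible diagonal blocks times $\det\alpha^\dagger\det\beta^\dagger$.
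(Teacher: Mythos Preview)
Your approach is exactly the paper's: for $k\neq 0$ factor $\det D(k)=\det(I-P_0+\tfrac{1}{ik}P_0)\,\det J(k)^\dagger$ and invoke Proposition~\ref{Jostnozero}; for $k=0$ insert the low-energy expansion \eqref{jotak} into \eqref{D}, conjugate by $\beta^\dagger$ and $\alpha^\dagger$, and read off a block-triangular limit with invertible diagonal blocks $\mathcal{A}_1^\dagger$ and $\mathcal{D}_0^\dagger$.

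There is one computational slip worth fixing. In your displayed block matrix at $k=0$ the $(2,1)$ entry should be $k\mathcal{A}_1^\dagger\mathcal{B}_1^\dagger+o(k)\to 0$, not $\mathcal{A}_1^\dagger\mathcal{B}_1^\dagger+o(1)$: the second block row is acted on by $I-P_0$ (i.e.\ the identity), not by $\tfrac{1}{ik}P_0$, so it is \emph{not} rescaled. (The top row also picks up a harmless factor $i^{-1}$.) With your matrix as written the determinant would not factor as $\det\mathcal{A}_1^\dagger\det\mathcal{D}_0^\dagger$; once the $(2,1)$ block is corrected to $0$ the limit is genuinely block upper-triangular,
\[
D(0)=\beta^\dagger\begin{bmatrix} i^{-1}\mathcal{A}_1^\dagger & i^{-1}\mathcal{C}_1^\dagger\\ 0 & \mathcal{D}_0^\dagger\end{bmatrix}\alpha^\dagger,
\]
and your determinant formula and conclusion follow exactly as you state. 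This is precisely the computation carried out in the paper.
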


\begin{proof}
Since $P_{0}^{2}=P_{0},$ the equation $\left(  I-P_{0}+\frac{1}{ik}%
P_{0}\right)  \psi=0,$ for $k\neq0,$ implies that both $\left(  I-P_{0}%
\right)  \psi=0$ and $P_{0}\psi=0$ are satisfied. It follows that $\det\left(
I-P_{0}+\frac{1}{ik}P_{0}\right)  \neq0,$ for $k\neq0.$ Moreover, using
Proposition \ref{Jostnozero} we have
\begin{equation}
\det D\left(  k\right)  =\det\left(  I-P_{0}+\frac{1}{ik}P_{0}\right)  \det
J\left(  k\right)  ^{\dagger}\neq0,\label{PSM8}%
\end{equation}
for $k\neq0.$ Thus, we need to consider $D\left(  0\right)  .$ Using
(\ref{jotak}) and (\ref{P0}) we see that%
\[
\frac{1}{ik}P_{0}J\left(  k\right)  ^{\dagger}=\beta^{\dagger}%
\begin{bmatrix}
i^{-1}\mathcal{A}_{1}^{\dagger}+o\left(  1\right)   & i^{-1}\mathcal{C}%
_{1}^{\dagger}+o\left(  1\right)  \\
0 & 0
\end{bmatrix}
\alpha^{\dagger}.
\]
Then,
\begin{equation}
\lim_{k\rightarrow0}\frac{1}{ik}P_{0}J\left(  k\right)  ^{\dagger}%
=\beta^{\dagger}%
\begin{bmatrix}
i^{-1}\mathcal{A}_{1}^{\dagger} & i^{-1}\mathcal{C}_{1}^{\dagger}\\
0 & 0
\end{bmatrix}
\alpha^{\dagger}.\label{eq40}%
\end{equation}
Moreover, from (\ref{jotacero}) we calculate%
\begin{equation}
\lim_{k\rightarrow0}\left(  I-P_{0}\right)  J\left(  k\right)  ^{\dagger
}=\left(  I-P_{0}\right)  J\left(  0\right)  ^{\dagger}=\beta^{\dagger}%
\begin{bmatrix}
0 & 0\\
0 & \mathcal{D}_{0}^{\dagger}%
\end{bmatrix}
\alpha^{\dagger}.\label{eq41}%
\end{equation}
Therefore, by (\ref{eq40}), (\ref{eq41}) we get
\[
D\left(  0\right)  =\lim_{k\rightarrow0}\left(  I-P_{0}+\frac{1}{ik}%
P_{0}\right)  J\left(  k\right)  ^{\dagger}=\beta^{\dagger}%
\begin{bmatrix}
i^{-1}\mathcal{A}_{1}^{\dagger} & i^{-1}\mathcal{C}_{1}^{\dagger}\\
0 & \mathcal{D}_{0}^{\dagger}%
\end{bmatrix}
\alpha^{\dagger}.
\]
Hence,%
\[
\det D\left(  0\right)  =\det\left(  \alpha^{\dagger}\right)  \det\left(
\beta^{\dagger}\right)  \det\left(  i^{-1}\mathcal{A}_{1}^{\dagger}\right)
\det\mathcal{D}_{0}^{\dagger}.
\]
Since $\alpha,\beta,\mathcal{A}_{1},\mathcal{D}_{0}$ are invertible, we show
that%
\[
\det D\left(  0\right)  \neq0.
\]
This relation together with (\ref{PSM8}) imply (\ref{PSM9}).
\end{proof}

We prepare the following remark. We denote by $f \ast g$ the convolution of
$f$ and $g,$
\[
(f\ast g)(x):= \int_{\mathbb{R}}\, f(x-y)\, g(y)\, dy.
\]

\begin{remark}
\label{remprod}\textrm{Suppose that $f,g\in L^{1}(\mathbb{R}).$ Then, since
for $f,g\in L^{1}(\mathbb{R}),$ $\left(  \mathcal{F}f\right)  \left(
\mathcal{F}g\right)  =\mathcal{F}\left(  f\ast g\right)  ,$ and $f\ast g\in
L^{1}(\mathbb{R}),$ we have that $\left(  \mathcal{F}f\right)  \left(
\mathcal{F}g\right)  \in\mathcal{F}\left(  L^{1}(\mathbb{R})\right)  .$ That
is to say, $\mathcal{F}\left(  L^{1}(\mathbb{R})\right)  $ is closed under
products.}
\end{remark}

We also need the Wiener-L\'{e}vy theorem (see 6.1.8 on page 262 of
\cite{Trigub}):

\begin{proposition}
[Wiener-L\'{e}vy theorem]Suppose that $f\in\mathcal{F}\left(  L^{1}\left(
\mathbb{R}\right)  \right)  .$ Let $F$ be an analytic function on an open set
of $\mathbb{C}$ which contains the range of $f.$ Then $F\circ f$
$\in\mathcal{F}\left(  L^{1}\left(  \mathbb{R}\right)  \right)  .$
\end{proposition}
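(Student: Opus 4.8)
The plan is to realize $A:=\mathcal{F}(L^{1}(\mathbb{R}))$ as a commutative Banach algebra and to establish a local membership principle for composition with an analytic function. For $h=\mathcal{F}g$ with $g\in L^{1}(\mathbb{R})$ set $\Vert h\Vert_{A}:=\Vert g\Vert_{L^{1}}$; by Remark \ref{remprod} and $\Vert g_{1}\ast g_{2}\Vert_{L^{1}}\le\Vert g_{1}\Vert_{L^{1}}\Vert g_{2}\Vert_{L^{1}}$, $A$ is a commutative Banach algebra under pointwise multiplication, complete because $\mathcal{F}$ is an isometric bijection onto $A$. It has no unit, since every element of $A$ vanishes at infinity by the Riemann–Lebesgue lemma. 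Consequently $0$ lies in the closure of the range of $f$, so the hypothesis makes $F$ analytic at $0$; and since $F\circ f\to F(0)$ at infinity while membership in $A$ forces vanishing there, we use $F(0)=0$ (in general the argument below yields $F\circ f-F(0)\in A$). The strategy is: first show that $F\circ f$ agrees, on a neighbourhood of every point of the one–point compactification $\mathbb{R}\cup\{\infty\}$, with some element of $A$; then patch these local representatives using a finite partition of unity and the closure of $A$ under finite sums and products.

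The local statement at a finite point $k_{0}\in\mathbb{R}$ is the analytic heart. Expand $F(w)=\sum_{n\ge0}c_{n}(w-f(k_{0}))^{n}$ on $|w-f(k_{0})|<\rho$, so $\sum_{n}|c_{n}|r^{n}<\infty$ for every $r<\rho$. Fix $\phi_{0}\in C_{0}^{\infty}(\mathbb{R})\subset A$ with $\phi_{0}=1$ on $[-1,1]$ and $\mathrm{supp}\,\phi_{0}\subset[-2,2]$, and put $\eta:=\mathcal{F}^{-1}\phi_{0}$, a Schwartz function. The key sub‑lemma is that the dilated cutoffs $\phi_{\lambda}(k):=\phi_{0}((k-k_{0})/\lambda)\in A$, which equal $1$ near $k_{0}$, satisfy $\Vert\phi_{\lambda}(f-f(k_{0}))\Vert_{A}\to0$ as $\lambda\to0$. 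Writing $f=\mathcal{F}g_{0}$ and computing in $L^{1}$, this $A$–norm is bounded by $\int_{\mathbb{R}}|g_{0}(y)|\,\Vert\tau_{\lambda y}\eta-\eta\Vert_{L^{1}}\,dy$, where $\tau_{t}$ is translation by $t$; since the $L^{1}$–modulus of continuity $\Vert\tau_{t}\eta-\eta\Vert_{L^{1}}\to0$ as $t\to0$ and is bounded by $2\Vert\eta\Vert_{L^{1}}$, dominated convergence gives the claim. Choosing $\lambda$ so small that $\Vert\phi_{\lambda}(f-f(k_{0}))\Vert_{A}<r<\rho$, the series $h_{k_{0}}:=\sum_{n}c_{n}(\phi_{\lambda}(f-f(k_{0})))^{n}$ converges in $A$; on the neighbourhood $U_{k_{0}}$ of $k_{0}$ where $\phi_{\lambda}=1$ and $|f-f(k_{0})|<\rho$ it equals $\sum_{n}c_{n}(f-f(k_{0}))^{n}=F\circ f$.

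The local statement at infinity uses the same $\phi_{0}$ as an approximate identity. Set $\psi_{R}(k):=\phi_{0}(k/R)\in A$, so $\psi_{R}=1$ on $[-R,R]$ and $\psi_{R}=0$ for $|k|>2R$. Here $\psi_{R}=\mathcal{F}\sigma_{R}$ with $\sigma_{R}(x)=R\,\eta(Rx)$, which is an approximate identity since $\int\sigma_{R}=\psi_{R}(0)=1$, $\Vert\sigma_{R}\Vert_{L^{1}}=\Vert\eta\Vert_{L^{1}}$, and its mass concentrates at $0$. Hence $\psi_{R}f=\mathcal{F}(\sigma_{R}\ast g_{0})\to\mathcal{F}g_{0}=f$ in $A$, so $\Vert(1-\psi_{R})f\Vert_{A}\to0$. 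Expanding $F(w)=\sum_{n\ge1}d_{n}w^{n}$ near $0$ (recall $F(0)=0$) and choosing $R$ large enough that $\Vert(1-\psi_{R})f\Vert_{A}$ is below the radius of convergence, the element $f_{\infty}:=(1-\psi_{R})f\in A$ yields $h_{\infty}:=\sum_{n\ge1}d_{n}f_{\infty}^{\,n}\in A$; on $\{|k|>2R\}$, where $\psi_{R}=0$ so $f_{\infty}=f$ and $|f|$ is small, this equals $F\circ f$.

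Finally I would patch. The compact set $\mathbb{R}\cup\{\infty\}$ is covered by finitely many of the neighbourhoods above: $U_{k_{1}},\dots,U_{k_{m}}$ together with $\{|k|>2R\}\cup\{\infty\}$. Take a smooth partition of unity subordinate to this cover with $\chi_{1},\dots,\chi_{m}\in C_{0}^{\infty}\subset A$, $\mathrm{supp}\,\chi_{j}\subset U_{k_{j}}$, and $\chi_{\infty}:=1-\sum_{j}\chi_{j}$ supported in $\{|k|>2R\}$. Using $\chi_{j}(F\circ f)=\chi_{j}h_{k_{j}}$ and $\chi_{\infty}(F\circ f)=\chi_{\infty}h_{\infty}$,
\[
F\circ f=\sum_{j=1}^{m}\chi_{j}\,h_{k_{j}}+\chi_{\infty}\,h_{\infty}=\sum_{j=1}^{m}\chi_{j}\,h_{k_{j}}+h_{\infty}-\sum_{j=1}^{m}\chi_{j}\,h_{\infty}.
\]
Every term on the right is a finite sum of products of elements of $A$, hence lies in $A$; although $\chi_{\infty}\notin A$, it occurs only through the combination $\chi_{\infty}h_{\infty}$, rewritten via the $A$–elements $h_{\infty}$ and $\chi_{j}h_{\infty}$, which is exactly what makes the non‑unital patching succeed. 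Thus $F\circ f\in A=\mathcal{F}(L^{1}(\mathbb{R}))$. The main obstacle is the small–local–norm sub‑lemma of the second paragraph, together with its companion $\Vert(1-\psi_{R})f\Vert_{A}\to0$ at infinity: these modulus‑of‑continuity and approximate‑identity estimates are precisely what permit summing the Taylor series of $F$ inside $A$, and the careful accounting at $\infty$ in a non‑unital algebra is the second delicate point.
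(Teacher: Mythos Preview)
The paper does not give its own proof of this proposition; it simply quotes the Wiener--L\'evy theorem from \cite{Trigub} and uses it as a black box. So there is no ``paper's proof'' to compare against.

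Your argument is correct and is essentially the classical direct proof: realize $\mathcal{F}(L^{1}(\mathbb{R}))$ as a non-unital commutative Banach algebra, produce local representatives of $F\circ f$ by summing the Taylor series of $F$ applied to suitably localized (hence small-in-norm) versions of $f-f(k_{0})$, do the same at infinity via an approximate identity, and then patch with a partition of unity subordinate to a finite cover of the one-point compactification. The two technical points you single out---the modulus-of-continuity estimate giving $\Vert\phi_{\lambda}(f-f(k_{0}))\Vert_{A}\to 0$, and the approximate-identity estimate giving $\Vert(1-\psi_{R})f\Vert_{A}\to 0$---are exactly the right ones, and your treatment of them is accurate. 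Your observation that the argument really yields $F\circ f-F(0)\in\mathcal{F}(L^{1})$ is also well taken: as stated, the proposition is literally correct only when $F(0)=0$, since every element of $\mathcal{F}(L^{1})$ vanishes at infinity; the paper's subsequent application in Corollary~\ref{Wiener-Levy} should be read with this in mind (one gets $(f-l)^{-1}+l^{-1}\in\mathcal{F}(L^{1})$, and then multiplies by $f$ to land back in $\mathcal{F}(L^{1})$).

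One small bookkeeping remark: in your local series $h_{k_{0}}=\sum_{n\ge 0}c_{n}(\phi_{\lambda}(f-f(k_{0})))^{n}$ the $n=0$ term is the constant $c_{0}=F(f(k_{0}))\notin A$, so $h_{k_{0}}$ lives in the unitization $\mathbb{C}\oplus A$ rather than in $A$ itself. This causes no trouble, since in the patching step each $h_{k_{j}}$ appears only multiplied by $\chi_{j}\in A$, and $\chi_{j}\cdot c_{0}\in A$; but it is worth saying explicitly.
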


In fact, what we actually use is the following corollary of the
Wiener-L\'{e}vy theorem:

\begin{corollary}
\label{Wiener-Levy}Suppose that $f\in\mathcal{F}\left(  L^{1}\left(
\mathbb{R}\right)  \right)  .$ Given $l\in\mathbb{C}$, if $f\left(  x\right)
\neq l,$ for all $x\in\mathbb{R}$, then $\displaystyle\frac{f}{f-l}%
\in\mathcal{F}\left(  L^{1}\left(  \mathbb{R}\right)  \right)  .$
\end{corollary}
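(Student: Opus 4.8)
The plan is to deduce the corollary directly from the Wiener--Lévy theorem stated above, the only task being to choose the right analytic function and to check its hypotheses. I would take $F(z):=\frac{z}{z-l}$, which is holomorphic on the open set $\mathbb{C}\setminus\{l\}$ and satisfies $F(0)=0$, and note that, by construction, $F\circ f=\frac{f}{f-l}$ whenever $f(x)\neq l$ for all $x$, so the composition is well defined pointwise.

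First I would record that, since $f\in\mathcal{F}(L^1(\mathbb{R}))$, writing $f=\mathcal{F}g$ with $g\in L^1(\mathbb{R})$, the Riemann--Lebesgue lemma gives that $f$ is continuous and $f(x)\to 0$ as $|x|\to\infty$; hence the range $f(\mathbb{R})$ is bounded and its closure is $f(\mathbb{R})\cup\{0\}$. By hypothesis $l\notin f(\mathbb{R})$, and moreover $l\neq 0$ (this is automatic in all the applications we make of the corollary, and is in any case forced: if $l=0$ then $F\circ f$ is the constant function $1$, which does not vanish at infinity and hence is not in $\mathcal{F}(L^1(\mathbb{R}))$). Consequently $\Omega:=\mathbb{C}\setminus\{l\}$ is an open set containing $\overline{f(\mathbb{R})}$, in particular containing the range of $f$ and the value $0=\lim_{|x|\to\infty}f(x)$, on which $F$ is analytic.

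Applying the Wiener--Lévy theorem to $f$ and $F$ now yields $F\circ f\in\mathcal{F}(L^1(\mathbb{R}))$, and since $F\circ f=\frac{f}{f-l}$ this is exactly the assertion. I do not expect any genuine obstacle: the substance is entirely carried by the Wiener--Lévy theorem quoted above, and the only point needing a line of care is the verification that $F$ is holomorphic on a neighbourhood of the range of $f$ together with the limit point $0$, which is immediate once one knows $l\notin f(\mathbb{R})$ and $l\neq 0$.
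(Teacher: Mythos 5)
Your proof is correct and rests on the same key result as the paper's, namely the Wiener--L\'evy theorem; the only difference is the choice of composition. The paper takes $F(z)=(z-l)^{-1}$, concludes that $(f-l)^{-1}\in\mathcal{F}\left(L^{1}\left(\mathbb{R}\right)\right)$, and then invokes Remark \ref{remprod} (closure under products) to obtain $f(f-l)^{-1}\in\mathcal{F}\left(L^{1}\left(\mathbb{R}\right)\right)$, whereas you take $F(z)=z/(z-l)$ and apply the theorem once, with no need of the product remark. Your variant is in fact slightly cleaner on a technical point: elements of $\mathcal{F}\left(L^{1}\left(\mathbb{R}\right)\right)$ vanish at infinity, so the sharp form of Wiener--L\'evy for this non-unital algebra asks for $F$ analytic on a neighbourhood of $\overline{f(\mathbb{R})}=f(\mathbb{R})\cup\{0\}$ with $F(0)=0$; your $F$ satisfies both (once $l\neq 0$), while the paper's intermediate assertion $(f-l)^{-1}\in\mathcal{F}\left(L^{1}\left(\mathbb{R}\right)\right)$ is, strictly speaking, off by the constant $-l^{-1}$ (one should write $(f-l)^{-1}=-l^{-1}+h$ with $h\in\mathcal{F}\left(L^{1}\right)$ and then multiply by $f$), though the final conclusion is unaffected. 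Your remark that $l=0$ must be excluded is also well taken: for $l=0$ the statement as written literally fails (take $f$ a Gaussian, so that $f/f\equiv 1\notin\mathcal{F}\left(L^{1}\right)$), and in the paper's only application of the corollary one has $l=-1$, so nothing is lost by this restriction.
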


\begin{proof}
In the Wiener-L\'{e}vy theorem take $F(z)=(z-l)^{-1}.$ Then, $(f-l)^{-1}%
\in\mathcal{F}\left(  L^{1}\left(  \mathbb{R}\right)  \right)  $ and by Remark
\ref{remprod} $f(f-l)^{-1}\in\mathcal{F}\left(  L^{1}\left(  \mathbb{R}%
\right)  \right)  .$
\end{proof}

Finally, we present a local Wiener theorem (see Theorem 229 on page 290 of
\cite{Hardy}):

\begin{proposition}
\label{Hardy}Suppose that $f,g\in\mathcal{F}\left(  L^{1}\left(
\mathbb{R}\right)  \right)  ,$ that $g\left(  x\right)  \neq0$, for all
$x\in\mathbb{R}$, and that $f\left(  x\right)  =0,$ for $\left\vert
x\right\vert >\lambda>0.$ Then, $\displaystyle \frac{f}{g}\in\mathcal{F}%
\left(  L^{1}\left(  \mathbb{R}\right)  \right)  .$
\end{proposition}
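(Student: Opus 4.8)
The plan is to reduce the statement to a \emph{local} inversion of $g$ near the support of $f$. Since $g$ never vanishes on $\mathbb{R}$, the quotient $f/g$ is well defined everywhere, and it vanishes for $|x|>\lambda$. Hence it suffices to produce a single function $h\in\mathcal{F}(L^{1}(\mathbb{R}))$ that coincides with $1/g$ on an open neighbourhood $U$ of $[-\lambda,\lambda]$: then $fh$ agrees with $f/g$ on $U$ (where $gh=1$) and both vanish off $[-\lambda,\lambda]\subset U$, so $f/g=fh$ on all of $\mathbb{R}$, and $fh\in\mathcal{F}(L^{1}(\mathbb{R}))$ by Remark~\ref{remprod}. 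Thus the whole problem becomes inverting $g$ locally inside the algebra $\mathcal{F}(L^{1}(\mathbb{R}))$.

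To invert $g$ near a fixed point $x_{0}\in[-\lambda,\lambda]$, I would truncate $g$ to a small window around $x_{0}$ and apply the Wiener--L\'{e}vy theorem. Since $g$ is continuous (an element of $\mathcal{F}(L^{1})$ lies in $C_{0}$) and $g(x_{0})\neq0$, I can choose a smooth compactly supported bump $\phi_{x_{0}}$, which belongs to $\mathcal{F}(L^{1}(\mathbb{R}))$ because it is Schwartz, with $\phi_{x_{0}}\equiv1$ near $x_{0}$ and $\operatorname{supp}\phi_{x_{0}}$ contained in the set where $|g-g(x_{0})|<|g(x_{0})|$. The function $\gamma_{x_{0}}:=\phi_{x_{0}}\,(g-g(x_{0}))$ then lies in $\mathcal{F}(L^{1}(\mathbb{R}))$ and its range sits in the open disc $\{|z|<|g(x_{0})|\}$. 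Applying the Wiener--L\'{e}vy theorem to $\gamma_{x_{0}}$ with the analytic map $z\mapsto (g(x_{0})+z)^{-1}-g(x_{0})^{-1}$, which vanishes at $z=0$ and is analytic on that disc, yields $(g(x_{0})+\gamma_{x_{0}})^{-1}-g(x_{0})^{-1}\in\mathcal{F}(L^{1}(\mathbb{R}))$. Multiplying by a further bump $\eta_{x_{0}}\in\mathcal{F}(L^{1})$ equal to $1$ on a neighbourhood $V_{x_{0}}$ of $x_{0}$ and supported where $\phi_{x_{0}}=1$ produces $h_{x_{0}}:=\eta_{x_{0}}(g(x_{0})+\gamma_{x_{0}})^{-1}\in\mathcal{F}(L^{1}(\mathbb{R}))$, using Remark~\ref{remprod}, with $g\,h_{x_{0}}=1$ on $V_{x_{0}}$.

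Finally I would globalize by a partition of unity. The compact set $[-\lambda,\lambda]$ is covered by finitely many of the neighbourhoods $V_{x_{1}},\dots,V_{x_{N}}$; choosing smooth functions $\chi_{i}\in\mathcal{F}(L^{1}(\mathbb{R}))$ with $\operatorname{supp}\chi_{i}\subset V_{x_{i}}$ and $\sum_{i}\chi_{i}\equiv1$ on a neighbourhood $U$ of $[-\lambda,\lambda]$, I set $h:=\sum_{i=1}^{N}\chi_{i}\,h_{x_{i}}$. Then $h\in\mathcal{F}(L^{1}(\mathbb{R}))$ and, on $U$, one has $gh=\sum_{i}\chi_{i}\,(g\,h_{x_{i}})=\sum_{i}\chi_{i}=1$, which is exactly the function required in the first paragraph, completing the argument.

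The main obstacle is conceptual rather than computational: since $g\in\mathcal{F}(L^{1}(\mathbb{R}))$ forces $g(x)\to0$ as $|x|\to\infty$, the reciprocal $1/g$ can never belong to $\mathcal{F}(L^{1}(\mathbb{R}))$ globally, so one cannot simply invert $g$ and multiply. The compact support of $f$ is precisely what allows us to discard the behaviour of $g$ at infinity. The two delicate points are therefore (i) arranging each truncation so that the range of the modified function stays in a disc on which $z\mapsto1/z$ is analytic, so that the Wiener--L\'{e}vy theorem is applicable, and (ii) checking that all cutoffs and the partition of unity genuinely lie in $\mathcal{F}(L^{1}(\mathbb{R}))$, which holds because smooth compactly supported functions are Schwartz and hence Fourier transforms of $L^{1}$ functions.
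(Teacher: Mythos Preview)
Your argument is correct. The paper does not actually prove this proposition; it simply quotes it as Theorem~229 on p.~290 of Hardy's \emph{Divergent Series} \cite{Hardy}. What you have supplied is precisely the standard proof of this local Wiener theorem: build a local inverse of $g$ near each point of $[-\lambda,\lambda]$ by truncating and applying Wiener--L\'evy, then glue the local inverses with a smooth partition of unity over the compact support of $f$, and finally multiply by $f$ using Remark~\ref{remprod}. Your care in applying Wiener--L\'evy to $z\mapsto (g(x_0)+z)^{-1}-g(x_0)^{-1}$ rather than to $z\mapsto (g(x_0)+z)^{-1}$ is exactly right: since $\mathcal{F}(L^{1}(\mathbb{R}))\subset C_{0}(\mathbb{R})$ is non-unital, the composed function must vanish at infinity, so the analytic map must send $0$ to $0$. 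The only cosmetic point worth adding is that your bump functions $\phi_{x_0},\eta_{x_0},\chi_i$ should be taken with $0\le\phi_{x_0}\le 1$ so that the bound $|\gamma_{x_0}(x)|<|g(x_0)|$ holds on all of $\mathbb{R}$; you use this implicitly and it is harmless.
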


We have now all the necessary ingredients to prove (\ref{FourierSM}).

\begin{proof}
[Proof of Theorem \ref{FourierL1}]We depart from the definition
(\ref{Scatteringmatrix}) of the scattering matrix $S\left(  k\right)  .$ Let
$\chi\in C^{\infty}\left(  \mathbb{R}\right)  ,$ $0\leq\chi\leq1,$ be such
that $\chi\left(  k\right)  =1,$ for $\left\vert k\right\vert \leq1$ and
$\chi\left(  k\right)  =0,$ for $\left\vert k\right\vert \geq2$. For $a>0,$ we
set $\chi_{a}\left(  k\right)  :=\chi\left(  \frac{k}{a}\right)  ,$
$k\in\mathbb{R}$. We consider $S\left(  -k\right)  .$ Using
(\ref{Scatteringmatrix1}) we decompose
\begin{equation}
S\left(  -k\right)  -S_{\infty}^{\dagger}=S_{1}\left(  k\right)  +S_{2}\left(
k\right)  ,\label{PSM17}%
\end{equation}
where%
\begin{align}
S_{1}\left(  k\right)   &  :=-\chi_{a}\left(  k\right)  \left(  \left(
J\left(  k\right)  ^{\dagger}\right)  ^{-1}\chi_{2a}\left(  k\right)  J\left(
-k\right)  ^{\dagger}+S_{\infty}^{\dagger}\right)  ,\label{S1}\\
S_{2}\left(  k\right)   &  :=-\left(  1-\chi_{a}\left(  k\right)  \right)
\left(  \left(  J\left(  k\right)  ^{\dagger}\right)  ^{-1}J\left(  -k\right)
^{\dagger}+S_{\infty}^{\dagger}\right)  .\label{S2}%
\end{align}
Using (\ref{D}) we write%
\begin{align*}
S_{1}\left(  k\right)   &  =-\chi_{a}\left(  k\right)  \left(  \left(
D\left(  k\right)  \right)  ^{-1}\left(  I-P_{0}+\frac{1}{ik}P_{0}\right)
\left(  I-P_{0}-\frac{1}{ik}P_{0}\right)  ^{-1}\chi_{2a}\left(  k\right)
D\left(  -k\right)  +S_{\infty}^{\dagger}\right)  \\
&  =-\chi_{a}\left(  k\right)  \left(  \left(  D\left(  k\right)  \right)
^{-1}\left(  I-2P_{0}\right)  \chi_{2a}\left(  k\right)  D\left(  -k\right)
+S_{\infty}^{\dagger}\right)  .
\end{align*}
Observe that
\begin{equation}
\mathcal{F}^{-1}\chi_{a},\mathcal{F}^{-1}\left(  k\chi_{a}\left(  k\right)
\right)  \in L^{1}\cap L^{2},\text{ }a>0.\label{PSM12}%
\end{equation}
Moreover by Remark \ref{remprod} we see that $\mathcal{F}\left(  L^{1}\left(
\mathbb{R}\right)  \right)  $ is closed by products. Therefore, it follows
from (\ref{ESTK}), (\ref{ESTK1}) and (\ref{Josttranspose}) that
\begin{equation}
\chi_{a}\left(  k\right)  J\left(  k\right)  ^{\dagger}\in\mathcal{F}\left(
L^{1}\left(  \mathbb{R}\right)  \right)  ,\text{ }a>0.\label{PSM11}%
\end{equation}
Hence, from Proposition \ref{PropMarchenko} we show that all the elements of
the matrix $\chi_{2a}\left(  k\right)  D\left(  -k\right)  $ belong to
$\mathcal{F}\left(  L^{1}\left(  \mathbb{R}\right)  \right)  .$ On the other
hand, all the entries of the matrix $\chi_{a}\left(  k\right)  \left(
D\left(  k\right)  \right)  ^{-1}$ can be represented as
\[
\frac{\chi_{a}\left(  k\right)  L\left(  k\right)  }{\det D\left(  k\right)
},
\]
with $L\in\mathcal{F}\left(  L^{1}\left(  \mathbb{R}\right)  \right)  .$ Due
to the cut-off function $\chi_{a},$ we express the last relation as
\[
\frac{\chi_{a}\left(  k\right)  L\left(  k\right)  }{\det D\left(  k\right)
}=\frac{\chi_{a}\left(  k\right)  L\left(  k\right)  }{\chi_{3a}\left(
k\right)  \det D\left(  k\right)  +\left(  1-\chi_{3a}\left(  k\right)
\right)  G\left(  k\right)  },
\]
with any $G\in\mathcal{F}\left(  L^{1}\left(  \mathbb{R}\right)  \right)  ,$
$G\left(  k\right)  \neq0,$ for all $k\in\mathbb{R}$ (for example, taking any
non-vanishing function from the Schwartz class). By Proposition
\ref{PropMarchenko} and (\ref{PSM11}), $\chi_{3a}\left(  k\right)  \det
D\left(  k\right)  \in\mathcal{F}\left(  L^{1}\left(  \mathbb{R}\right)
\right)  .$ By (\ref{PSM9}), $\det D\left(  k\right)  \neq0,$ for all
$k\in\mathbb{R}$. Then, on the support of $\chi_{3a},$ the function $\chi
_{3a}\left(  k\right)  \det D\left(  k\right)  $ has a definite sign. We take
$G$ with a definite sign such that $\chi_{3a}\left(  k\right)  \det D\left(
k\right)  +\left(  1-\chi_{3a}\left(  k\right)  \right)  G\left(  k\right)
\neq0,$ for all $k\in\mathbb{R}$. Then, each element of the matrix $\chi
_{a}\left(  k\right)  \left(  D\left(  k\right)  \right)  ^{-1}$ is of the
form $\chi_{a}\left(  k\right)  \frac{f}{g},$ with $f,g\in\mathcal{F}\left(
L^{1}\left(  \mathbb{R}\right)  \right)  ,$ such that $g\left(  k\right)
\neq0$, for all $k\in\mathbb{R}$, and $f\left(  k\right)  =0,$ for all
$\left\vert k\right\vert \geq2a.$ By Proposition \ref{Hardy}, $\frac{f}{g}%
\in\mathcal{F}\left(  L^{1}\left(  \mathbb{R}\right)  \right)  .$ Therefore,
by (\ref{PSM12}) we conclude that all the elements of $\chi_{a}\left(
k\right)  \left(  D\left(  k\right)  \right)  ^{-1}$ belong to $\mathcal{F}%
\left(  L^{1}\left(  \mathbb{R}\right)  \right)  .$ Since also by
(\ref{PSM12}), the entries of $\chi_{a}\left(  k\right)  S_{\infty}^{\dagger}$
are functions in $\mathcal{F}\left(  L^{1}\left(  \mathbb{R}\right)  \right)
,$ we conclude that $S_{1}\left(  k\right)  $ is a matrix which elements can
be represented as Fourier transform of functions in $L^{1}(\mathbb{R}).$ Next,
we consider $S_{2}\left(  k\right)  .$ We put $a>2$ in (\ref{S2}). Using
(\ref{Josttranspose}) we have
\[
J\left(  k\right)  ^{\dagger}=\left(  B^{\dagger}+ikA^{\dagger}\right)
\left(  I+\left(  B^{\dagger}+ikA^{\dagger}\right)  ^{-1}\left(  A^{\dagger
}K\left(  0,0\right)  +G_{1}\left(  k\right)  \right)  \right)  ,
\]
where the elements of the matrix $G_{1}$ belong to $\mathcal{F}\left(
L^{1}\left(  \mathbb{R}\right)  \right)  .$ By (\ref{J0-1})%
\begin{equation}
\left(  B^{\dagger}+ikA^{\dagger}\right)  ^{-1}=M\left(  \tilde{J}_{0}\left(
k\right)  ^{\dagger}\right)  ^{-1}\left(  T_{1}^{\dagger}\right)  ^{-1}%
M^{\dag}\left(  T_{2}^{\dagger}\right)  ^{-1},\label{PSM15}%
\end{equation}
where $\tilde{J}_{0}^{-1}\left(  k\right)  $ is given by the diagonal matrix
(\ref{Jtildeinverse}). Then, as by Proposition \ref{Jostnozero} $J\left(
k\right)  ^{\dagger}$ is invertible for $k\neq0,$ we see that
\begin{equation}
\det\left(  I+\left(  B^{\dagger}+ikA^{\dagger}\right)  ^{-1}\left(
A^{\dagger}K\left(  0,0\right)  +G_{1}\left(  k\right)  \right)  \right)
\neq0,\text{ }k\in\mathbb{R\setminus}\{0\}.\label{nozero}%
\end{equation}
Therefore, using (\ref{w00}) and (\ref{PSM15}) we decompose
\begin{align}
&  -\left(  1-\chi_{a}\left(  k\right)  \right)  \left(  J\left(  k\right)
^{\dagger}\right)  ^{-1}J\left(  -k\right)  ^{\dagger}\nonumber\\
&  =\left(  1-\chi_{a}\left(  k\right)  \right)  \left(  I+MJ_{\chi}\left(
k\right)  \left(  T_{1}^{\dagger}\right)  ^{-1}M^{\dag}\left(  T_{2}^{\dagger
}\right)  ^{-1}\left(  A^{\dagger}K\left(  0,0\right)  +G_{1}\left(  k\right)
\right)  \right)  ^{-1}\nonumber\\
&  \times S_{0}\left(  k\right)  ^{\dagger}\left(  I+MJ_{\chi}\left(
-k\right)  \left(  T_{1}^{\dagger}\right)  ^{-1}M^{\dag}\left(  T_{2}%
^{\dagger}\right)  ^{-1}\left(  A^{\dagger}K\left(  0,0\right)  +G_{1}\left(
-k\right)  \right)  \right),\label{PSM14}%
\end{align}
with%
\[
J_{\chi}\left(  k\right)  =\operatorname*{diag}\left[  \left(  \cos\theta
_{1}-ik\sin\theta_{1}\right)  ^{-1},...,\left(  \cos\theta_{n_{M}}%
-ik\sin\theta_{n_{M}}\right)  ^{-1},-I_{n_{D}},-\left(  ik\right)
^{-1}\left(  1-\chi\left(  k\right)  \right)  I_{n_{N}}\right]  ,
\]
where we can introduce the functions $1-\chi\left(  k\right)  $ in the entries
of $\left(  \tilde{J}_{0}\left(  k\right)  ^{\dagger}\right)  ^{-1}$
corresponding to the Neumann boundary conditions without modifying the
equality thanks to the cut-off function $1-\chi_{a}\left(  k\right)  $ (we put
$a>2$). We now observe that
\begin{equation}
\mathcal{F}^{-1}\left(  \cos\theta_{j}-ik\sin\theta_{j}\right)  ^{-1}\in
L^{1}(\mathbb{R})\text{, }j=1,...,n_{M},\label{eq16}%
\end{equation}
and%
\begin{equation}
\mathcal{F}^{-1}\left(  \frac{1-\chi\left(  k\right)  }{k}\right)  \in
L^{1}(\mathbb{R}),\label{eq17}%
\end{equation}
where the first relation is due to the Jordan lemma and contour integration
and the second one follows by integration by parts. Then, we show that the
entries of $MJ_{\chi}\left(  k\right)  \left(  T_{1}^{\dagger}\right)
^{-1}M^{\dag}\left(  T_{2}^{\dagger}\right)  ^{-1}G_{1}\left(  k\right)  $
belong to $\mathcal{F}\left(  L^{1}\left(  \mathbb{R}\right)  \right)  $.
Using (\ref{trans}) we calculate%
\[
MJ_{\chi}\left(  k\right)  \left(  T_{1}^{\dagger}\right)  ^{-1}M^{\dag
}\left(  T_{2}^{\dagger}\right)  ^{-1}A^{\dagger}K\left(  0,0\right)
=MJ_{\chi}\left(  k\right)  \tilde{A}M^{\dagger}K\left(  0,0\right)  .
\]
Since%
\[
J_{\chi}\left(  k\right)  \tilde{A}=\operatorname*{diag}\left[  \left(
\cos\theta_{1}-ik\sin\theta_{1}\right)  ^{-1},...,\left(  \cos\theta_{n_{M}%
}-ik\sin\theta_{n_{M}}\right)  ^{-1},0,-\left(  ik\right)  ^{-1}\left(
1-\chi\left(  k\right)  \right)  I_{n_{N}}\right]  ,
\]
it also follows from (\ref{eq16}) and (\ref{eq17}) that the elements of
$MJ_{\chi}\left(  k\right)  \left(  T_{1}^{\dagger}\right)  ^{-1}M^{\dag
}\left(  T_{2}^{\dagger}\right)  ^{-1}A^{\dagger}K\left(  0,0\right)  $ belong
to $\mathcal{F}\left(  L^{1}\left(  \mathbb{R}\right)  \right)  .$ Thus, we
can write
\[
I+MJ_{\chi}\left(  k\right)  \left(  T_{1}^{\dagger}\right)  ^{-1}M^{\dag
}\left(  T_{2}^{\dagger}\right)  ^{-1}\left(  A^{\dagger}K\left(  0,0\right)
+G_{1}\left(  k\right)  \right)  =I+G_{2}\left(  k\right)  ,
\]
where the elements of $G_{2}\ $are in $\mathcal{F}\left(  L^{1}\left(
\mathbb{R}\right)  \right)  .$ Then, from (\ref{PSM14}) we get%
\begin{equation}
-\left(  1-\chi_{a}\left(  k\right)  \right)  \left(  J\left(  k\right)
^{\dagger}\right)  ^{-1}J\left(  -k\right)  ^{\dagger}=\left(  1-\chi
_{a}\left(  k\right)  \right)  \left(  I+G_{2}\left(  k\right)  \right)
^{-1}S_{0}\left(  k\right)  ^{\dagger}\left(  I+G_{2}\left(  -k\right)
\right)  .\label{PSM19}%
\end{equation}
Since $\mathcal{F}\left(  L^{1}\left(  \mathbb{R}\right)  \right)  $ is closed
by products, we write
\begin{equation}
\det\left(  I+G_{2}\left(  k\right)  \right)  =1+g_{2}\left(  k\right)
,\label{eq18}%
\end{equation}
with $g_{2}\in\mathcal{F}\left(  L^{1}\left(  \mathbb{R}\right)  \right)  .$
We observe that on the support of $1-\chi_{a}\left(  k\right)  $ we can
represent%
\begin{equation}
1+g_{2}\left(  k\right)  =1+\left(  1-\chi_{a/2}\left(  k\right)  \right)
g_{2}\left(  k\right)  .\label{eq19}%
\end{equation}
By Riemann--Lebesgue lemma $g_{2}\left(  k\right)  \rightarrow0,$ as
$\left\vert k\right\vert \rightarrow\infty.$ Then, we can take $a>2$
sufficiently large in a way that $1+\left(  1-\chi_{a/2}\left(  k\right)
\right)  g_{2}\left(  k\right)  \neq0,$ for all $k\in\mathbb{R}$. By
(\ref{PSM12}) $\left(  1-\chi_{a/2}\right)  g_{2}\in\mathcal{F}\left(
L^{1}\left(  \mathbb{R}\right)  \right)  .$ Then, by Corollary
\ref{Wiener-Levy} we show that
\begin{equation}
\left(  1+\left(  1-\chi_{a/2}\left(  k\right)  \right)  g_{2}\left(
k\right)  \right)  ^{-1}=1+g_{3}\left(  k\right)  ,\label{eq20}%
\end{equation}
where $g_{3}\in\mathcal{F}\left(  L^{1}\left(  \mathbb{R}\right)  \right)  .$
Hence, from (\ref{eq18}), (\ref{eq19}) and (\ref{eq20}) it follows that%
\[
\frac{1}{\det\left(  I+G_{2}\left(  k\right)  \right)  }=1+g_{3}\left(
k\right)  ,
\]
for all $\left\vert k\right\vert \geq a.$ Using the last expression in
(\ref{PSM19}) we get
\begin{equation}
-\left(  1-\chi_{a}\left(  k\right)  \right)  \left(  J\left(  k\right)
^{\dagger}\right)  ^{-1}J\left(  -k\right)  ^{\dagger}=\left(  1-\chi
_{a}\left(  k\right)  \right)  \left(  I+G_{3}\left(  k\right)  \right)
S_{0}\left(  k\right)  ^{\dagger}\left(  I+G_{2}\left(  -k\right)  \right)
,\label{PSM16}%
\end{equation}
where the elements of $G_{3}\ $are in $\mathcal{F}\left(  L^{1}\left(
\mathbb{R}\right)  \right)  .$ From (\ref{Stilde}), (\ref{trs}), via
(\ref{eq16}) and (\ref{eq17}) we show that all the elements of the matrix
$\left(  1-\chi_{a}\left(  k\right)  \right)  \left(  S_{0}\left(  k\right)
^{\dagger}-S_{\infty}^{\dagger}\right)  $ belong to $\mathcal{F}\left(
L^{1}\left(  \mathbb{R}\right)  \right)  .$ Therefore, from (\ref{PSM16}) we
conclude that the entries of $S_{2}\left(  k\right)  $ are in $\mathcal{F}%
\left(  L^{1}\left(  \mathbb{R}\right)  \right)  .$ Finally, from
(\ref{PSM17}) we obtain (\ref{FourierSM}).
\end{proof}

\section{The $L^{p}-L^{p^{\prime}}$ estimate for the matrix Schr\"{o}dinger
equation.\label{Lp-Lq}}

\bigskip\ This section is devoted to the proof of the $L^{p}-L^{p^{\prime}}$
and Strichartz estimates for the matrix Schr\"{o}dinger equation. We begin by
proving the following $L^{1}-L^{\infty}$ estimate.

\begin{proposition}
\label{PropL1}Suppose that the potential $V$ satisfies
(\ref{PotentialHermitian}) and (\ref{PotentialL11}). Then, the estimates
\begin{equation}
\left\Vert e^{-itH}P_{\operatorname*{c}}\right\Vert _{\mathcal{B}\left(
L^{1},L^{\infty}\right)  }\leq\frac{C}{\sqrt{\left\vert t\right\vert }},
\label{L1-Linf}%
\end{equation}
and%
\begin{equation}
\left\Vert e^{-itH}P_{\operatorname*{c}}\right\Vert _{\mathcal{B}\left(
W_{1,1},W_{1,\infty}\right)  }\leq\frac{C}{\sqrt{\left\vert t\right\vert }},
\label{L1-LinfSobolev}%
\end{equation}
are true for all $t\in\mathbb{R\setminus\{}0\}.$
\end{proposition}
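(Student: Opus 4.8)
The plan is to read both estimates off the kernel representation (\ref{eq4})--(\ref{eq4.a}) of $e^{-itH}P_{\operatorname*{c}}$, reducing everything to scalar oscillatory integrals of Fresnel type. Two elementary facts do the analytic work: (a) $\bigl|\int_{\mathbb{R}}e^{-itk^{2}}e^{ik\xi}\,dk\bigr|\leq C|t|^{-1/2}$ with $C$ independent of $\xi\in\mathbb{R}$ (an explicit Fresnel integral); and (b) consequently, writing $m=\mathcal{F}h$ with $h\in L^{1}(\mathbb{R})$ and using Fubini, $\bigl|\int_{\mathbb{R}}e^{-itk^{2}}e^{ik\xi}m(k)\,dk\bigr|\leq C|t|^{-1/2}\|h\|_{L^{1}}$, with $C$ still independent of $\xi$. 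Throughout, these $k$-integrals are oscillatory integrals, which one makes literal by inserting a Gaussian regulariser $e^{-\varepsilon k^{2}}$ and letting $\varepsilon\downarrow 0$.

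First I would insert the Jost representation $f(\pm k,x)=e^{\pm ikx}I+\int_{x}^{\infty}e^{\pm iky'}K(x,y')\,dy'$ from (\ref{jostK}) into (\ref{eq4.a}) and split $S(k)=S_{\infty}+(S(k)-S_{\infty})$. Expanding the products $f(-k,x)f(-k,y)^{\dagger}$ and $f(k,x)S(k)f(-k,y)^{\dagger}$ makes $\mathcal{T}(x,y)$ a finite sum of terms $\int_{\mathbb{R}}e^{-itk^{2}}e^{ik\xi}m(k)\,dk$, where: the phase $\xi$ is an affine combination, with $\pm1$ coefficients, of $x$, $y$ and at most two dummy variables running over $[x,\infty)$ and $[y,\infty)$; the amplitude $m(k)$ is $I$, or the constant $S_{\infty}$, or $S(k)-S_{\infty}$; and the summand carries, against those dummy variables, the kernel $K(x,\cdot)$, $K(y,\cdot)^{\dagger}$, or a product of two such. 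By Theorem \ref{FourierL1}, $S(k)-S_{\infty}=\int_{\mathbb{R}}e^{-iks}F_{s}(s)\,ds$ with $F_{s}\in L^{1}(\mathbb{R})$, so every $m$ lies in $\mathcal{F}(L^{1}(\mathbb{R}))$; and by (\ref{ESTK})--(\ref{ESTK1}) together with $\sigma(0),\sigma_{1}(0)<\infty$ one gets $\sup_{x\geq0}\int_{0}^{\infty}|K(x,y')|\,dy'<\infty$. Hence, carrying out the dummy- and $s$-integrals and applying (a)--(b) termwise gives $\sup_{x,y\geq0}|\mathcal{T}(x,y)|\leq C|t|^{-1/2}$, which is (\ref{L1-Linf}) for $\psi$ in a dense subspace of $L^{1}$ on which (\ref{eq4}) holds, and then for all $\psi\in L^{1}$ by density.

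For (\ref{L1-LinfSobolev}) I would differentiate in $x$: $\partial_{x}(e^{-itH}P_{\operatorname*{c}}\psi)(x)=(2\pi)^{-1}\int_{0}^{\infty}\partial_{x}\mathcal{T}(x,y)\psi(y)\,dy$, using $\partial_{x}f(\pm k,x)=\pm ike^{\pm ikx}I-e^{\pm ikx}K(x,x)+\int_{x}^{\infty}e^{\pm iky'}\partial_{x}K(x,y')\,dy'$. The summands of $\partial_{x}\mathcal{T}$ not containing the factor $\pm ik$ are of exactly the same type as those of $\mathcal{T}$, now with $K(x,x)$ or $\partial_{x}K(x,\cdot)$ in place of $K(x,\cdot)$; since $\sup_{x}|K(x,x)|<\infty$ and $\sup_{x}\int_{0}^{\infty}|\partial_{x}K(x,y')|\,dy'<\infty$ (again from (\ref{ESTK}), (\ref{ESTK1}) and Proposition \ref{PK}), these contribute $\leq C|t|^{-1/2}\|\psi\|_{L^{1}}$. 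The only new terms are those in which $\partial_{x}$ lands on the free exponential $e^{\pm ikx}$, producing an unwanted growing factor $k$. In each such term the phase $\xi$ is affine with a $+1$ coefficient in the variable $y$ (or in the dummy variable $z\in[y,\infty)$ that carries the factor $K(y,z)^{\dagger}$ of $f(-k,y)^{\dagger}$), so $ke^{ik\xi}=-i\,\partial_{y}e^{ik\xi}$ (resp.\ $-i\,\partial_{z}e^{ik\xi}$). Integrating by parts in $y$ transfers the derivative onto $\psi$, giving $\int_{0}^{\infty}(\cdots)\psi'(y)\,dy$ plus a boundary term at $y=0$ proportional to $\psi(0)$; since $|\psi(0)|\leq\|\psi'\|_{L^{1}}$ for $\psi$ in the dense subspace, both are $\leq C|t|^{-1/2}\|\psi\|_{W_{1,1}}$. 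Integrating by parts in $z$ transfers the derivative onto $K(y,z)$, producing a $\partial_{z}K$ term (bounded as above) plus a boundary term $K(y,y)$ at $z=y$ (the $z=\infty$ endpoint vanishes by (\ref{ESTK})), again $\leq C|t|^{-1/2}\|\psi\|_{L^{1}}$. Summing the finitely many contributions yields (\ref{L1-LinfSobolev}) on the dense subspace, hence in general by density.

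I expect the main difficulty to be organizational rather than conceptual: the product $f(k,x)S(k)f(-k,y)^{\dagger}$, once each factor is expanded and $S$ is split, produces many terms, and for each one must check (i) that the relevant $K$-kernel is integrable in the dummy variable uniformly in $x,y$ --- which uses the $L^{1}_{1}$ hypothesis (\ref{PotentialL11}) precisely through $\sigma_{1}(0)<\infty$ --- and (ii) in the derivative estimate, that the single factor of $k$ is absorbed into a spatial partial derivative, so that $S(k)-S_{\infty}$ is needed only in $\mathcal{F}(L^{1}(\mathbb{R}))$ and no further regularity of $F_{s}$ enters. No cancellation or stationary-phase refinement is required: the decay is simply the $|t|^{-1/2}$ of the Fresnel integral, which is why the estimates are insensitive to whether $J(0)$ is invertible, that is, hold identically in the generic and exceptional cases.
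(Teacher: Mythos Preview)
Your proposal is correct and follows essentially the same approach as the paper: expand $\mathcal{T}(x,y)$ via (\ref{jostK}) and the splitting $S(k)=S_{\infty}+(S(k)-S_{\infty})$, bound each piece by the Fresnel integral together with (\ref{ESTK})--(\ref{ESTK1}) and Theorem~\ref{FourierL1}, and for the $W_{1,1}\to W_{1,\infty}$ estimate convert the extra factor of $k$ into a $y$- (or dummy-)derivative and integrate by parts. The paper carries this out by explicitly naming the six groups of terms $I_{0},\dots,I_{5}$ and passing to the physical-space Fresnel kernel $e^{iz^{2}/4t}$ via the convolution theorem, whereas you keep the $k$-integral and argue uniformly, but the analytic content is identical.
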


\begin{proof}
We depart from the spectral representation (\ref{eq4},\ref{eq4.a}) for
$e^{-itH}P_{\operatorname*{c}}.$ We decompose $\mathcal{T}\left(  x,y\right)
$ as follows%
\begin{equation}
\mathcal{T}\left(  x,y\right)  =\sum_{j=0}^{5}\mathcal T_{j}(x,y), \label{T}%
\end{equation}
with%
\[
\mathcal T_{0}:=\int_{-\infty}^{\infty}e^{-itk^{2}}e^{-ik\left(  x-y\right)
}dk+S_{\infty}\int_{-\infty}^{\infty}e^{-itk^{2}}e^{ik\left(  x+y\right)
}dk,
\]%
\begin{align*}
\mathcal T_{1}  &  :=\int_{-\infty}^{\infty}e^{-itk^{2}}\left(  e^{-ikx}d\left(
-k,y\right)  ^{\dagger}+e^{-iky}d\left(  k,x\right)  \right)  dk\\
&  +\int_{-\infty}^{\infty}e^{-itk^{2}}\left(  e^{-ikx}S_{\infty}d\left(
k,y\right)  ^{\dagger}+e^{-iky}d\left(  -k,x\right)  S_{\infty}\right)  dk,
\end{align*}%
\[
\mathcal T_{2}:=\int_{-\infty}^{\infty}e^{-itk^{2}}\left(  d\left(  -k,x\right)
d\left(  -k,y\right)  ^{\dagger}+d\left(  k,x\right)  S_{\infty}d\left(
-k,y\right)  ^{\dagger}\right)  dk,
\]%
\[
\mathcal T_{3}:=\int_{-\infty}^{\infty}e^{-itk^{2}}e^{ik\left(  x+y\right)  }T\left(
k\right)  dk,
\]%
\[
\mathcal T_{4}:=\int_{-\infty}^{\infty}e^{-itk^{2}}\left(  e^{ikx}T\left(  k\right)
d\left(  -k,y\right)  ^{\dagger}+e^{iky}d\left(  k,x\right)  T\left(
k\right)  \right)  dk,
\]%
\[
\mathcal T_{5}:=\int_{-\infty}^{\infty}e^{-itk^{2}}d\left(  k,x\right)  T\left(
k\right)  d\left(  -k,y\right)  ^{\dagger}dk,
\]
where we denote,
\begin{equation}
d(k,x):=f(k,x)-e^{ikx}, \label{dw1}%
\end{equation}
and
\begin{equation}
T(k):=S(k)-S_{\infty}. \label{tw1}%
\end{equation}
Recall that,
\begin{equation}
\frac{1}{2\pi}\int_{-\infty}^{\infty}e^{-itk^{2}}e^{-ikz}dk=\frac
{e^{iz^{2}/4t}}{\sqrt{4\pi it}}, \label{eq5}%
\end{equation}
with the Fourier transform understood in the sense of distributions. Using
(\ref{eq5}) we have that%
\begin{equation}
\mathcal T_{0}=\sqrt{\frac{\pi}{it}}\left(  e^{i\left(  x-y\right)  ^{2}/4t}%
+e^{i\left(  x+y\right)  ^{2}/4t}S_{\infty}\right)  . \label{wi0}%
\end{equation}
We observe that $\mathcal T_{0}+\mathcal T_{3}$ corresponds to the free evolution $V\equiv0.$
Moreover, if in the diagonal representation (\ref{A,Btilde}) there are only
Dirichlet and Neumann boundary conditions, and $V\equiv0$ it follows from
(\ref{w00}) that $T\equiv0$ and then, $\mathcal T_{3}\equiv0$ in this case. By
(\ref{ESTK}), for fixed $x\geq0,$ $K\left(  x,\cdot\right)  \in L^{2}\left(
x,\infty\right)  .$ Then, using (\ref{jostK}) we get
\[
d\left(  k,x\right)  =\int_{-\infty}^{\infty}e^{ikz}K\left(  x,z\right)  dz,
\]
(recall that $K\left(  x,z\right)  =0,$ for $z<x).$ Hence, by the convolution
theorem for the Fourier transform and (\ref{eq5}) we obtain%
\begin{align}
\mathcal T_{1}  &  =\frac{\left(  2\pi\right)  }{\sqrt{4\pi it}}\left(  \int
_{y}^{\infty}e^{i\left(  x-z\right)  ^{2}/4t}K\left(  y,z\right)  ^{\dagger
}dz+\int_{x}^{\infty}e^{i\left(  y-z\right)  ^{2}/4t}K\left(  x,z\right)
dz\right. \nonumber\\
&  \left.  +\int_{x}^{\infty}e^{i\left(  y+z\right)  ^{2}/4t}K\left(
x,z\right)  S_{\infty}dz+\int_{y}^{\infty}e^{i\left(  x+z\right)  ^{2}%
/4t}S_{\infty}K\left(  y,z\right)  ^{\dagger}dz\right)  . \label{eq6}%
\end{align}
Hence, from (\ref{ESTK}) it follows that%
\begin{equation}
\left\vert \mathcal T_{1}\right\vert \leq C\frac{1}{\sqrt{|t|}}. \label{wi1}%
\end{equation}
Moreover, differentiating (\ref{eq6}) with respect to $x$, noting that
$\partial_{x}e^{i\left(  x\pm z\right)  ^{2}/4t}=\pm\partial_{z}e^{i\left(
x\pm z\right)  ^{2}/4t}$ and using (\ref{ESTK}), (\ref{ESTK1}), we prove that%
\begin{equation}
\left\vert \partial_{x}\mathcal T_{1}\right\vert \leq C\frac{1}{\sqrt{|t|}}.
\label{wi2}%
\end{equation}
Next, we consider $\mathcal T_{2}.$ By Parseval's identity and the convolution theorem,
via (\ref{eq5}) we get%
\begin{align*}
\mathcal T_{2}  &  =\frac{2\pi}{\sqrt{4\pi it}}\left(  \int_{-\infty}^{\infty}%
e^{iz_{1}^{2}/4t}K\left(  x,-z_{2}\right)  K\left(  y,z_{1}-z_{2}\right)
^{\dagger}dz_{2}dz_{1}\right. \\
&  +\left.  \int_{-\infty}^{\infty}e^{iz_{1}^{2}/4t}K\left(  x,z_{2}\right)
S_{\infty}K\left(  y,z_{1}-z_{2}\right)  ^{\dagger}dz_{2}dz_{1}\right)  .
\end{align*}
Then, by (\ref{ESTK}) and (\ref{ESTK1}) we get%
\begin{equation}
\left\vert \partial_{x}^{j}\mathcal T_{2}\right\vert \leq C\frac{1}{\sqrt{|t|}},\text{
\ }j=0,1. \label{wi3}%
\end{equation}
Next, we consider $\mathcal T_{3}.$ By the convolution theorem we have
\begin{equation}
\mathcal T_{3}=\frac{2\pi}{\sqrt{4\pi it}}\int_{-\infty}^{\infty}e^{i\left(
x+y-z\right)  ^{2}/4t}F_{s}\left(  z\right)  dz, \label{wi4}%
\end{equation}
where $F_{s}$ is given by (\ref{Fourierscattering}). Further, by
(\ref{FourierSM}) we prove that%
\begin{equation}
\left\vert \mathcal T_{3}\right\vert \leq C\frac{1}{\sqrt{|t|}}. \label{wi5}%
\end{equation}
Denote by $\mathcal{I}_{3}$ the integral operator from $L^{1}(\mathbb{R}^{+})$
into $L^{\infty}(\mathbb{R}^{+})$ with integral kernel $\mathcal T_{3}(x,y),$
\begin{equation}
(\mathcal{I}_{3}\phi)(x):=\int_{0}^{\infty}\,\mathcal T_{3}(x,y)\,\phi(y)\,dy.
\label{I3}%
\end{equation}
By (\ref{wi5})
\begin{equation}
\Vert\mathcal{I}_{3}\phi\Vert_{L^{\infty}}\leq C\frac{1}{\sqrt{|t|}}\Vert
\phi\Vert_{L^{1}}. \label{wi6}%
\end{equation}
Moreover, derivating the right-hand side of (\ref{I3}), using (\ref{wi4}),
noting that $\partial_{x}e^{i\left(  x+y-z\right)  ^{2}/4t}=\partial
_{y}e^{i\left(  x+y-z\right)  ^{2}/4t},$ integrating by parts in $y$ and using
(\ref{FourierSM}), by Sobolev embedding theorem we show that%
\begin{equation}
\left\Vert \partial_{x}\mathcal{I}_{3}\phi\right\Vert _{L^{\infty}}\leq
C\frac{1}{\sqrt{|t|}}\left(  \left\vert \phi\left(  0\right)  \right\vert
+\left\Vert \phi^{\prime}\right\Vert _{L^{1}}\right)  \leq C\frac{1}%
{\sqrt{|t|}}\left\Vert \phi\right\Vert _{W_{1,1}}. \label{wi7}%
\end{equation}
Further, by (\ref{wi6}) and (\ref{wi7}),
\begin{equation}
\left\Vert \mathcal{I}_{3}\right\Vert _{\mathcal{B}(W_{1,1},W_{1,\infty})}\leq
C\frac{1}{\sqrt{|t|}}. \label{wi8}%
\end{equation}
Next, we turn to $\mathcal T_{4}.$ By the convolution theorem%
\begin{align}
\mathcal T_{4}  &  =\frac{2\pi}{\sqrt{4\pi it}}\int_{-\infty}^{\infty}e^{i\left(
x-z\right)  ^{2}/4t}\int_{-\infty}^{\infty}F_{s}\left(  z-z_{1}\right)
K\left(  y,-z_{1}\right)  ^{\dagger}dz_{1}dz\nonumber\\
&  +\frac{2\pi}{\sqrt{4\pi it}}\int_{-\infty}^{\infty}e^{i\left(  y-z\right)
^{2}/4t}\int_{-\infty}^{\infty}K\left(  x,-z_{1}\right)  F_{s}\left(
z-z_{1}\right)  dz_{1}dz. \label{I4}%
\end{align}
Then, using (\ref{ESTK}) and (\ref{FourierSM}) we prove that%
\begin{equation}
\left\vert \mathcal T_{4}\right\vert \leq C\frac{1}{\sqrt{|t|}}. \label{wi9}%
\end{equation}
Moreover, differentiating (\ref{I4}) with respect to $x,$ noting that the
following identities are true $\partial_{x}e^{i\left(  x-z\right)  ^{2}%
/4t}=-\partial_{z}e^{i\left(  x-z\right)  ^{2}/4t}$ and $\partial_{z}%
F_{s}\left(  z-z_{1}\right)  =-\partial_{z_{1}}F_{s}\left(  z-z_{1}\right)  ,$
integrating by parts first in $z$ and then in $z_{1}$ in order to make the
derivative act on $K,$ and using (\ref{ESTK}), (\ref{ESTK1}) and
(\ref{FourierSM}) we show%
\begin{equation}
\left\vert \partial_{x}\mathcal T_{4}\right\vert \leq C\frac{1}{\sqrt{|t|}}.
\label{wi10}%
\end{equation}
Finally, we look to $\mathcal T_{5}.$ Again, using Parseval's identity and the
convolution theorem we calculate
\[
\mathcal T_{5}=\frac{\left(  2\pi\right)  ^{2}}{\sqrt{4\pi it}}\int_{-\infty}^{\infty
}e^{iz^{2}/4t}\int_{-\infty}^{\infty}K\left(  x,-z_{2}\right)  \left(
\int_{-\infty}^{\infty}F_{s}\left(  z-z_{2}-z_{1}\right)  K\left(
y,-z_{1}\right)  ^{\dagger}dz_{1}\right)  dz_{2}dz.
\]
Then, using (\ref{ESTK}), (\ref{ESTK1}) and (\ref{FourierSM}) we show that
\begin{equation}
\left\vert \partial_{x}^{j}\mathcal T_{5}\right\vert \leq C\frac{1}{\sqrt{|t|}},\text{
\ }j=0,1. \label{wi11}%
\end{equation}
By means of (\ref{wi0}) and the estimates (\ref{wi1}, \ref{wi2}, \ref{wi3},
\ref{wi6}, \ref{wi8}, \ref{wi9}, \ref{wi10}, \ref{wi11}) we deduce from
(\ref{eq4}, \ref{eq4.a}), and (\ref{T}) that
\[
\left\Vert e^{-itH}P_{\operatorname*{c}}\psi\right\Vert _{L^{\infty}}\leq
\frac{C}{\sqrt{t}}\left\Vert \psi\right\Vert _{L^{1}}%
\]
and, moreover,%
\[
\left\Vert e^{-itH}P_{\operatorname*{c}}\psi\right\Vert _{W_{1,\infty}}%
\leq\frac{C}{\sqrt{t}}\left\Vert \psi\right\Vert _{W_{1,1}},
\]
for all $t\neq0.$
\end{proof}

Let us now prove the $L^{2}-L^{2}$ estimate for $e^{-itH}P_{\operatorname*{c}%
}.$

\begin{proposition}
Suppose that the potential $V$ satisfies (\ref{PotentialHermitian}) and
(\ref{PotentialL11}). Then, there is $C>0,$ such that%
\begin{equation}
\left\Vert e^{-itH}P_{\operatorname*{c}}\right\Vert _{\mathcal{B}\left(
L^{2}\right)  }\leq1, \label{eq11bis}%
\end{equation}
and%
\begin{equation}
\left\Vert e^{-itH}P_{\operatorname*{c}}\right\Vert _{\mathcal{B}\left(
W_{1,2}^{A,B}\right)  }\leq C, \label{eq11}%
\end{equation}
holds for all $t\in\mathbb{R}.$
\end{proposition}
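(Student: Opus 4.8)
The plan is to derive both inequalities directly from the spectral theorem for the self-adjoint operator $H$ and from the norm equivalence (\ref{equiv2}) already established on the form domain $W_{1,2}^{A,B}$; no new hard analysis is needed, so the argument is short.

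First I would dispose of (\ref{eq11bis}). Since $H$ is self-adjoint, $e^{-itH}$ is unitary on $L^{2}(\mathbb{R}^{+})$, and $P_{\operatorname*{c}}$ is an orthogonal projection; hence $\Vert e^{-itH}P_{\operatorname*{c}}\phi\Vert_{L^{2}}\le\Vert e^{-itH}\Vert_{\mathcal{B}(L^{2})}\Vert P_{\operatorname*{c}}\phi\Vert_{L^{2}}\le\Vert\phi\Vert_{L^{2}}$ for every $\phi\in L^{2}(\mathbb{R}^{+})$, which is (\ref{eq11bis}).

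Next I would prove (\ref{eq11}). Recall from (\ref{norm})--(\ref{equiv2}) that $\mathcal{H}=D[\sqrt{H+L}]$, normed by $\Vert\phi\Vert_{\mathcal{H}}=\Vert\sqrt{H+L}\,\phi\Vert_{L^{2}}$, coincides as a vector space with $W_{1,2}^{A,B}$ and carries an equivalent norm, with constants $C_{1},C_{2}$ as in (\ref{equiv2}). Since $e^{-itH}P_{\operatorname*{c}}$ is a bounded Borel function of $H$, it commutes with $\sqrt{H+L}$ and maps $D[\sqrt{H+L}]$ into itself; thus for $\phi\in W_{1,2}^{A,B}=\mathcal{H}$,
\[
\Vert e^{-itH}P_{\operatorname*{c}}\phi\Vert_{\mathcal{H}}=\Vert\sqrt{H+L}\,e^{-itH}P_{\operatorname*{c}}\phi\Vert_{L^{2}}=\Vert e^{-itH}P_{\operatorname*{c}}\,\sqrt{H+L}\,\phi\Vert_{L^{2}}\le\Vert\sqrt{H+L}\,\phi\Vert_{L^{2}}=\Vert\phi\Vert_{\mathcal{H}},
\]
where the inequality is (\ref{eq11bis}) applied to $\sqrt{H+L}\,\phi\in L^{2}$. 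Combining with (\ref{equiv2}) gives $\Vert e^{-itH}P_{\operatorname*{c}}\phi\Vert_{W_{1,2}^{A,B}}\le C_{1}^{-1}\Vert\phi\Vert_{\mathcal{H}}\le (C_{2}/C_{1})\Vert\phi\Vert_{W_{1,2}^{A,B}}$, i.e. (\ref{eq11}) with $C=C_{2}/C_{1}$.

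The only step that needs a word of justification --- and the closest thing here to an obstacle --- is the claim that $e^{-itH}P_{\operatorname*{c}}$ preserves $D[\sqrt{H+L}]$ and commutes there with $\sqrt{H+L}$: $e^{-itH}$ does so because it is a Borel function of $H$, while $P_{\operatorname*{c}}=I-P_{\operatorname*{p}}$ does so because $P_{\operatorname*{p}}$ is the finite-rank projector onto the (finitely many, by Proposition \ref{PropespectroH}) bound states of $H$, all of which lie in $D[H]\subseteq D[\sqrt{H+L}]$. Everything else is bookkeeping around the spectral theorem and the already-proven equivalence (\ref{equiv2}).
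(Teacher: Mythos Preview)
Your proof is correct and follows essentially the same route as the paper: unitarity of $e^{-itH}$ gives (\ref{eq11bis}), and the commutation $\sqrt{H+L}\,e^{-itH}P_{\operatorname*{c}}=e^{-itH}P_{\operatorname*{c}}\sqrt{H+L}$ together with the norm equivalence (\ref{equiv2}) gives (\ref{eq11}). Your extra paragraph justifying that $e^{-itH}P_{\operatorname*{c}}$ preserves $D[\sqrt{H+L}]$ is a welcome detail the paper leaves implicit.
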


\begin{proof}
Estimate (\ref{eq11bis}) is consequence of the unitarity of $e^{-itH}$ in
$L^{2}.$ Further, by (\ref{norm}) and since $\sqrt{H+L} e^{-itH}
P_{\operatorname*{c}} =e^{-itH} P_{\operatorname*{c}} \sqrt{H+L},$ we have
that,
\begin{equation}
\left\Vert e^{-itH} P_{\operatorname*{c}} \right\Vert _{\mathcal{B}\left(
\mathcal{H }\right)  }\leq C. \label{eq10}%
\end{equation}
Then, using (\ref{equiv2}) we obtain (\ref{eq11}).
\end{proof}

\begin{proof}
[Proof of Theorem \ref{Theorem1}]The general estimate (\ref{estimate1}) is an
interpolation (use the Riesz-Thorin theorem, see \cite{ReedSimon}) between the
estimates (\ref{L1-Linf}) and (\ref{eq11bis}). 
Interpolating between estimates (\ref{L1-LinfSobolev}) and (\ref{eq11}), we attain
(\ref{estimate2}).
\end{proof}

\begin{proof}
[Proof of Theorem \ref{Theorem2}]The Strichartz estimates are deduced from the
$L^{p}-L^{p^{\prime}}$ estimates in Theorem \ref{Theorem1}. See the proof of
Theorem 2.3.3 on page 33 of \cite{Cazenave} for further details.
\end{proof}

\end{document}